\documentclass[12pt]{article}

\usepackage{hyperref}
\hypersetup{
    colorlinks=true,
    linkcolor=lightblue,
    filecolor=magenta,
    urlcolor=lightblue,
    citecolor=vermillion,
}
\usepackage{draftwatermark}
\SetWatermarkText{}

\SetWatermarkScale{.5}
\SetWatermarkColor[gray]{0.9} %
\usepackage{setspace}
\usepackage{geometry}
\usepackage{enumitem}
\usepackage{natbib}
\usepackage{makeidx}
\usepackage{comment}
\usepackage{bbm}
\usepackage[title,titletoc]{appendix}
\usepackage{etoolbox}
\patchcmd{\appendices}{\quad}{: }{}{}
\makeindex
\usepackage{tikz}
\usetikzlibrary{calc}
\usetikzlibrary{arrows.meta}
\usepackage{pgfplots}
\pgfplotsset{compat=1.18}
\usepgfplotslibrary{fillbetween}
\usepackage[hang,bottom,flushmargin]{footmisc}
\usepackage{graphicx}
\usepackage{parskip}
\usepackage{caption}
\usepackage{xcolor}
    \definecolor{lightblue}{RGB}{0,114,178}
    \definecolor{Ggreen}{RGB}{34,177,76}
    \definecolor{Oorange}{RGB}{230,159,0}
    \definecolor{vermillion}{RGB}{213,94,0}
    \definecolor{matgreen}{HTML}{2ca02c}
    \definecolor{matorange}{HTML}{ff7f0e}
    \definecolor{matred}{HTML}{d62728}
    \definecolor{matblue}{HTML}{1f77b4}
    \definecolor{matpurple}{HTML}{9467bd}

    \definecolor{revgreen}{RGB}{0,128,0}

\usepackage{amsmath,amssymb}
\usepackage{amsthm}
\usepackage{apptools}
\usepackage{etoolbox}
\AtAppendix{\counterwithin{theorem}{section}}
\AtAppendix{\counterwithin{lemma}{section}}
\AtAppendix{\counterwithin{corollary}{section}}
\AtAppendix{\counterwithin{proposition}{section}}
\AtAppendix{\counterwithin{condition}{section}}

\newtheorem{theorem}{Theorem}

\newtheorem{corollary}{Corollary}

\newtheorem{lemma}{Lemma}

\newtheorem{proposition}{Proposition}
\newtheorem{remark}[theorem]{Remark}

\newtheorem{definition}[theorem]{Definition}

\AtBeginEnvironment{theorem}{\vskip 6pt}
\AtBeginEnvironment{lemma}{\vskip 6pt}
\AtBeginEnvironment{proposition}{\vskip 6pt}
\AtBeginEnvironment{corollary}{\vskip 6pt}
\AtBeginEnvironment{assumption}{\vskip 6pt}
\AtBeginEnvironment{definition}{\vskip 6pt}
\AtBeginEnvironment{remark}{\vskip 6pt}

\def\equationautorefname~#1\null{Equation~(#1)\null}

\theoremstyle{definition}

\usepackage{newpxtext}
\usepackage{newpxmath}

\newcommand*{\lt}{\left}
\newcommand*{\rt}{\right}

\newcommand{\multi}{\overline{\mathbbm{1}}}
\newcommand{\prior}{\mu_1}
\newcommand{\out}{\theta}
\newcommand{\outI}{\widetilde{\theta}}

\newcommand{\searcher}{searcher}
\newcommand{\Searcher}{Searcher}
\newcommand{\captive}{captive}

\begin{document}

\raggedbottom

\title{Algorithm Transparency and Search Manipulation: Steering vs. Persuasion\footnotemark[1]}

\footnotetext[1]{The authors are grateful for the helpful comments of Guy Aridor, Gary Biglaiser, Adam Dearing, Rick Harbaugh, Maarten C. W. Janssen, Aaron Kolb, Volker Nocke, Marilyn Pease, Marcel Preuss, Andrew Rhodes, Christian Schmid, Ernst-Ludwig von Thadden, Achim Wambach and Cole Williams. Some of the findings appeared in Section 3.2 of the March 2024 version of ``Signaling With Commitment,'' 
available \href{https://arxiv.org/pdf/2305.00777v2}{here}. All remaining errors are solely the authors'.}

\author{\begin{tabular}{ccc}
Raphael Boleslavsky\footnotemark[2] &\quad Thomas Jungbauer\footnotemark[3] &\quad Mehdi Shadmehr\footnotemark[4]
 \end{tabular}}

\footnotetext[2]{Business Economics \& Public Policy, Indiana University: \href{mailto:rabole@iu.edu}{rabole@iu.edu}.}
\footnotetext[3]{Strategy \& Business Economics, Cornell University: \href{mailto:jungbauer@cornell.edu}{jungbauer@cornell.edu}.}
\footnotetext[4]{Public Policy, University of North Carolina at Chapel Hill: \href{mailto:mshadmeh@gmail.com}{mshadmeh@gmail.com}.}

\date{\today}

\maketitle

\setcounter{footnote}{0}
\renewcommand\thefootnote{\arabic{footnote}}

\vskip 30pt

\setstretch{1.1}

\begin{quotation}
\noindent {\small {\sc{Abstract}}}.
 We study a platform that prefers to sell the more profitable of two products. It designs an algorithm that determines the product the consumer encounters first, conditional on her best match. The algorithm simultaneously manipulates consumer attention (steers) and communicates information about match quality (informs). When the algorithm is opaque, it is difficult for a consumer to understand how product order is generated and what it reveals. In the platform’s preferred equilibrium, it places the profitable product first. When the algorithm is transparent, the consumer understands the algorithm and what it conveys. Thus, the algorithm can persuade as well as steer. In some cases, the equilibrium algorithm deters search, in others, encourages it. In the former case, transparency helps consumers; in the latter, it harms some or all of them. Extending or shifting transparency requirements uncouples steering from information provision, reverting the consumer's welfare to opacity.

\end{quotation}

\setstretch{1.3}

\vskip 15pt

\textbf{JEL codes:} C72, D82, D83, L86, M37

\vskip 5pt

\textbf{Keywords:} Search, Platforms, Information Design, Transparency, Persuasion

\thispagestyle{empty}

\newpage

\setcounter{page}{1}

\textit{Platforms often have a financial incentive to steer customers to particularly profitable products and can use the power of defaults and ordering to accomplish that effectively.}
\begin{flushright} ---\citet{S2019} pg. 51\end{flushright}

\section{Introduction}\label{sec: intro}

Online platforms do not passively host products: by controlling product ordering, they influence consumers’ search incentives, steering them toward certain products and away from others. Indeed, the empirical search literature documents that prominently displayed products attract disproportionate attention, affecting consumers’ search decisions and downstream choices.\footnote{See, e.g., \citet{Ghose_Ipeirotis_Li_2014_MgmtSci}, \citet{Ursu_2018_MktgSci}, \citet{Sahni_Nair_2020_RESTUD}, and \citet{Yu2025WelfareSponsored}.} Because prominence strongly affects consumer choice, regulators and policy observers are particularly concerned that the algorithms which determine these default options are opaque, obscuring the link between a consumer’s attributes and how products are presented to her on the platform. Such opacity makes it difficult for consumers to understand why a particular product is prominent, and what prominence reveals about its suitability.\footnote{``[Platforms] use algorithms to curate personalized content for users based on [an] unprecedented level of data\ldots These algorithms are opaque, and while platforms know exactly which individuals are exposed \ldots and why\ldots the public have very little knowledge'' \citep[p.~150]{S2019}.} Consequently, regulators have pushed for greater transparency, emphasizing that platforms should ``ensure that [users] are appropriately informed about how\ldots systems impact the way information is displayed\ldots and prioritized for them'' \citep[Recital~70]{EU_2022_DSA}.

In this paper, we study how platforms design algorithms to manipulate consumers' search incentives, focusing on the positive and normative consequences of algorithm transparency. By making a product prominent, the platform simultaneously \emph{steers} the consumer toward it and \emph{informs} the consumer about its suitability. In line with regulators' concerns, with an opaque algorithm, the consumer does not exactly know how to interpret prominence. Consequently, the platform cannot leverage prominence to provide information to its advantage. Transparency therefore gives the platform greater control. Broadly, we show that a transparency requirement can induce platforms to alter their algorithms in ways which can benefit or harm consumer welfare. We further show that, if transparency benefits consumers when prominence simultaneously steers and informs, then it does not benefit consumers if these two roles are uncoupled. In particular, if transparency extends beyond an algorithm that governs prominence or ordering, the benefit for consumers is negated. Together, these findings suggest that imposing or strengthening transparency requirements may be counterproductive and should thus be approached cautiously.

Our model combines sequential search with algorithm design. A platform offers two products to a consumer and has an incentive to steer the consumer toward the more profitable one. While the consumer is uncertain which product best matches her needs, the platform can leverage its vast consumer data to predict fit. Before the consumer arrives, the platform designs an algorithm that specifies the probability each product is featured prominently, conditional on the consumer's best match. Once she arrives, the platform observes the consumer's attributes, and implements the algorithm. The consumer then searches sequentially. She first encounters the prominent product and decides whether to buy it immediately or continue. If she continues, she incurs a privately known search cost, encounters the second product, and decides which one to purchase. The consumer faces search and information frictions: an encounter does not always reveal a product's fit, and  her search cost is sometimes prohibitive, compelling her to buy the prominent product. 

In this model, prominence affects the consumer's incentives through two distinct channels. Because search always begins with the prominent product and continuation is costly, the consumer is steered toward the prominently displayed product. At the same time, the algorithm assigns prominence based on the platform's information about product fit with the consumer. Thus, prominence can also signal this information to the consumer. What the consumer infers from prominence, however, depends crucially on whether she understands the platform's algorithm, i.e., whether it is opaque or transparent.

The first part of our analysis focuses on an opaque environment, in which the consumer does not know how the platform customizes prominence based on product fit. She therefore interprets prominence based on her conjecture about the display algorithm, which may differ (out of equilibrium) from the actual algorithm that was implemented by the platform. In particular, a deviation from the consumer's conjecture is not observed, and therefore does not change what she infers from a particular product being prominent. In other words, when the algorithm is opaque, the platform cannot influence how the consumer interprets prominence.  Opacity therefore allows for multiple equilibria.

We characterize all equilibria under opacity in Proposition \ref{prop:equilibrium under opacity}. At every prior belief, equilibria exist in which the more profitable product is always prominent, and the consumer learns nothing about its fit. For some priors, a continuum of equilibria exists in which both products are sometimes prominent, and prominence reveals some information. In all such equilibria, prominence is interpreted unfavorably: a prominent display reduces the consumer's belief that the product is her best match. Consequently, the platform cannot use prominence to steer the consumer as effectively, which reduces its equilibrium profit. For some prior beliefs, equilibria exist in which the consumer's interpretation is so unfavorable that the platform effectively loses its ability to steer. To avoid substantially damaging the consumer's belief about the more profitable product, it is forced to always make the less profitable product prominent, which erodes its profit even further. 

In light of this multiplicity, the platform would benefit from a device that selects its preferred equilibrium, in which the more profitable product is always prominent. One such device is a non-binding norm against using consumer data to customize product ordering or prominence. If the consumer believes that the platform adheres to such a norm, then observing the prominent product does not change her beliefs about product fit. Thus, the platform is free to make the more profitable product prominent with no adverse consequences, and it does so to steer the consumer as aggressively as possible.

 The platform’s incentives are starkly different in a transparent environment, where the consumer understands how prominence is determined by the algorithm and what it conveys about fit. Because the consumer understands how prominence is assigned and what it means, the platform directly controls the consumer's interpretation of prominence. Beyond its effect on the search order, the algorithm can therefore be used to convey information which persuades the consumer to adopt a more profitable  search strategy.

 Proposition \ref{prop:equilibrium under transparency} characterizes the equilibrium under transparency. The equilibrium algorithm takes one of three forms, depending on the prior belief. When the more profitable product is very likely to be the consumer’s best match, the algorithm always makes it prominent. In this case, the interests of the platform and the consumer are sufficiently aligned that the platform has no need to convey additional information. When the consumer is less optimistic about the more profitable product, however, the platform benefits from designing an algorithm that persuades the consumer to alter her search strategy. When the prior is moderate, prominence conveys good news, which makes the consumer more inclined to stop searching and purchase the prominent product. When the prior is low, however, prominence conveys bad news, which makes the consumer more inclined to continue searching. Thus, with a transparent algorithm, in equilibrium prominence  may convey good news that deters further search, or bad news that encourages it.

The interplay between steering and persuasion has significant welfare implications. At moderate priors, prominence deters search, steering the consumer toward a product that she would like to purchase immediately. Thus, steering and persuasion reinforce each other to the consumer's benefit. At low priors, the two effects instead work in opposite directions: prominence incentivizes continued search by steering the consumer away from the product that she infers is likely to be the better match. Proposition \ref{prop:transparency and welfare} characterizes the welfare consequences. Relative to the platform’s preferred equilibrium under opacity, transparency increases the consumer’s welfare at moderate priors. At low priors, however, it reduces her welfare whenever her search cost is prohibitive; when search is viable, transparency may reduce or increase her welfare, depending on the combination of search and information frictions that she faces. 

The preceding analysis focuses on algorithms that determine prominence; if the platform wishes to convey information to the consumer about product fit, it can only do so by manipulating her search order. In Section \ref{sec:wrong kind of transparency}, we decouple information provision from prominence. The platform designs an algorithm that sends the consumer a (payoff-irrelevant) product recommendation, conditional on her best match. If this algorithm is opaque, then recommendations are cheap talk. If transparent, however, recommendations displace prominence as tools of persuasion. In Proposition \ref{prop:equilibrium joint design}, we show that recommendations convey information in equilibrium, but prominence is always assigned to the more profitable product and thus conveys no additional information about fit. Furthermore, recommendations never strictly overturn the consumer's prior search strategy. Thus, the consumer's equilibrium payoff is as if the more profitable product is always prominent, exactly as in the platform-preferred equilibrium under opacity (Proposition \ref{prop:wrong kind of transparency}). For moderate priors, transparency therefore helps the consumer if it applies to the algorithm that determines prominence, but not if it applies to the algorithm that recommends products.

In Section \ref{sec:extensions} we extend our analysis in two directions. First, we study how the platform can monetize prominence, endogenizing the commission that the platform earns on sales of each product. Transparency not only helps the platform to boost sales of the product with higher commissions, but also increases the expected commissions and the platform's profit. Second, we study how transparency affects the mix of products offered by the platform. We show that the platform may benefit by sometimes prominently displaying a product that is always worse for both the consumer and platform than another product: by doing so, the platform gains additional influence over the consumer's search.

\textbf{Related Literature.}  Our paper is built on two main insights. First, product prominence has two effects: it \emph{steers} by constraining the search order, and it \emph{informs} by signaling the platform's private information about match quality. Second, if transparency requirements achieve their goal, so that the consumer understands how prominence is determined and what it reveals about product fit, then prominence can be used to shape the consumer's inferences, becoming an instrument for persuasion. Related ideas appear in the existing literature on consumer search, platform intermediation, and transparent communication.

 A strand of literature examines how fixed prominence or search order affect market outcomes. \citet{A2007} studies competition with a predetermined search order, \citet{AVZ2009} with a prominent firm that is visited first by all consumers. In these papers order is fixed, and conveys no useful information to consumers. Another strand endogenizes prominence through a bidding process \citep{Armstrong_Zhou_2011_EJ,BarIsaac_Shelegia_2025_JPEMicro}. In these papers, firms are uninformed, and prominence does not convey information about fit. \citet{Hagiu_Jullien_2011_RAND} study a model that is  similar to ours, except that consumers always learn their types perfectly at the first encounter (i.e., there is no information friction). Consequently, consumers do not learn from prominence: it steers but does not inform. 

 A number of related strands of literature study ranking, prominence, or order purely as a signal. Thus, product ranking reveals information, but it does not constrain consumers' subsequent search decisions or affect inspection costs: it informs but does not steer. In sponsored-search models, bids and positions reveal seller relevance, directing a subsequent (unconstrained) sequential search \citep{Athey_Ellison_2011_QJE, CH2011}.  \citet{Nocke_Rey_2024_JPE} consider a related model of a multiproduct firm, showing that it may prefer an equilibrium in which product order is uninformative to one in which it is revealing. \citet{Janssen_et_al_2026_EJ} allow a platform to assign sponsored and organic positions using bids and private match assessments, while \citet{Janssen_et_al_2026_IJIO} characterize optimal ranking informativeness. In \citet{BB2024} advertising both steers and informs: competing firms bid to target an ad to consumers, who can subsequently buy the product off-platform. The authors characterize the revenue maximizing mechanisms for ad sales. Another strand of literature studies intermediaries who advise consumers directly, rather than via rankings \citep{IO2009,deCorniere_Taylor_2019_RAND,Teh_Wright_2022_AEJMicro}.

Methodologically, our paper is related to the literature on transparency in communication games. \citet{Kamenica_Gentzkow_2011_AER} study a communication game with payoff-irrelevant messages, in which the sender's strategy is observed and understood by the receiver, endowing the sender with commitment power.  In our model, in contrast, prominence has direct payoff consequences. Thus, our approach is closer to \citet{BS2025}, which studies commitment in signaling games.

\newpage
\section{Model}\label{sec:model}

\paragraph{Market.} Consider a platform selling two products, $A$ and $B$, to a single representative consumer. The consumer is familiar with product $B$, which yields a known payoff $\out\in(0,1)$ if it is purchased. Product $A$ is new to the consumer, yielding uncertain payoff $\omega\in\{0,1\}$ if purchased. Thus, $\omega=1$ ($\omega=0$) implies that $A$ is a better (worse) match for the consumer than $B$.  The common prior that $A$ is the better match is denoted by $\prior \equiv \Pr(\omega=1)$, whereas $\mu\equiv\Pr(\omega=1)$ denotes an arbitrary belief. The consumer searches for a product sequentially, as described below. She maximizes the payoff from her chosen product, net of search cost. The platform is compensated through commissions, earning $k_i$ when product $i\in\{A,B\}$ is sold. In our baseline analysis, product $A$ is more profitable for the platform, $k_A>k_B$, and we normalize $k_A=1$ and $k_B=0$. As described in more detail below, the platform acts ahead of the consumer's search, maximizing the probability that the consumer chooses product $A$.\footnote{A trivial way for the platform to maximize sales of $A$ is not to carry $B$. If it did so, the platform would lose the commissions paid by $B$ and upset $A$'s incentive to offer commissions in the first place.}

\paragraph{Learning.} In order to purchase a product, the consumer must encounter it on the platform. Whenever she encounters product $A$, the consumer sees ratings, reviews, descriptions or other information that allow her to learn about her payoff. In particular, when $A$ is encountered, the consumer observes a truth-or-noise signal $R\in\{0,1,\phi\}$ satisfying
\begin{align*}
\Pr(R=0\mid \omega=0)=r,\qquad
\Pr(R=1\mid \omega=1)=r\quad\text{and}\quad
\Pr(R=\phi\mid \omega)=1-r,
\end{align*}
 where $r\in(0,1)$. Thus, with probability $r$ the signal reveals the state perfectly and with probability $1-r$ it is inconclusive, revealing no new information. We adopt the convention that realizations $R=1$ and $R=0$ reveal the state, even if the consumer is surprised.

\paragraph{Timing and Transparency.}
The game unfolds over four stages.
\begin{itemize}
\item[(i)] \textit{Design}: The platform designs a display algorithm, $\Pi\equiv(\pi_0,\pi_1)$, where $\pi_\omega$ is the probability that product $A$ is  prominent in state $\omega$.

\item[(ii)] \textit{Arrival}: The consumer arrives. With transparency, she observes the algorithm $\Pi$; with opacity, she does not.\footnote{``Observing the algorithm'' serves a proxy for the consumer understanding how prominence is assigned.} The platform privately observes the state and implements the algorithm. The consumer observes which product is prominent.

\item[(iii)] \textit{Prominent Encounter}: The consumer encounters the prominent product. If it is $A$, she privately observes the realization of  signal $R$. The consumer then decides whether to stop or continue. If she stops, then she purchases the prominent product and the game ends. If she continues, then she incurs a privately known search cost $C$, moving to stage (iv).
\item[(iv)] \textit{Second Encounter}: The consumer encounters the second product. If it is $A$, she privately observes the realization of signal $R$. The consumer purchases either product $A$ or $B$ with free recall, and the game ends.
\end{itemize}

\paragraph{Search Cost.} Our model features two types of consumers, \emph{\captive{}s} and \emph{\searcher{}s}. With probability $h\in(0,1)$, the consumer is a \captive{}, with search cost $C=\bar{c}$. We focus on the case where $\bar{c}$ is prohibitively large, so that a \captive{} always purchases the prominent product (see condition (i) below). With probability $1-h$, the consumer is a \searcher{} with $C=c>0$. For the \searcher{}, it may be optimal to pay the cost to consider the second product, depending on her belief about the state. Let $f$ denote the likelihood ratio of \captive{}s to \searcher{}s,
\[
f\equiv \frac{h}{1-h}.
\]
Thus, there are two dimensions of search friction: the probability of being captive, and the \searcher{}'s cost. Note that the search cost is paid to continue past the prominent product.

\paragraph{Regularity Conditions.} Throughout the analysis, we impose the following parametric restrictions,
\begin{itemize}
 \item[(i)] $c<\min\{\out,1-\out\}$ and $\bar{c}>\max\{\out,1-\out\}$,
 \item[(ii)] $f<1-r$,
 \item[(iii)] $r\out(1-\out)<c$.
 \end{itemize}
 Under condition (i), search is prohibitively expensive for a \captive{}. Simultaneously, a fully informed \searcher{} is always willing to incur the cost to purchase the better match.\footnote{If $A$ is prominent and $\omega=0$, the consumer obtains 0 from $A$ and $\out-C$ from $B$. If $B$ is prominent and $\omega=1$, the consumer obtains $\out$ from $B$ and $1-C$ from $A$. With condition (i), the \searcher{} ($C=c)$ selects $B$ in the former case and $A$ in the latter, and vice versa for  a \captive{} ($C=\bar{c}$).} Condition (ii) limits the probability of a \captive{} consumer, focusing the analysis on the most interesting case.  Condition (iii) tempers the incentive to experiment: when $B$ is prominent, a consumer who continues searching buys $A$ if she draws an inconclusive realization.

\paragraph{Discussion.} We briefly discuss aspects of the model.

\noindent\textit{Platform Information.} In our setup, the platform observes the consumer's match value for the new product, but not her search cost. Match quality is a persistent object that the platform can infer from its data, whereas search cost is affected by the consumer's attention, time constraints, or cognitive load, making it transient and difficult to observe.

\noindent\textit{Prominence.} Our notion of prominence is similar to \citet{AVZ2009}, whereby the consumer searches for products sequentially, but always encounters the prominent product first. The consumer can therefore avoid the search cost $C$ by purchasing the prominent product. Thus, prominence directly affects the consumer's search incentives, steering her toward the product. Prominence can also affect her beliefs. In particular, if the observed (or conjectured) display algorithm assigns prominence differently across states, then observing the prominent product also conveys information about it.

\section{Preliminaries}\label{sec:analysis}

\subsection{Search and prominence}\label{subsec:search}
We begin by characterizing the consumer's search strategy, revealing how search frictions, information frictions, and prominence interact. As described previously, the \captive{} consumer always selects the prominent product, as search is prohibitively expensive. Suppose, therefore, that the consumer is a \searcher{} and she holds some belief, $\mu=\Pr(\omega=1)$ after her encounter with the prominent product in stage (iii). If $A$ is prominent and the signal realization is conclusive, then this belief is $0$ or $1$.

\emph{$A$ is prominent}. If she stops, the \searcher{} expects payoff $\mu$; if she continues, $\out-c$. It is optimal for her to stop when $\mu>\theta_A=\out-c$, continue when $\mu<\theta_A$, and mix otherwise. 

\emph{$B$ is prominent}. If she stops, she expects payoff $\out$; if she continues, 
\begin{align*}
U^{con}_B(\mu)=r\lt[\mu +(1-\mu)\out\rt]+(1-r)\max\{\out,\mu\}-c.
\end{align*}
Indeed, continuing allows the \searcher{} to learn the state with probability $r$. If the signal realization is conclusive, then she buys the product with higher payoff; otherwise, she buys $A$ for $\mu\geq \out$ and $B$ otherwise. As such, $U^{con}_B(\cdot)$ is piecewise linear and increasing, with an upward kink at $\mu=\out$. Furthermore, under regularity condition (iii), we have $U^{con}_B(\out)=\out^\dag\equiv \out+r\out(1-\out)-c<\out$. By implication, the \searcher{} is indifferent at some belief $\theta_B>\out$. Thus, if $\mu<\theta_B$ the \searcher{} immediately buys $B$. If $\mu>\theta_B>\out$, then she continues searching, purchasing $A$ unless she discovers that $B$ is a better match ($R=0$).

\begin{figure}
\centering

\begin{minipage}{0.48\textwidth}
\centering
\begin{tikzpicture}[
    scale=5.5,
    >=Latex,
    every node/.style={font=\footnotesize}
]

\def\thetaA{0.4} %
\def\u{0.6}

\draw[->, line width=0.5pt] (0,0) -- (1.03,0)
    node[below right] {{$\mu$}};

\draw[->, line width=0.5pt] (0,0) -- (0,1.03);

\draw[dotted, line width=0.5pt]
    (0,1) -- (1,1);

 \draw[dotted, line width=0.5pt]
     (1,0) -- (1,1);

\draw[red, line width=1pt] (0,0) -- (1,1);

\draw[green!60!black, line width=1pt] (0,\thetaA) -- (1,\thetaA);

\draw[dotted, line width=0.5pt]
    (\thetaA,0) -- (\thetaA,\thetaA);

\node[left=4pt] at (0,\thetaA) {\footnotesize{$\out-c$}};
\node[below=4pt] at (\thetaA,0.01) {{$\theta_A$}};
 \node[right=4pt] at (1,1) {{$(1,1)$}};
 \draw[line width=3pt, green!60!black, opacity=0.15] (0,0) -- (\thetaA,0);
 \draw[line width=3pt, red, opacity=0.15] (1,0) -- (\thetaA,0);

\node[above=0pt] at (0.5*\thetaA,0) {\scriptsize{Continue}};

\node[above=0pt] at (0.5+0.5*\thetaA,-0.01) {\scriptsize{Stop}};

\end{tikzpicture}
\end{minipage}
\hfill
\begin{minipage}{0.48\textwidth}
\centering
\begin{tikzpicture}[
    scale=5.5,
    >=Latex,
    every node/.style={font=\footnotesize}
]

\pgfmathsetmacro{\u}{0.6}
\pgfmathsetmacro{\r}{0.4}
\pgfmathsetmacro{\c}{0.2}

\pgfmathsetmacro{\uc}{\u-\c}
\pgfmathsetmacro{\ymid}{\u + \r*\u*(1-\u) - \c}
\pgfmathsetmacro{\yr}{1-\c}
\pgfmathsetmacro{\thetaB}{(\u*(1-\r)+\c)/(1-\r*\u)}

\draw[->, line width=0.5pt] (0,0) -- (1.03,0)
    node[below right] {{$\mu$}};

\draw[->, line width=0.5pt] (0,0) -- (0,1.03);

\draw[dotted, line width=0.5pt]    (0,1) -- (1,1);

\draw[dotted, line width=0.5pt]    (1,0) -- (1,1);

\draw[red, line width=1pt] (0,\u) -- (1.01,\u);

\draw[green!60!black, line width=1pt]
    (0,\uc) -- (\u,\ymid);

\draw[green!60!black, line width=1pt]
    (\u,\ymid) -- (1,\yr);

\draw[dotted, line width=0.5pt]    (0,\ymid) -- (\u,\ymid);

\draw[dotted, line width=0.5pt]    (\u,0) -- (\u,\ymid);

\draw[dotted, line width=0.5pt]
    (\thetaB,0) -- (\thetaB,\u);

\node[left=4pt] at (0,\u) {{$\out$}};
\node[left=4pt] at (0,\ymid) {{$\out^\dagger$}};
\node[left=4pt] at (0,\uc) {{$\out-c$}};

\node[below=4pt] at (\u,0) {{$\out$}};
\node[below=4pt] at (\thetaB,0.01) {{$\theta_B$}};

\node[right=4pt] at (1,1) {{$(1,1)$}};
\node[left=4pt] at (0,\yr) {{$1-c$}};
\draw[dotted, line width=0.5pt](0,\yr)--(1,\yr);

\node[above=0pt] at (0.5*\thetaB,-0.01) 
{\scriptsize{Stop}};
\node[above=0pt] at (0.5+0.5*\thetaB,0) {\scriptsize{Continue}};

\draw[line width=3pt, red, opacity=0.15] (0,0) -- (\thetaB,0);
 \draw[line width=3pt, green!60!black, opacity=0.15] (1,0) -- (\thetaB,0);

\end{tikzpicture}
\end{minipage}

\caption{\textit{Lemma \ref{lem:searcher strategy} (\Searcher{} Strategy)}. Each panel depicts the payoff of stopping (red) and continuing (green) when  \searcher{}'s belief is $\mu$ in stage (iii). Left panel: $A$ is prominent. The \searcher{} continues at $\mu<\theta_A$, always buying $B$; the \searcher{} stops at $\mu>\theta_A$, always buying $A$. The \searcher{} mixes between these at $\mu=\theta_A$.  Right panel: $B$ is prominent. The \searcher{} stops at $\mu<\theta_B$, buying $B$; the \searcher{} continues at $\mu>\theta_B$, buying $A$ iff $R\in\{\phi,1\}$. The \searcher{} mixes between these at $\mu=\theta_B$. }
\label{fig:searcher strategy}
\end{figure}

We have proved the following lemma, which is illustrated graphically in Figures \ref{fig:searcher strategy} and \ref{fig:searcher strategy continued}.

\begin{lemma}[\textbf{\Searcher{} Strategy}]\label{lem:searcher strategy} Consider a \searcher{} deciding whether to purchase the prominent product or continue, with belief $\mu=\Pr(\omega=1)$ after her encounter with the prominent product in stage (iii). The \searcher{}'s optimal strategy is as follows:
\begin{itemize}
\item Suppose $A$ is prominent. (i) If the \searcher{} continues, then she always buys $B$; if she stops, then she always buys $A$. (ii) If the \searcher{} learned that $A$ is a match $(R=1)$ then she stops. (iii) If the \searcher{} learned that $B$ is a match ($R=0$), then she continues. (iv) If the signal realization is inconclusive ($R=\phi$), then the \searcher{} stops if $\mu>\theta_A$, continues if $\mu<\theta_A$, and mixes if $\mu=\theta_A$.
\item Suppose $B$ is prominent. (i) If the \searcher{} continues, then she buys $A$ if and only if it is revealed to be a match or the signal realization is inconclusive ($R\in\{1,\phi\}$); if she stops, then she buys $B$. (ii) The \searcher{} stops if $\mu<\theta_B$, continues if $\mu>\theta_B$, and mixes if $\mu=\theta_B$.
\end{itemize}
Furthermore, $\theta_A=\out-c$ and $\theta_B=\tfrac{\out(1-r)+c}{1-r\out}$. In addition, the thresholds are ordered $0<\theta_A<\out<\theta_B<1$, threshold $\theta_A$ is decreasing in $c$, and $\theta_B$ is increasing in $c$.

\end{lemma}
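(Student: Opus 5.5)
The proof is a direct comparison of continuation and stopping payoffs, carried out separately for each prominent product. Throughout, I use regularity condition (i) ($c<\min\{\out,1-\out\}$) and condition (iii) ($r\out(1-\out)<c$).

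\emph{$A$ is prominent.} After the first encounter, the \searcher{} has already observed $R$, and nothing further about $A$ can be learned at the second encounter. So if she continues, she chooses between $A$, worth $\mu$, and $B$, worth $\out$. First I would establish part (i): continuing makes sense only if she then buys $B$. Buying $A$ after paying $c$ is dominated by stopping, because stopping yields the same $\mu$ without the cost. Hence continuing is worth $\out-c$ and stopping is worth $\mu$, which gives the threshold $\theta_A=\out-c$. Parts (ii) and (iii) follow by setting $\mu=1$ and $\mu=0$. Since $1>\out-c$ and $0<\out-c$ by condition (i), she stops after $R=1$ and continues after $R=0$. Part (iv) is the threshold rule itself, applied to the inconclusive belief $\mu$.

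\emph{$B$ is prominent.} If she continues, she observes $R$ at $A$ and then buys the better product given her updated belief. She buys $A$ after $R=1$ and $B$ after $R=0$. After $R=\phi$ she buys $A$ iff $\mu\ge\out$, which yields $U^{con}_B(\mu)$ as displayed. To prove part (i), I need continuation to occur only when $\mu>\out$, so that inconclusive news leads to buying $A$. I would show that continuation is strictly worse than stopping on $[0,\out]$. On that interval, $U^{con}_B(\mu)=r[\mu+(1-\mu)\out]+(1-r)\out-c$, which is increasing. Its value at $\mu=\out$ is $\out+r\out(1-\out)-c<\out$ by condition (iii).

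On $[\out,1]$ the function $U^{con}_B$ is linear with positive slope. It equals $1-c>\out$ at $\mu=1$, by condition (i). So it crosses $\out$ exactly once, at a point $\theta_B\in(\out,1)$. Solving $r\mu+r(1-\mu)\out+(1-r)\mu-c=\out$ gives $\mu(1-r\out)=\out(1-r)+c$, so $\theta_B=\frac{\out(1-r)+c}{1-r\out}$. Part (ii) then follows from the single crossing: she stops below $\theta_B$, continues above it, and is indifferent (so may mix) at $\theta_B$.

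Finally I would verify the ordering and the comparative statics. $\theta_A>0$ and $\theta_A<\out$ follow from $0<c<\out$. $\out<\theta_B<1$ was shown above, and also follows algebraically from $c>r\out(1-\out)$ and $c<1-\out$. $\theta_A=\out-c$ is clearly decreasing in $c$. $\theta_B$ is increasing in $c$ because its derivative in $c$ is $1/(1-r\out)>0$. The one step needing care is part (i) when $B$ is prominent: I must rule out continuing at some $\mu<\out$, and this is exactly where condition (iii) is used.
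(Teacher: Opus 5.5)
Your proposal is correct and follows essentially the same route as the paper. Both compare stopping and continuation payoffs directly, and both use condition (iii) to get $U^{con}_B(\out)<\out$, so the increasing, piecewise-linear continuation payoff crosses $\out$ exactly once at $\theta_B>\out$. You also fill in details the paper leaves implicit: why continuing after $A$ means buying $B$, that $U^{con}_B(1)=1-c>\out$ gives $\theta_B<1$, and the explicit algebra for $\theta_B$ and its comparative statics.
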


\begin{figure}
\begin{minipage}{0.5\textwidth}
\begin{tikzpicture}[scale=0.85,  >=Latex,]
\pgfmathsetmacro{\off}{4}

\draw[line width=0.5pt] (0,0) -- (7,0);
\fill (7,0) node[right] {\footnotesize{$\mu$}};
\fill (3,0) node[below] {\footnotesize{$\theta_A$}};
\filldraw[color=black, fill=white] (3,-0.01) circle (3pt);
\node[above=0pt] at (3.5,1.2) 
{\footnotesize{$A$ \textit{is prominent}}};

\node[above=0pt] at (5,0.1) 
{\footnotesize{$A$ iff $R\in\{1,\phi\}$}};
\node[below=0pt] at (5,-0.1) 
{\footnotesize{Stop}};

\node[above=0pt] at (1.5,0.2) 
{\footnotesize{$A$ iff $R=1$}};
\node[below=0pt] at (1.5,-0.1) 
{\footnotesize{Continue}};

\end{tikzpicture}
\end{minipage}
\begin{minipage}{0.5\textwidth}
\begin{tikzpicture}[scale=0.85,  >=Latex,]
\pgfmathsetmacro{\off}{4}

\draw[line width=0.5pt] (0,-\off) -- (7,-\off);
\fill (7,-\off) node[right] {\footnotesize{$\mu$}};
\fill (4,-\off) node[below] {\footnotesize{$\theta_B$}};
\filldraw[color=black, fill=white] (4,-0.01-\off) circle (3pt);
\node[above=0pt] at (3.5,1.2-\off) 
{\footnotesize{$B$ \textit{is prominent}}};

\node[above=0pt] at (5.5,0.1-\off) 
{\footnotesize{$A$ iff $R\in\{1,\phi\}$}};
\node[below=0pt] at (5.5,-0.1-\off) 
{\footnotesize{Continue}};

\node[above=0pt] at (1.5,0.2-\off) 
{\footnotesize{Never $A$}};
\node[below=0pt] at (1.5,-0.1-\off) 
{\footnotesize{Stop}};

\end{tikzpicture}
\end{minipage}

\caption{\label{fig:searcher strategy continued} \textit{Lemma \ref{lem:searcher strategy} (Searcher Strategy)} The \searcher{}'s stopping and purchase decisions.}
\end{figure}

\subsection{Belief-based approach}\label{sec:belief-based approach}

We analyze the platform's design problem using the belief-based approach, whereby an algorithm is represented by the posterior beliefs that it induces.

To see how this works, consider an algorithm $\Pi=(\pi_0,\pi_1)$ which determines the probability that each product is displayed prominently in each state $\omega\in\{0,1\}$. Consequently, the algorithm determines the consumer's posterior beliefs at the arrival stage when product $i\in\{A,B\}$ is prominent, $\mu_i=\Pr(\omega=1|i\text{ prominent})$. Simultaneously, the algorithm determines the probability $\tau_i$ with which $i\in\{A,B\}$ is prominent, inducing belief $\mu_i$. In particular, for any algorithm $\Pi$,
\begin{align}\label{eq:BBA}
\tau_A=\prior\pi_1+(1-\prior)\pi_0,\quad\tau_B=1-\tau_A,\quad\mu_A=\frac{\prior\pi_1}{\tau_A},\quad\mu_B=\frac{\prior(1-\pi_1)}{\tau_B},
\end{align}
and the law of iterated expectations implies $\mathbb{E}_\tau[\mu_i]=\tau_A\mu_A+\tau_B\mu_B=\prior$.

As is well known, these steps can also be reversed: any combination of beliefs $(\mu_A,\mu_B)$ and probabilities $(\tau_A,\tau_B)$ that satisfies the law of iterated expectations is induced by some display algorithm $\Pi$.\footnote{It is straightforward to compute that $\pi_1=\tau_A\mu_A/\prior$ and $\pi_0=\tau_A(1-\mu_A)/(1-\prior)$.}
 Given only beliefs $(\mu_A,\mu_B)$, such a display algorithm exists if and only if the prior lies in between them ($\prior\in[\min\{\mu_A,\mu_B\},\max\{\mu_A,\mu_B\}]$), with the associated probabilities $(\tau_A,\tau_B)$ determined by the law of iterated expectations. 

Whether and what type of information an algorithm conveys is determined by the relationship between the posteriors and the prior. If the posterior $\mu_i=\prior$ for one or both $i\in\{A,B\}$, then prominence is \textit{uninformative}, as the realized posterior belief is always equal to the prior.\footnote{If $\mu_i=\prior$ and $\mu_j\neq\prior$, then \eqref{eq:BBA} requires $\tau_i=1$ and $\tau_j=0$. With $\mu_A=\mu_B=\prior$, the posterior is equal to the prior, regardless of which product is prominent.} If instead $\mu_A\neq\prior$ and $\mu_B\neq\prior$, then the posterior belief always differs from the prior, and prominence is \textit{informative}. Furthermore, when $\mu_A<\prior<\mu_B$ a prominent display reduces the consumer's belief that the product is her best match; thus, \textit{prominence conveys bad news}. Similarly, for $\mu_B<\prior<\mu_A$, \textit{prominence conveys good news}.

\section{Equilibrium Algorithms}\label{sec:equilibrium algorithms}

We characterize the equilibrium display algorithms under opacity and transparency and discuss their properties. Throughout, we focus on strong PBE: (1) the consumer follows the optimal search strategy given her stage (iii) belief (see Lemma \ref{lem:searcher strategy}), (2) platform algorithm is optimal given the consumer's search strategy, (3) beliefs are updated with Bayes' rule when possible.\footnote{With opacity, the consumer is surprised whenever she conjectures that one product is always prominent, but the other product is observed. We characterize all equilibria, placing no restriction on the off-path belief. With transparency, the consumer is never surprised.}

\subsection{Opacity}\label{sec:opacity}

We begin with the opaque case, in which the consumer does not directly observe the display algorithm. Instead, she observes only the prominent product and updates beliefs based on her conjecture about the algorithm. While the platform can use the algorithm to assign prominence allowing it to exploit the search frictions, it cannot directly affect how the consumer interprets prominence, creating the possibility of multiple equilibria.

\paragraph{Opaque Profit.} Because the consumer does not observe the platform's true algorithm, $\Pi$, her beliefs at the arrival stage and subsequent search behavior depend on a conjectured algorithm, $\widehat{\Pi}$, which induces posterior beliefs $(\widehat{\mu}_A,\widehat{\mu}_B)$ at the end of the arrival stage, when the prominent product is observed. These beliefs, in turn, determine the consumer's search strategy. Because the true algorithm $\Pi=(\pi_0,\pi_1)$ is unobserved, it determines the probability with which each product is prominent in each state, but it does not affect the meaning of a prominent display, as reflected in the consumer's beliefs $(\widehat{\mu}_A,\widehat{\mu}_B)$. Thus, the consumer's search strategy depends on her beliefs, but not on the algorithm itself.

Let $a$ denote the probability that the \searcher{} stops when $A$ is prominent if her signal realization is inconclusive ($R=\phi$), and let $b$ denote the probability that she continues when $B$ is prominent and her signal realization is inconclusive $(R=\phi)$.  From Lemma \ref{lem:searcher strategy} (consult Figure \ref{fig:search}), in equilibrium we must have 
\begin{equation*}a\in\multi\{\widehat{\mu}_A\geq\theta_A\}\qquad \qquad b\in\multi\{\widehat{\mu}_B\geq\theta_B\},\end{equation*}
where $\multi\{x\geq y\}$ denotes the indicator correspondence: it coincides with the usual indicator function when the inequality inside is strict, and it is \([0,1]\) at equality. In other words, this correspondence fills in the jump from \(0\) to \(1\) at equality, yielding a closed graph. Throughout, inequalities between sets must hold for all possible selections.\footnote{In particular, given compact sets $X\subset\mathbb{R}$ and $Y\subset\mathbb{R}$, we say that $X>Y$ if and only if $\min X>\max Y$.}

Normalizing by $1-h$, the platform's profit from algorithm  $\Pi$ when $\widehat{\Pi}$ is conjectured is 
\begin{equation}\label{eq:opaque profit}
\prior\bigl[\pi_1\underbrace{\{f+r+(1-r)a\}}_{z_A^1}+(1-\pi_1)\underbrace{b}_{z_B^1}]+
(1-\prior)\bigl[\pi_0\underbrace{\{f+(1-r)a\}}_{z_A^0}+(1-\pi_0)\underbrace{(1-r)b}_{z_B^0}\bigr].
\end{equation}
This expression also build on Lemma \ref{lem:searcher strategy}. Each term $z_i^\omega$ is the platform's expected profit conditional on $i\in\{A,B\}$ prominent and state $\omega\in\{0,1\}$. First, $z_A^1=f+r+(1-r)a$: the \captive{} always buys $A$, the \searcher{} buys $A$ when the signal is conclusive, and also when it is inconclusive and she stops. Second, $z_B^1=b$: only the \searcher{} buys $A$, and she does so whenever she continues.\footnote{Because $\omega=1$, the signal realization is either $R=1$ or $R=\phi$, both of which lead to a purchase of $A$.} Third, $z_A^0=f+(1-r)a$: relative to $z_A^1$, the \searcher{} no longer buys $A$ when the signal is conclusive, as it reveals $\omega=0$. Finally, $z_B^0=(1-r)b$: the \searcher{} buys $A$ only when she continues and the signal is inconclusive.

\begin{remark}(State Dependence). Although the platform only cares about the consumer's final purchase, its expected payoff is nevertheless state-dependent, as is clear from \eqref{eq:opaque profit}. State-dependence arises from learning, which impacts the \searcher{}'s product choice differently across states. If the information friction is maximal ($r=0$) then this state-dependence disappears, and other equilibria are also possible.
\end{remark}

\paragraph{Opaque Algorithm Design.} From the expression for platform profit \eqref{eq:opaque profit}, the platform's optimal algorithm $\Pi$ when $\widehat{\Pi}$ is conjectured, is 
\[\pi_1=\multi\{z_A^1\geq z_B^1\}\qquad\qquad \pi_0=\multi\{z_A^0\geq z_B^0\}.\] In each state, the platform chooses the prominent product to maximizes its expected profit, mixing if indifferent.

\paragraph{Opaque Equilibrium.} We describe equilibria using the following taxonomy.

\textit{$i$-always equilibrium}. In an $i$-always equilibrium, product $i\in\{A,B\}$ is always prominent ($\tau_i=1$). Any such equilibrium is uninformative ($\mu_i=\prior$). All equilibria in this class differ only in the off-path belief $\mu_j$ ($j\neq i$) and are payoff-equivalent.

\textit{Split equilibrium}. In a split equilibrium, the algorithm is informative ($\mu_A\neq\prior$ and $\mu_B\neq\prior$) and, therefore, both products are sometimes prominent ($\tau_A,\tau_B\in(0,1)$). All equilibria in which the algorithm is informative fall into this class, including standard separating and semi-separating strategies.

In the Appendix, we prove the following.

\begin{proposition}[\textbf{Equilibrium Under Opacity}]\label{prop:equilibrium under opacity} When the display algorithm is opaque, 
\begin{itemize}
\item[(i)] $A$-always equilibria exist at every prior $\prior\in(0,1)$. 
\item[(ii)] $B$-always equilibria exist if and only if $\prior\geq\theta_B$. At $\prior=\theta_B$, the searcher's mixing must be such that $b\geq f+r$.
\item[(iii)] split equilibria exist if and only if $\prior>\theta_A$. In any such equilibrium, prominence  conveys bad news ($\mu_A<\prior<\mu_B$). 
\item[(iv)] whenever split or $B$-always equilibria exists ($\prior\geq\theta_A$), the platform strictly prefers the $A$-always equilibria.

\end{itemize}
\end{proposition}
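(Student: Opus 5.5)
The plan is to reduce everything to the two state-wise profit differentials
\[
d_1\equiv z_A^1-z_B^1=f+r+(1-r)a-b,\qquad d_0\equiv z_A^0-z_B^0=f+(1-r)(a-b),
\]
and to use the key identity $d_1-d_0=r(1-b)\ge 0$, with equality iff $b=1$. By the opaque design rule, $\pi_\omega\in\multi\{d_\omega\ge 0\}$, while $a\in\multi\{\widehat\mu_A\ge\theta_A\}$ and $b\in\multi\{\widehat\mu_B\ge\theta_B\}$ by Lemma \ref{lem:searcher strategy}. In equilibrium the conjecture equals the actual algorithm, and off-path beliefs are free. Each part is then a matter of checking which sign patterns of $(d_0,d_1)$ are consistent with the beliefs the algorithm induces through \eqref{eq:BBA}.

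For (i), I set $\mu_A=\prior$ and choose the off-path belief $\mu_B=0$, so $b=0$. Then $d_1=f+r+(1-r)a>0$ and $d_0=f+(1-r)a>0$, so $\pi_0=\pi_1=1$ is optimal at every prior.

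For (ii), set $\mu_B=\prior$ and take off-path $\mu_A=0$, so $a=0$; this is the most favorable choice, since raising $a$ only raises both $d_\omega$. We need $d_1\le 0$ and $d_0\le 0$, i.e.\ $b\ge f+r$ and $b\ge f/(1-r)$. The first is the binding one, since $f+r\ge f/(1-r)$ is equivalent to $f\le 1-r$, which holds by (ii). Because $f+r>0$, we need $b>0$, which forces $\prior\ge\theta_B$. When $\prior>\theta_B$ we have $b=1>f+r$, using condition (ii). At $\prior=\theta_B$ we may pick any $b\in[f+r,1]$, which gives the stated mixing requirement. Necessity follows because $a\ge 0$ gives $b\ge f+r+(1-r)a\ge f+r$.

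For (iii), the main step is to show that any split equilibrium has $b=1$ and $d_0=d_1=0$; this is the step needing the most care. Suppose first $b<1$, so $d_1>d_0$. Interior $\tau_A,\tau_B$ with informativeness forces $\pi_1\ne\pi_0$, and the ordering of the $d_\omega$ forces $\pi_1>\pi_0$. Hence $d_0\le 0\le d_1$, and prominence is good news: $\mu_B<\prior<\mu_A$. Now $d_0\le 0$ requires $b>a\ge0$, which forces $\mu_B\ge\theta_B>\theta_A$. Then $\mu_A>\mu_B>\theta_A$ gives $a=1$, and $d_0\le0$ becomes $b\ge 1+f/(1-r)>1$, a contradiction. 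So $b=1$, and then $d_0=d_1=f-(1-r)(1-a)$. If this common value were nonzero, both states would select the same product and the algorithm would be uninformative. Hence $a=1-f/(1-r)\in(0,1)$ by condition (ii), which forces $\mu_A=\theta_A$, while $b=1$ forces $\mu_B\ge\theta_B>\theta_A=\mu_A$. Since $\prior$ lies strictly between distinct posteriors, $\theta_A=\mu_A<\prior<\mu_B$: bad news, and $\prior>\theta_A$. For existence when $\prior>\theta_A$, pick $\mu_A=\theta_A$ and any $\mu_B\in(\max\{\prior,\theta_B\},1)$. Take the algorithm these posteriors induce via \eqref{eq:BBA}, which is interior and informative. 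Set $a=1-f/(1-r)$ and $b=1$; the platform is then indifferent in both states, so the algorithm is a best response.

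For (iv), compute profits from \eqref{eq:opaque profit}. In a split equilibrium $d_\omega=0$, so profit equals the $B$-prominent value $\prior\cdot 1+(1-\prior)(1-r)=1-r+r\prior$. A $B$-always equilibrium yields $\prior b+(1-\prior)(1-r)b$, which is at most $1-r+r\prior$. An $A$-always equilibrium with $\prior>\theta_A$ has $a=1$ and yields $f+1-r+r\prior$, which is strictly larger by $f>0$. At the boundary $\prior=\theta_A$, neither split nor $B$-always equilibria exist, since $\theta_A<\theta_B$ and split equilibria require $\prior>\theta_A$. The comparison is therefore only needed for $\prior>\theta_A$, where the $A$-always equilibrium is unique in payoff.
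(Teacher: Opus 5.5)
Your proof is correct and follows essentially the same route as the paper. It uses the same state-wise differentials $\Delta_\omega$, free off-path beliefs for the always-equilibria, the same indifference construction ($a=1-f/(1-r)$, $b=1$) for split existence, and the same profit comparison. The one real difference is in the necessity part of (iii): you use $d_1-d_0=r(1-b)$ up front to rule out $b<1$, since that case would force good news and hence $a=1$, contradicting $d_0\le 0$. This compresses the paper's belief-by-belief case analysis in Claims 1--3 into a single contradiction. Your off-path choice $\mu_B=0$ in (i), instead of the paper's $\mu_B=\mu_1$, is an inessential variation.
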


A partial intuition can be gleaned from Figure \ref{fig:equilibrium under opacity}. The left panel illustrates the dependence of $z_A^1$ and $z_B^1$ on the associated posterior belief, accounting fully for the consumer's optimal search strategy. Thus, the left graph illustrates the correspondences $\overline{z}_i^1(\cdot)$ ($i\in\{A,B\}$), obtained by replacing the mixing probabilities ($a$ or $b)$ in $z_i^1$ with the entire set of possible values, $\multi\{\mu\geq \theta_i\}$. The right panel illustrates the analogous objects for state 0, i.e., $\overline{z}_i^0(\cdot)$. The vertical segments arise from the \searcher{}'s mixing at the threshold beliefs, and a selection is determined by her mixed strategy.\footnote{In an equilibrium with mixing, selection must be consistent with the \searcher{}'s strategy in both states. In other words, at $\mu_A=\theta_A$ ($\mu_B=\theta_B$) the payoff selections cannot be made independently in both graphs, as they must be consistent with a specific mixing probability $a\in[0,1]$ ($b\in[0,1]$).} To simplify the explanation, we abstract from mixing. Whether the platform wishes to make $A$ prominent or $B$ prominent in state $\omega$ is determined by the ranking of $\overline{z}^\omega_A(\mu_A)$ and $\overline{z}^\omega_B(\mu_B)$.

A few simple observations connect to Proposition \ref{prop:equilibrium under opacity}. First note that in both states $\bar{z}_A^\omega(\mu)>\bar{z}_B^\omega(\mu)$. In other words, holding the belief fixed, making $A$ prominent is preferred in both states. This observation reflects the steering benefit: when $A$ is prominent the platform sells to all \captive{}s (shifting the normalized profit up by $f$) and the \searcher{} is more inclined to stop after encountering it ($\theta_A<\theta_B$). Thus, it is intuitive that $A$-always equilibria exist at all priors. Indeed, in such an equilibrium $\mu_A=\prior$, while $\mu_B$ is off-path and can be chosen freely. Setting $\mu_B=\prior$ (for example), ensures $\overline{z}_A^\omega(\prior)>\overline{z}_B^\omega(\prior)$, so making $A$ prominent in both states is indeed optimal for the platform. Second, note that the ranges of these correspondences overlap (recall $0<f<1-r$). This overlap also allows for $B$-always equilibria. Indeed, in such equilibria $\mu_B=\prior>\theta_B$. A sufficiently pessimistic off-path belief $\mu_A<\theta_A$ ensures $\overline{z}_B^\omega(\prior)>\overline{z}_A^\omega(\mu_A)$, thereby sustaining the $B$-always equilibria. While the characterization of split equilibria is presented in the Appendix, the key point that prominence reveals bad news can also be intuited from the figure.\footnote{In the Appendix, we show that in any split equilibrium, prominence is not only bad news ($\mu_A<\prior<\mu_B$), but also $\mu_A=\theta_A$. Thus, such equilibria only exist when $\prior>\theta_A$. } Imagine a possible split equilibrium in which prominence reveals good news, $\mu_B<\prior<\mu_A$. Because the graphs are increasing, $\overline{z}_B^\omega(\mu_B)<\overline{z}_B^\omega(\prior)<\overline{z}_A^\omega(\prior)<\overline{z}_A^\omega(\mu_A)$. Consequently, $\overline{z}_A^\omega(\mu_A)>\overline{z}_B^\omega(\mu_B)$, so the platform would make $A$ prominent in both states, resulting in an $A$-always equilibrium, rather than a split one.

\begin{figure}
\begin{minipage}{0.5\textwidth}
\begin{tikzpicture}[scale=0.85,  >=Latex,]

\draw[->, line width=0.5pt] (0,-1) -- (0,5);
\fill (7,-1) node[right] {\footnotesize{$\mu$}};
\draw[->, line width=0.5pt] (0,-1) -- (7,-1);

\draw[line width=1pt,blue] (0,1)--(3,1);
\draw[line width=1pt,blue] (3,1)--(3,4.5);
\draw[line width=1pt,blue] (3,4.5)--(7,4.5);

 \fill (4,-1) node[below] {\footnotesize{$\theta_B$}};
\draw[line width=0.5pt,black,dotted] (3,-1)--(3,2.5);
 \fill (0,4.5) node[left] {\scriptsize{$1+f$}};
 \fill (0,1) node[left] {\footnotesize{$r+f$}};
\fill (0,-1) node[left] {\footnotesize{$0$}};
\draw[line width=0.5pt,black,dotted] (0,4.5)--(7,4.5);
\draw[line width=0.5pt,black,dotted] (0,3.5)--(7,3.5);

 \fill (3,-1) node[below] {\footnotesize{$\theta_A$}};

\fill (0,3.5) node[left] {\footnotesize{$1$}};

\draw[line width=1pt,red] (0,-1)--(4,-1);
\draw[line width=1pt,red] (4,3.5)--(7,3.5);
\draw[line width=1pt,red] (4,-1)--(4,3.5);
\fill (6,3.78) node[right] {\footnotesize{${\overline{z}^1_B(\cdot)}$}};
\fill (6,4.8) node[right] {\footnotesize{$\overline{z}^1_A(\cdot)$}};

\end{tikzpicture}
\end{minipage}
\begin{minipage}{0.5\textwidth}
\begin{tikzpicture}[scale=0.85,  >=Latex,]

\draw[->, line width=0.5pt] (0,-1) -- (0,5);
\fill (7,-1) node[right] {\footnotesize{$\mu$}};
\draw[->, line width=0.5pt] (0,-1) -- (7,-1);

\draw[line width=1pt,blue] (0,0)--(3,0);
\draw[line width=1pt,blue] (3,0)--(3,3.5);
\draw[line width=1pt,blue] (3,3.5)--(7,3.5);
 \fill (4,-1) node[below] {\footnotesize{$\theta_B$}};
\draw[line width=0.5pt,black,dotted] (3,-1)--(3,0.5);
\fill (0,0) node[left] {\footnotesize{$f$}};
\draw[line width=0.5pt,black,dotted] (0,0)--(7,0);
 \fill (0,3.5) node[left] {\scriptsize{$f+1-r$}};
\fill (0,-1) node[left] {\footnotesize{$0$}};
\draw[line width=0.5pt,black,dotted] (0,3.5)--(7,3.5);
\draw[line width=0.5pt,black,dotted] (0,2.5)--(7,2.5);

 \fill (3,-1) node[below] {\footnotesize{$\theta_A$}};

\fill (0,2.5) node[left] {\footnotesize{$1-r$}};

\draw[line width=1pt,red] (0,-1)--(4,-1);
\draw[line width=1pt,red] (4,2.5)--(7,2.5);
\draw[line width=1pt,red] (4,-1)--(4,2.5);
\fill (6,2.78) node[right] {\footnotesize{${\overline{z}_B^0(\cdot)}$}};
\fill (6,3.78) node[right] {\footnotesize{$\overline{z}^0_A(\cdot)$}};

\end{tikzpicture}
\end{minipage}

\caption{\label{fig:equilibrium under opacity} \textit{Proposition \ref{prop:equilibrium under opacity} (Equilibria With Opacity)}. Explanation in text.}
\end{figure}

More broadly, Proposition \ref{prop:equilibrium under opacity} has a number of interesting implications. Point (i) implies that an equilibrium in which $A$ is always prominent exists and is most profitable for the platform at each prior belief.  In other words, in its preferred equilibrium, the platform steers the consumer as aggressively as possible by always making $A$ prominent, thereby rendering prominence uninformative about product fit ($\mu_A=\prior$). 

Under opacity, however, the platform cannot directly affect how the consumer interprets prominence. Thus, informative interpretations ($\mu_A\neq \prior$ and $\mu_B\neq\prior$) can also be self-fulfilling. In particular, point (iii) implies that a multiplicity of split equilibria also exists at high priors ($\prior>\theta_A$), and that such equilibria are the only ones in which prominence is informative. Furthermore, point (iv) implies that the platform's profit in the informative split equilibria is strictly smaller than in the uninformative equilibria where $A$ is always prominent. In other words, without direct control over how prominence is perceived, the platform is strictly worse off whenever the consumer interprets prominence as informative.%

This conflict between informativeness and profit arises from an intriguing property: in any informative equilibrium, prominence must convey bad news ($\mu_A<\prior<\mu_B$). Intuitively, the platform is tempted to always make $A$ prominent in order to capitalize on the search friction, which steers the consumer toward it. If prominence also reveals good news, then the temptation to make $A$ prominent becomes stronger, and the platform cannot resist. But always making $A$ prominent eliminates any useful information for the consumer ($\mu_A=\prior$). Thus, informative equilibria are sustained by the consumer's negative interpretation of prominence, which offsets the direct benefit arising from the search friction.

These observations suggest that the platform would benefit from an instrument that selects the $A$-always equilibrium. One possibility is a non-binding social norm that discourages the platform from using customer information when determining product order. Under such a norm, the platform's display algorithm does not depend on the state $\omega$. Anticipating the platform follows the norm, the consumer's belief is unchanged, regardless of which product is prominent ($\mu_A=\mu_B=\prior$). Consequently, it is optimal for the platform to always make $A$ prominent, thereby following the norm. Thus, a social norm against personalized search order uniquely selects the platform's preferred equilibrium.  Similar logic also applies if the platform is allowed to make a non-binding announcement of its display algorithm. If it claims that its algorithm always makes $A$ prominent and it is believed, then such an algorithm is indeed optimal.

\subsection{Transparency}\label{sec:transparency}

We now move to the transparent case, in which the consumer observes the display algorithm when she arrives on the platform. Thus, the consumer's inferences are based on the algorithm that was designed and implemented by the platform. By adjusting the algorithm, the platform changes both the probability that each product is prominent, and the interpretation of prominence itself. This ability to shape the consumer's beliefs introduces a persuasion motive into the platform's design problem, with significant implications.

\begin{figure}
\begin{minipage}{0.5\textwidth}
\begin{tikzpicture}[scale=0.85,  >=Latex,]

\draw[->, line width=0.5pt] (0,-1) -- (0,5);
\fill (7,-1) node[right] {\footnotesize{$\mu$}};
\draw[->, line width=0.5pt] (0,-1) -- (7,-1);

\draw[line width=1pt,blue] (0,0)--(3,1);
\draw[line width=1pt,blue, dashed] (3,1)--(3,3.18);
\draw[line width=1pt,blue] (3,3.18)--(7,4.5);
\draw[line width=0.5pt,blue,dotted] (3,1)--(7,2.33);
\draw[line width=0.5pt,black,dotted] (3,-1)--(3,1);
\fill (6,5) node[right] {\footnotesize{$v_A(\cdot)$}};
\fill (4,-1) node[below] {\footnotesize{$\theta_B$}};
\draw[line width=0.5pt,black,dotted] (4,-1)--(4,0.5);
\fill (0,0) node[left] {\footnotesize{$f$}};
\draw[line width=0.5pt,black,dotted] (7,4.5)--(0,4.5);
\fill (0,4.5) node[left] {\footnotesize{$1+f$}};
\fill (0,3.5) node[left] {\footnotesize{$1$}};
\fill (0,-1) node[left] {\footnotesize{$0$}};
\draw[line width=0.5pt,black,dotted] (0,3.5)--(7,3.5);

\fill (3,-1) node[below] {\footnotesize{$\theta_A$}};

\draw[line width=1pt,red] (0,-1)--(4,-1);
\draw[line width=1pt,red] (4,2.5)--(7,3.5);
\draw[line width=1pt,red,dashed] (4,-1)--(4,2.5);
\draw[line width=0.5pt,red,dotted] (4,2.5)--(0,1.17);
\fill (0,1.17) node[left] {\footnotesize{$1-r$}};
\fill (6,2.7) node[right] {\footnotesize{$v_B(\cdot)$}};

\node[below=0pt, rotate=20] at (4.6,3.4) 
{\scriptsize{$r\mu+1-r$}};
\node[below=0pt, rotate=20] at (4.3,4.35) 
{\scriptsize{$f+r\mu+1-r$}};

\node[below=0pt, rotate=20] at (1.3,1.1) 
{\scriptsize{$f+r\mu$}};
\node[below=0pt] at (1.3,-0.5) 
{\scriptsize{$0$}};

\filldraw[color=red, fill=red] (4,2.5) circle (2pt);
\filldraw[color=blue, fill=blue] (3,3.18) circle (2pt);

\end{tikzpicture}
\end{minipage}
\begin{minipage}{0.5\textwidth}
\begin{tikzpicture}[scale=0.85,  >=Latex,]
\pgfmathsetmacro{\off}{4}
\pgfmathsetmacro{\offsm}{1}

\draw[line width=0.5pt] (0,0) -- (7,0);
\fill (7,0) node[right] {\footnotesize{$\mu$}};
\fill (3,0) node[below] {\footnotesize{$\theta_A$}};
\filldraw[color=black, fill=white] (3,-0.01) circle (3pt);
\node[above=0pt] at (3.5,1.2) 
{\footnotesize{$A$ \textit{is prominent}}};

\node[above=0pt] at (5,0.1) 
{\footnotesize{$A$ iff $R\in\{1,\phi\}$}};
\node[above=0pt] at (5,0.1-0.8) 
{\scriptsize{Always $A$}};

\node[above=0pt] at (1.5,0.2) 
{\footnotesize{$A$ iff $R=1$}};
\node[above=0pt] at (1.5,0.1-0.8) 
{\scriptsize{Always $A$}};

\draw[line width=0.5pt] (0,-\off) -- (7,-\off);
\fill (7,-\off) node[right] {\footnotesize{$\mu$}};
\fill (4,-\off) node[below] {\footnotesize{$\theta_B$}};
\filldraw[color=black, fill=white] (4,-0.01-\off) circle (3pt);
\node[above=0pt] at (3.5,1.2-\off) 
{\footnotesize{$B$ \textit{is prominent}}};

\node[above=0pt] at (5.5,0.1-\off) 
{\footnotesize{$A$ iff $R\in\{1,\phi\}$}};
\node[above=0pt] at (5.5,0.1-\off-0.8) 
{\footnotesize{Never $A$}};

\node[above=0pt] at (1.5,0.2-\off) 
{\footnotesize{Never $A$}};
\node[above=0pt] at (1.5,0.2-\off-0.8) 
{\footnotesize{Never $A$}};

\end{tikzpicture}
\end{minipage}

\caption{\label{fig:transparent profit} \textit{Lemma \ref{lem:transparent profit} (Transparent Profit)}. Left: The payoffs $v_i(\cdot)$, where $i\in\{A,B\}$ is the prominent product. Under regularity (ii) the dotted red line lies above the blue at 0. Right: consumer search strategy. Above line, \searcher{}; below, \captive{}. Computing the probability $A$ is bought using common belief $\mu$ and normalizing recovers $v_i(\cdot)$.}
\end{figure}

\paragraph{Transparent Profit.} When the algorithm is transparent, the posterior beliefs $(\mu_A,\mu_B)$ are determined by the algorithm itself $\Pi=(\pi_0,\pi_1)$, not by a conjecture. Thus, the \searcher{}'s continuation strategy depends on the algorithm through these induced posteriors. This allows for two simplifications to the platform's expected profit (see  \eqref{eq:opaque profit} for the opaque case). First, the links between the algorithm and the induced posteriors (described in Section \ref{sec:belief-based approach}) allow the platform's profit to be written as a function of the posterior beliefs $(\mu_A,\mu_B)$, which it chooses directly.\footnote{In the opaque case, this reduction is possible in equilibrium, which imposes that the conjectured algorithm $\widehat{\Pi}$ and the true algorithm $\Pi$ are the same. When designing its algorithm, however, the platform is free to depart from the conjecture, so the reductions need not hold.} Second, the \searcher{} must break ties in the platform's favor in equilibrium; otherwise, the platform profits from a marginal deviation.

\begin{lemma}[\textbf{Transparent Profit}]\label{lem:transparent profit} In equilibrium under transparency, the platform maximizes
\[\mathbb{E}_\tau[v_i(\mu_i)]=\tau_Av_A(\mu_A)+\tau_Bv_B(\mu_B),\]
 where \(v_A(\mu)=f+r\mu+(1-r)\mathbbm{1}\{\mu\geq\theta_A\}\), and \(v_B(\mu)=(r\mu+1-r)\mathbbm{1}\{\mu\geq\theta_B\}\).
\end{lemma}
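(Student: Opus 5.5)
Two things must be shown: the platform's objective can be written through the induced posteriors, and in equilibrium the \searcher{} resolves every indifference in the platform's favor.

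\emph{Step 1: profit in terms of posteriors.} Start from the profit expression \eqref{eq:opaque profit}, now with the true algorithm $\Pi$ in place of the conjecture, since under transparency the consumer's beliefs are those induced by $\Pi$. Group terms by the prominent product rather than by state. Using \eqref{eq:BBA}, $\prior\pi_1=\tau_A\mu_A$, $(1-\prior)\pi_0=\tau_A(1-\mu_A)$, $\prior(1-\pi_1)=\tau_B\mu_B$ and $(1-\prior)(1-\pi_0)=\tau_B(1-\mu_B)$. The profit then becomes
\[\tau_A\bigl[\mu_A z_A^1+(1-\mu_A)z_A^0\bigr]+\tau_B\bigl[\mu_B z_B^1+(1-\mu_B)z_B^0\bigr].\]
Substituting the $z$'s gives, for $A$ prominent, $\mu_A(f+r+(1-r)a)+(1-\mu_A)(f+(1-r)a)=f+r\mu_A+(1-r)a$. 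For $B$ prominent it gives $\mu_B b+(1-\mu_B)(1-r)b=(r\mu_B+1-r)b$.

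By Lemma \ref{lem:searcher strategy}, $a\in\multi\{\mu_A\geq\theta_A\}$ and $b\in\multi\{\mu_B\geq\theta_B\}$, where $a,b$ may now depend on the announced $\Pi$. Since $(\tau,\mu)$ and $\Pi$ correspond one-to-one via the splitting characterization in Section \ref{sec:belief-based approach}, the platform effectively chooses Bayes-plausible $(\tau_i,\mu_i)$. Its payoff equals $\mathbb{E}_\tau[v_i(\mu_i)]$ whenever $a=1$ at $\mu_A=\theta_A$ and $b=1$ at $\mu_B=\theta_B$.

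\emph{Step 2: tie-breaking in the platform's favor.} This is the main obstacle. Suppose that in some equilibrium the platform's chosen posteriors put $\mu_A=\theta_A$ with $a<1$, or $\mu_B=\theta_B$ with $b<1$, on a part of the split with positive $\tau_i$. The plan is to construct a nearby Bayes-plausible split that strictly raises profit, contradicting optimality.

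\emph{Case $\mu_A=\theta_A$, $a<1$.} Perturb to $\mu_A'=\theta_A+\varepsilon$ and keep $\mu_B$ fixed, adjusting $\tau$ to maintain $\mathbb{E}_\tau[\mu]=\prior$. This requires $\mu_B\neq\theta_A$, which holds when the prior lies strictly between the posteriors.
\begin{itemize}
\item If $\mu_B>\theta_A$, use instead $\mu_A'=\theta_A+\varepsilon$ moving toward $\mu_B$.
\item If $\mu_B<\theta_A$, moving $\mu_A$ up is feasible as long as $\prior<\mu_A'$. Should $\prior=\mu_A=\theta_A$ with $\tau_A=1$, mix in a small weight on some other posterior instead.
\end{itemize}
At $\mu_A'$ the searcher strictly stops, so the payoff jumps by $(1-r)(1-a)>0$ on mass $\tau_A$. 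The change in $\tau$ and in the linear parts is $O(\varepsilon)$. The net change is therefore positive for small $\varepsilon$.

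\emph{Case $\mu_B=\theta_B$, $b<1$.} Handle it symmetrically: push $\mu_B$ slightly above $\theta_B$, gaining $(r\theta_B+1-r)(1-b)>0$.

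\emph{Degenerate cases.} When a posterior is off-path ($\tau_i=0$), it does not affect profit. When the prior is itself at a threshold with an uninformative algorithm, use a small informative split around the prior, which is feasible because the prior is interior. In each case the discontinuous gain dominates the continuous loss, so no best response exists unless ties are broken favorably.

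\emph{Conclusion.} Hence in equilibrium $a=\mathbbm{1}\{\mu_A\geq\theta_A\}$ and $b=\mathbbm{1}\{\mu_B\geq\theta_B\}$ on path, where these are now the ordinary indicators, with the value $1$ at equality. Combined with Step 1, the platform maximizes $\tau_Av_A(\mu_A)+\tau_Bv_B(\mu_B)$. The $v_i$ are upper semicontinuous, so the maximum is attained.
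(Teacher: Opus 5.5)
Your proposal is correct and follows essentially the same route as the paper. It uses the same regrouping of profit via the Bayes-plausibility identities in \eqref{eq:BBA}, and the same marginal-perturbation argument for platform-favorable tie-breaking, including the two degenerate cases the paper treats separately: $\prior=\theta_A$ with $\tau_A=1$, handled by mixing in a low posterior, and the uninformative split $\mu_A=\mu_B=\prior=\theta_A$, handled by a small spread around the prior.

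One small point, which the paper's Case~1 also passes over. You allow the \searcher{}'s tie-break to depend on the whole announced algorithm, so when $\mu_A=\theta_A$ and $\mu_B=\theta_B$ hold simultaneously, a one-sided push of $\mu_A$ could let her re-break the tie at $\mu_B$ against the platform. You should therefore move every posterior that sits at its threshold off it in the favorable direction at the same time.
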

  Note that $\mathbbm{1}\{\cdot\}$ is the usual indicator function. With favorable tie-breaking, $v_i(\mu)$ is the platform's expected payoff if product $i\in\{A,B\}$ is  prominent and the platform and consumer have a common belief $\mu$ (see Figure \ref{fig:transparent profit}). To understand this, imagine that the algorithm is able to observe the state, but the algorithm designer, i.e., the platform, is not. This is payoff-equivalent for the platform, as the platform has no information about the state at the design stage, and its subsequent actions are determined by the algorithm. Thus, when belief $\mu_i$ is realized by the algorithm, it is common to the platform and consumer, and the platform's expected payoff is $v_i(\mu_i)$. The platform's payoff at the design stage is then computed by the law of iterated expectations, $\mathbb{E}_\tau[v_i(\mu_i)]$.
  
    The difference $v_A(\mu)-v_B(\mu)>0$ reflects the steering effect of prominence, the increased sales arising solely from $A$'s advantageous position in the search order. Meanwhile, the spread in the posterior beliefs $(\mu_A,\mu_B)$ reflects the information effect. By choosing these beliefs, the platform determines what information is conveyed, persuading the consumer.

\begin{figure}
\begin{minipage}{0.5\textwidth}
\begin{tikzpicture}[scale=0.85,  >=Latex]

\draw[->, line width=0.5pt] (0,-1) -- (0,5);
\fill (7,-1) node[right] {\footnotesize{$\mu$}};
\draw[->, line width=0.5pt] (0,-1) -- (7,-1);

\draw[line width=1pt,blue] (0,0)--(3,1);
\draw[line width=1pt,blue, dashed] (3,1)--(3,3.18);
\draw[line width=0.5pt, dotted] (3,1)--(3,-1);
\draw[line width=1pt,blue] (3,3.18)--(7,4.5);
\fill (6,5) node[right] {\footnotesize{$v_A(\cdot)$}};
\fill (4,-1) node[below] {\footnotesize{$\theta_B$}};
\fill (0,-1) node[left] {\footnotesize{$0$}};

\fill (3,-1) node[below] {\footnotesize{$\theta_A$}};

\draw[line width=1pt,red] (0,-1)--(4,-1);
\draw[line width=1pt,red] (4,2.5)--(7,3.5);
\draw[line width=1pt,red, dashed] (4,-1)--(4,2.5);
\fill (6,2.8) node[right] {\footnotesize{$v_B(\cdot)$}};

\filldraw[color=red, fill=red] (4,2.5) circle (2pt);
\filldraw[color=blue, fill=blue] (3,3.18) circle (2pt);

\pgfmathsetmacro{\zA}{0.35}
\pgfmathsetmacro{\zB}{2.85}
\pgfmathsetmacro{\muA}{1}
\pgfmathsetmacro{\muB}{5}
\pgfmathsetmacro{\pri}{2}
\pgfmathsetmacro{\zpri}{1}

\draw[line width=1pt,violet] (\muA,\zA)--(\muB,\zB);

\draw[line width=0.5pt,black,dotted] (\muB,-1)--(\muB,\zB)--(0,\zB);

\fill (0,\zB) node[left] {\footnotesize{$v_B(\mu_B)$}};
\fill (\muB,-1) node[below] {\footnotesize{$\mu_B$}};
\filldraw[color=red, fill=white] (\muB,\zB) circle (3pt);

\draw[line width=0.5pt,black,dotted] (\muA,-1)--(\muA,\zA)--(0,\zA);
\fill (0,\zA) node[left] {\footnotesize{$v_A(\mu_A)$}};
\fill (\muA,-1) node[below] {\footnotesize{$\mu_A$}};
\filldraw[color=blue, fill=white] (\muA,\zA) circle (3pt);

\draw[line width=0.5pt,black,dotted] (\pri,-1)--(\pri,\zpri)--(0,\zpri);
\fill (0,\zpri) node[left] {\footnotesize{$E_\tau[v_i(\mu_i)]$}};
\fill (\pri,-1) node[below] {\footnotesize{$\prior$}};
\filldraw[color=violet, fill=white] (\pri,\zpri) circle (3pt);

\end{tikzpicture}
\end{minipage}
\begin{minipage}{0.5\textwidth}
\begin{tikzpicture}[scale=0.85,  >=Latex,]

\draw[->, line width=0.5pt] (0,-1) -- (0,5);
\fill (7,-1) node[right] {\footnotesize{$\mu$}};
\draw[->, line width=0.5pt] (0,-1) -- (7,-1);

\draw[line width=1pt,blue] (0,0)--(3,1);
\draw[line width=1pt,blue, dashed] (3,1)--(3,3.18);
\draw[line width=0.5pt, dotted] (3,1)--(3,-1);
\draw[line width=1pt,blue] (3,3.18)--(7,4.5);
\fill (6,5) node[right] {\footnotesize{$v_A(\cdot)$}};
\fill (0.8,2) node[right] {\footnotesize{$v^*(\cdot)$}};
\fill (4,-1) node[below] {\footnotesize{$\theta_B$}};

\fill (3,-1) node[below] {\footnotesize{$\theta_A$}};

\draw[line width=1pt,red] (0,-1)--(4,-1);

\draw[line width=1pt,red] (4,2.5)--(7,3.5);
\draw[line width=1pt,red, dashed] (4,-1)--(4,2.5);
 \draw[line width=0.5pt,red,dotted] (4,2.5)--(0,1.17);
\fill (6,2.8) node[right] {\footnotesize{$v_B(\cdot)$}};

\draw[line width=0.5pt,dotted] (4,-1)--(7,4.5);
\draw[line width=0.5pt,dotted] (3,1)--(7,3.5);

\filldraw[color=red, fill=red] (4,2.5) circle (3pt);
\filldraw[color=blue, fill=blue] (3,3.18) circle (3pt);

\draw[line width=1pt,violet] (0,0)--(1.33,0.85)--(3,3.18);
\draw[line width=1pt,violet] (3,3.18)--(7,4.5);
\draw[line width=0.5pt, dashed] (1.33,0.85) -- (4,2.5);
\draw[line width=0.5pt, dashed] (0,-1) -- (1.33,0.85);
\fill (0,0) node[left] {\footnotesize{$\mu_{A}^{-}$}};
\fill (3,3.5) node[above] {\footnotesize{$\mu_{A}^{+}$}};
\fill (4,2.5) node[above] {\footnotesize{$\mu_{B}^{-}$}};
\fill (0,-1) node[left] {\footnotesize{$\mu_{B}^{+}$}};
\filldraw[color=red, fill=white] (4,2.5) circle (3pt);
\filldraw[color=red, fill=white] (0,-1) circle (3pt);
\filldraw[color=blue, fill=white] (3,3.18) circle (3pt);
\filldraw[color=blue, fill=white] (0,0) circle (3pt);

\fill[violet, opacity=0.05] (0,0) -- (1.33,0.85) -- (3,3.18) -- (7,4.5) --(7,3.5) --({179/29},{173/58})--(4,-1)--(0,-1) -- cycle;

\draw[line width=0.5pt, dotted] (1.33,-1) -- (1.33,0.85); 
\fill (1.33,-1) node[below] {\footnotesize{$\hat{\mu}$}};

\end{tikzpicture}
\end{minipage}
\caption{\label{fig:equilibrium under transparency} \textit{Proposition \ref{prop:equilibrium under transparency}}. Left panel: the attainable profits with posterior beliefs $(\mu_A,\mu_B)$ for all possible prior beliefs are represented by the purple line segment. Right panel: the set of all attainable profits is shaded purple. Its upper envelope $v^*(\cdot)$ is the highest attainable profit at each prior. For priors $\prior\in[\widehat{\mu},\theta_A]$, inducing beliefs $(\mu_A^+,\mu_B^+)$ via positive prominence generates highest profit. For $\prior\in[0,\widehat{\mu}]$, inducing beliefs $(\mu_A^-,\mu_B^-)$ via negative prominence generates highest profit.}
\end{figure}

\paragraph{Transparent Algorithm Design.} As described in Section \ref{sec:belief-based approach}, an algorithm $\Pi$ can be represented by a pair of beliefs $(\mu_A,\mu_B)$ surrounding the prior $\prior$, with probabilities $(\tau_A,\tau_B)$ that satisfy the law of iterated expectations. The expected profit of such an algorithm is evaluated under the same distribution, so that
\[\mathbb{E}_\tau[\mu_i]=\tau_A\mu_A+\tau_B\mu_B=\prior\quad\quad \mathbb{E}_\tau[v_i(\mu_i)]=\tau_Av_A(\mu_A)+\tau_Bv_B(\mu_B).\]
These observations allow for a straightforward geometric characterization, illustrated in Figure \ref{fig:equilibrium under transparency}. Graphically, the expected profit of an algorithm $(\mu_A,\mu_B)$ can be found by drawing a line segment joining points $(\mu_A,v_A(\mu_A))$ and $(\mu_B,v_B(\mu_B))$, and then evaluating its height at prior $\prior$, as illustrated in the left panel. More broadly, the closed line segment represents all profit levels generated by induced posteriors $(\mu_A,\mu_B)$ as the prior varies over the unit interval. Varying over all possible posteriors yields the highest attainable profit at each prior belief, $v^*(\cdot)$, drawn in purple in the right panel. At each prior belief, the beliefs that are joined to generate this curve are induced by the equilibrium algorithm.

\paragraph{Equilibrium Under Transparency.} In anticipation of the formal result to follow, we define two algorithms that play an important role in the equilibrium characterization.

\begin{definition}[\textbf{Positive and Negative Sorting}] For prior $\prior\in(0,\theta_A)$, 
\begin{itemize}
\item the \textbf{positive sorting} algorithm induces posteriors $(\mu_A=\theta_A,\,\mu_B= 0)$.
\item the \textbf{negative sorting} algorithm induces posterior beliefs $(\mu_A=0,\,\mu_B=\theta_B)$.
\end{itemize}
\end{definition}
 We sometimes denote the posteriors under positive sorting as $(\mu_A^+,\mu_B^+)\equiv(\theta_A,0)$, and negative sorting as $(\mu_A^-,\mu_B^-)\equiv(0,\theta_B)$. 

 In the positive (negative) sorting algorithm prominence conveys good (bad) news. 
 The terminology reflects how prominence is assigned. Under positive sorting, each product is more likely to be prominent when it is the better match than when it is worse ($\pi_1>\pi_0$), and under
negative sorting, the reverse ($\pi_1<\pi_0$). If the platform would like to persuade the consumer to change her search decision, it is optimal to use one of these algorithms.\footnote{Consult the left panel of Figure \ref{fig:equilibrium under transparency}. In order to change the consumer's search decision, the platform must induce either $\mu_B\geq\theta_B$ or $\mu_A\geq\theta_A$.  Suppose $\mu_B\geq\theta_B$, which implies $\mu_A<\prior$. The highest profit is attained by shifting $\{\mu_A,\mu_B\}$ as far left as possible, $\{\mu_A=0,\mu_B=\theta_B\}$. The argument for $\mu_A\geq\theta_A$ is analogous.}

Positive and negative sorting persuade the \searcher{} in opposite directions. Under positive sorting, $A$'s prominence conveys just enough good news to deter continued search when the signal is inconclusive ($R=\phi$). Meanwhile, $B$'s prominence reveals it to be the best match, and it is therefore bought immediately. Thus, with positive sorting, the \searcher{} continues only if $A$ is prominent and she learns that $B$ is her best match ($R=0$), the only scenario in which continued search cannot possibly be prevented. In other words, positive sorting maximally deters continued search. In contrast, negative sorting maximally encourages it. When $A$ is prominent, it is revealed to be the worse match; when $B$ is prominent, the belief about $A$ improves just enough that the \searcher{} is willing to incur the cost to continue. Thus, with negative sorting, the \searcher{} always encounters both products, regardless of which product is prominent.

The following proposition follows from inspection of the right panel of Figure \ref{fig:equilibrium under transparency}.

\begin{proposition}[\textbf{Equilibrium Under Transparency}]\label{prop:equilibrium under transparency} When the display algorithm is transparent, a unique $\widehat{\mu}\in(0,\theta_A)$ exists such that
\begin{enumerate}[label=\textup{(\roman*)}]
\item if $0<\prior<\widehat{\mu}$, then the unique equilibrium algorithm is negative sorting.
\item if $\widehat{\mu}<\prior<\theta_A$, then the unique equilibrium algorithm is positive sorting.
\item if $\prior\geq \theta_A$ or $\prior=0$, then the unique equilibrium algorithm always makes $A$ prominent.
\end{enumerate}

\end{proposition}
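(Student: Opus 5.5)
By Lemma \ref{lem:transparent profit}, the platform's problem is to choose $(\mu_A,\mu_B)$ with $\prior$ in between them (and weights $\tau_A,\tau_B$ fixed by the law of iterated expectations) so as to maximize $\tau_Av_A(\mu_A)+\tau_Bv_B(\mu_B)$. Geometrically, this is the upper envelope $v^*(\prior)$ of all chords that join a point on the graph of $v_A$ to a point on the graph of $v_B$. The plan is to compute this envelope explicitly, region by region in $\prior$, and then read off the maximizers. The maximizers are unique because the relevant comparisons hold with strict inequality.

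\emph{Step 1 (high priors, $\prior\geq\theta_A$).} Setting $\mu_A=\prior$ and $\tau_A=1$ gives $v_A(\prior)=f+r\prior+1-r$. I will show that every chord does weakly worse, with equality only for the $A$-always algorithm. On $[\theta_A,1]$, $v_A$ is linear with slope $r$, and $v_B(\mu)\leq r\mu+1-r<v_A(\mu)$ everywhere. So $v_B$ lies below the line $\ell(\mu)=f+r\mu+1-r$. On $[0,\theta_A)$ we have $v_A(\mu)=f+r\mu<\ell(\mu)$, so $v_A$ also lies below $\ell$, strictly off $[\theta_A,1]$. Any chord therefore lies weakly below $\ell$. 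It touches $\ell$ at $\prior$ only if all mass is on $A$-points in $[\theta_A,1]$; since there is only one $A$-posterior, this forces $\mu_A=\prior$ and $\tau_A=1$. The endpoint case $\prior=0$ is trivial, because only $\mu_A=\mu_B=0$ is feasible, and then $v_A(0)=f>0=v_B(0)$.

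\emph{Step 2 (reduction for $\prior<\theta_A$).} Uninformative algorithms give at most $v_A(\prior)=f+r\prior$. An informative algorithm in which neither posterior crosses its threshold gives a chord of the linear functions $f+r\mu$ and $0$, which is worse. Following the footnote's argument, any profitable persuasion must induce either $\mu_A\geq\theta_A$ or $\mu_B\geq\theta_B$. In the first case $\mu_B<\prior$, and since $v_B=0$ there and $v_A$ is increasing with slope $r$ on $[\theta_A,1]$, the best choice is $\mu_B=0$. The chord from $(0,0)$ to $(\mu_A,f+r\mu_A+1-r)$ has height at $\prior$ equal to $\prior(f+r\mu_A+1-r)/\mu_A$. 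This is strictly decreasing in $\mu_A$ because $f+1-r>0$, so the optimum is $\mu_A=\theta_A$, i.e. positive sorting. In the second case $\mu_A<\prior<\theta_A$. The chord from $(\mu_A,f+r\mu_A)$ to $(\mu_B,r\mu_B+1-r)$ with $\mu_B\geq\theta_B$ is optimized at $\mu_A=0$: I will check that lowering $\mu_A$ moves the endpoint along a line whose extension passes below the $B$-point, and regularity (ii) ($f<1-r$) makes the $B$-endpoint steeper relative to the $A$-line, as the caption of Figure \ref{fig:transparent profit} notes. It is then optimized at $\mu_B=\theta_B$, using the same slope comparison as in the first case. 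This yields negative sorting. Both sorting algorithms strictly beat the uninformative value $f+r\prior$: positive sorting does so since $v_A$ jumps up at $\theta_A$, and negative sorting does so because it gives $f$ at $\prior=0$ with a steeper slope.

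\emph{Step 3 (crossing, the main obstacle).} Both sorting profits are linear in $\prior$ on $[0,\theta_A]$:
\[P^+(\prior)=\prior\,\frac{f+r\theta_A+1-r}{\theta_A},\qquad P^-(\prior)=f+\prior\,\frac{r\theta_B+1-r-f}{\theta_B}.\]
We have $P^-(0)=f>0=P^+(0)$. At $\theta_A$, $P^+(\theta_A)=v_A(\theta_A)$, which exceeds $P^-(\theta_A)$ by Step 1's bound, since the negative-sorting chord lies below $\ell$. Two lines with these endpoint orderings cross exactly once, at some $\widehat{\mu}\in(0,\theta_A)$, giving (i) and (ii) with strict preference, hence uniqueness, off $\widehat{\mu}$. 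The delicate part I expect is Step 2's second case: verifying cleanly that the optimal $B$-crossing chord anchors at $\mu_A=0$ rather than at an interior $\mu_A$. I would handle this by comparing the slope of the chord from $(\theta_B,v_B(\theta_B))$ to $(0,f)$ with the slope $r$ of $v_A$ on $[0,\theta_A)$, and then invoking conditions (ii)–(iii) to show the former is steeper.
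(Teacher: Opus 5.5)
Your proposal is correct and is essentially the paper's argument made explicit. The supporting line $\ell$ is the paper's bound that $A$ cannot sell with probability above $f+r\prior+1-r$, your two persuasion cases are its footnote argument, and the single crossing of $P^+$ and $P^-$ is what the right panel of Figure \ref{fig:equilibrium under transparency} depicts.

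Two small corrections, neither of which affects the conclusion:
\begin{itemize}
\item Positive sorting does not beat the uninformative value $f+r\prior$ near $\prior=0$, because its chord starts at $(0,0)$, below $f$. Your conclusion is unaffected, since $P^->f+r\prior$ on all of $(0,\theta_A]$, so the better of the two sorting algorithms always beats the uninformative one.
\item Case 2 closes immediately from the identity $\tau_Av_A(\mu_A)+\tau_Bv_B(\mu_B)=f+r\prior+(1-r-f)\tau_B$. Maximizing $\tau_B$ then forces $\mu_A=0$ and $\mu_B=\theta_B$, so only condition (ii) is needed there, not (iii).
\end{itemize}
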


To understand Proposition \ref{prop:equilibrium under transparency}, first consider high priors ($\prior\geq\theta_A$). By always making $A$ prominent, the platform ensures that the \captive{} always buy it, and that the \searcher{} buys it unless she learns conclusively that $B$ is her best match. The platform could not possibly sell $A$ with higher probability.\footnote{In any algorithm, the \searcher{} buys $B$ if she conclusively learns that it is her best match. As described in text, when $A$ is prominent and $\prior\geq\theta_A$, the \searcher{} buys $A$ in all other scenarios.} Thus, the platform's highest profit is attained by always making $A$ prominent, as in its preferred equilibrium under opacity. In this sense, at high prior beliefs the platform's only benefit of transparency is equilibrium selection.

In contrast, when the prior is less favorable ($\prior<\theta_A$), an algorithm that always makes $A$ prominent is uninformative, and the \searcher{} buys $A$ only when the signal reveals conclusively that it is a match ($R=1$). The platform can do better with many algorithms that convey information and persuade the \searcher{} to change her search strategy. As an example, consider the algorithm illustrated in the left panel of Figure \ref{fig:equilibrium under transparency}, where $\mu_A<\prior<\theta_B<\mu_B$. When $A$ is prominent, the \searcher{} becomes more pessimistic and follows the same search strategy as under the prior. When $B$ is prominent, however, the \searcher{} becomes sufficiently optimistic about $A$ to continue, which increases the probability that she ultimately buys $A$. By implication, the equilibrium algorithm must convey enough information to change the consumer's search decision. Recalling that positive and negative sorting are optimal among algorithms with this property, it follows that the equilibrium algorithm must be one of these two.

The tradeoff between positive and negative sorting hinges on how these algorithms align the choices of \captive{} and \searcher{}. With positive sorting, neither type of consumer buys \(A\) when \(B\) is prominent, but when $A$ is prominent, both types of consumer buy $A$ as often as possible: the \captive{} always buys $A$, and the \searcher{} buys $A$ unless she conclusively learns that it is a mismatch ($R=0$). Thus, positive sorting aligns the decisions of both types, generating a high profit when \(A\) is prominent and zero profit otherwise. In contrast, negative sorting misaligns the types' choices. Whenever $A$ is prominent, the \captive{} always buys it, but the \searcher{} never does. Whenever $B$ is prominent, the \captive{} always buys it, but the \searcher{} instead buys $A$ as often as possible. Because of this misalignment, negative sorting generates a moderate profit regardless of which product is actually prominent. By implication, positive sorting is preferred at moderate priors, where it is relatively likely to make $A$ prominent and generate a high profit.\footnote{Under positive sorting, $\tau_A=\prior/\theta_A$.} At low priors, positive sorting is too likely to make $B$ prominent, generating zero profit, and negative sorting is preferred.

\begin{remark}[\textbf{Consumer Heterogeneity}]
If the consumer is a known \captive{}, then the unique equilibrium algorithm always makes $A$ prominent. If the consumer is a known \searcher{}, then the unique equilibrium algorithm is positive sorting.\footnote{This is evident from the right panel of Figure \ref{fig:equilibrium under transparency}. Imagine that the consumer is known to be a \searcher{}, so that $f=0$. In this case, the optimal algorithm always interpolates $\mu_B=0$ on the red graph and $\mu_A=\theta_A$ on the blue, i.e., the positive algorithm.} Thus, negative sorting is used in equilibrium due to the interplay of incentives between the \captive{} and the \searcher{}. 
\end{remark}

\begin{remark}[\textbf{Maximizing Sales of $B$}]\label{rem:maxB} Suppose that the commission is higher for product $B$, so that the platform would like to maximize the probability $B$ is purchased. Because the consumer always buys either $A$ or $B$, this is equivalent to minimizing the probability $A$ is purchased. The construction is analogous to Proposition \ref{prop:equilibrium under transparency}, except that the objective is reversed. Up to normalization, the lowest possible probability that $A$ sells is the lower bound of the shaded purple region in the right panel of Figure \ref{prop:equilibrium under transparency}. As is evident, when minimizing the probability that $A$ sells (which maximizes $B$), the platform uses versions of the positive and negative sorting algorithms when $\prior>\theta_B$.

\end{remark}

\paragraph{Search Friction.} We briefly consider the effect of the search friction, reflected in the parameter $c$.\footnote{The other component of the search friction has a straightforward effect. As $h$ increases, $f$ increases, shifting the blue graph up. Thus, both the positive and negative algorithms become more profitable.} As discussed previously,  positive and negative sorting incentivize opposite search behaviors. When positive sorting makes $A$ prominent, the \searcher{}'s posterior belief is $\mu_A=\theta_A<\out$, and the searcher with an inconclusive realization stops and buys $A$, even though she would prefer to buy $B$ in the absence of search cost. Thus, positive sorting exploits the search cost to deter search when $A$ is prominent. With negative sorting, the \searcher{} never buys $A$ when it is prominent. However, when $B$ is prominent, the \searcher{}'s posterior belief is $\mu_B=\theta_B>\out$, and she is incentivized to continue and encounter $A$. Thus, negative sorting must overcome the search cost in order to sell $A$ when $B$ is prominent. 

These observations motivate the following corollary, which follows from the inspection of Figure \ref{fig:search friction}.

\begin{figure}
\begin{minipage}{0.5\textwidth}
\begin{tikzpicture}[scale=0.85,  >=Latex]

\draw[->,line width=0.5pt] (0,-1) -- (0,5);
\fill (7,-1) node[right] {\footnotesize{$\mu$}};
\draw[->,line width=0.5pt] (0,-1) -- (7,-1);

\draw[line width=0.5pt, dashed] (1.33,0.85) -- (4,2.5);
\draw[line width=0.5pt, dashed] (0,-1) -- (1.33,0.85);

\draw[line width=1pt,violet] (0,0)--(1.33,0.85)--(3,3.18);

\draw[line width=1pt,blue] (0,0)--(3,1);
\draw[line width=0.5pt,blue,dashed] (3,1)--(3,3.18);
\draw[line width=1pt,violet] (3,3.18)--(7,4.5);
\draw[line width=0.5pt,black,dotted] (3,-1)--(3,1);
\fill (6,5) node[right] {\footnotesize{${v}_A(\cdot)$}};
\fill (4,-1) node[below] {\footnotesize{$\theta_B$}};
\fill (0,0) node[left] {\footnotesize{$\mu_{A}^{-}$}};
\fill (3,3.5) node[above] {\footnotesize{$\mu_{A}^{+}$}};
\fill (4,2.5) node[above] {\footnotesize{$\mu_{B}^{-}$}};

\draw[line width=0.5pt,black,dotted] (7,4.5)--(0,4.5);
\fill (0,4.5) node[left] {\footnotesize{$1+f$}};
\fill (0,3.5) node[left] {\footnotesize{$1$}};
\fill (0,-1) node[left] {\footnotesize{$\mu_{B}^{+}$}};
\draw[line width=0.5pt,black,dotted] (0,3.5)--(7,3.5);

\fill (3,-1) node[below] {\footnotesize{$\theta_A$}};

\draw[line width=1pt,red] (0,-1)--(4,-1);
\draw[line width=1pt,red] (4,2.5)--(7,3.5);
\draw[line width=0.5pt,red,dashed] (4,-1)--(4,2.5);
\fill (6,2.7) node[right] {\footnotesize{${v}_B(\cdot)$}};

\draw[->,line width=0.5pt] (3-0.4,-1.3) -- (2.3-0.4,-1.3);
\draw[->,line width=0.5pt] (4.4,-1.3)--(5.1,-1.3);

\filldraw[color=red, fill=white] (4,2.5) circle (3pt);
\filldraw[color=red, fill=white] (0,-1) circle (3pt);
\filldraw[color=blue, fill=white] (3,3.18) circle (3pt);
\filldraw[color=blue, fill=white] (0,0) circle (3pt);

\draw[line width=0.5pt,blue,dotted] (3,1)--(7,2.33);
\draw[line width=0.5pt,red,dotted] (4,2.5)--(0,1.17);
\draw[line width=0.5pt,blue,dotted] (4,3.5)--(0,2.17);
\end{tikzpicture}
\end{minipage}
\begin{minipage}{0.5\textwidth}
\begin{tikzpicture}[scale=0.85,  >=Latex]

\draw[->,line width=0.5pt] (0,-1) -- (0,5);
\fill (7,-1) node[right] {\footnotesize{$\mu$}};
\draw[->,line width=0.5pt] (0,-1) -- (7,-1);

\draw[line width=0.5pt, dashed] (0,-1) -- (2,2.85);
\draw[line width=0.5pt, dashed] (0,0) -- (4.5,2.66);

\draw[line width=1pt,violet] (0,0)--(0.73,0.44)--(2,2.85);

\draw[line width=1pt,blue] (0,0)--(2,0.66);
\draw[line width=0.5pt,blue,dashed] (2,0.66)--(2,2.85);
\draw[line width=1pt,violet] (2,2.85)--(7,4.5);
\draw[line width=0.5pt,black,dotted] (2,-1)--(2,0.66);
\fill (6,5) node[right] {\footnotesize{${v}_A(\cdot)$}};
\fill (4.5,-1) node[below] {\footnotesize{$\theta_B$}};
\fill (0,0) node[left] {\footnotesize{$\mu_{A}^{-}$}};
\fill (1.7,2.8) node[above] {\footnotesize{$\mu_{A}^{+}$}};
\fill (4,2.5) node[above] {\footnotesize{$\mu_{B}^{-}$}};

\draw[line width=0.5pt,black,dotted] (7,4.5)--(0,4.5);
\fill (0,4.5) node[left] {\footnotesize{$1+f$}};
\fill (0,3.5) node[left] {\footnotesize{$1$}};
\fill (0,-1) node[left] {\footnotesize{$\mu_{B}^{+}$}};
\draw[line width=0.5pt,black,dotted] (0,3.5)--(7,3.5);

\fill (2,-1) node[below] {\footnotesize{$\theta_A$}};

\draw[line width=1pt,red] (0,-1)--(4.5,-1);
\draw[line width=1pt,red] (4.5,2.66)--(7,3.5);
\draw[line width=0.5pt,red,dashed] (4.5,-1)--(4.5,2.66);
\fill (6,2.7) node[right] {\footnotesize{$v_B(\cdot)$}};

\draw[line width=0.5pt,blue,dotted] (0,0)--(7,2.33);
\draw[line width=0.5pt,red,dotted] (4.5,2.66)--(0,1.17);
\draw[line width=0.5pt,blue,dotted] (4,3.5)--(0,2.17);

\filldraw[color=red, fill=white] (4.5,2.66) circle (3pt);
\filldraw[color=red, fill=white] (0,-1) circle (3pt);
\filldraw[color=blue, fill=white] (2,2.85) circle (3pt);

\filldraw[color=blue, fill=white] (0,0) circle (3pt);

\end{tikzpicture}
\end{minipage}
\caption{\label{fig:search friction}\textit{Corollary \ref{cor:search friction}}. An increase in $c$ shifts $\theta_A$ left and $\theta_B$ right. The profit of the positive algorithm increases, while the profit of the  negative algorithm decreases.}
\end{figure}

\begin{corollary}[\textbf{Search Friction}]\label{cor:search friction}
When the search cost $c$ increases, the platform's profit from positive sorting increases and the expected profit from negative sorting decreases.
\end{corollary}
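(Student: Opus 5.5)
The plan is to compute both profits in closed form with Lemma \ref{lem:transparent profit} and then differentiate in $c$, using the comparative statics of the thresholds from Lemma \ref{lem:searcher strategy}: $\theta_A=\out-c$ falls in $c$ and $\theta_B=\frac{\out(1-r)+c}{1-r\out}$ rises in $c$. Throughout, fix a prior $\prior\in(0,\theta_A)$, so that both sorting algorithms are defined.

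\textbf{Positive sorting.} The induced posteriors are $(\theta_A,0)$, and the law of iterated expectations gives $\tau_A=\prior/\theta_A$. Lemma \ref{lem:transparent profit} gives $v_B(0)=0$ (since $0<\theta_B$) and $v_A(\theta_A)=f+r\theta_A+1-r$ (favorable tie-breaking at $\theta_A$). Hence the profit is
\[\frac{\prior}{\theta_A}\left(f+r\theta_A+1-r\right)=\prior\left(r+\frac{f+1-r}{\theta_A}\right).\]
This is strictly decreasing in $\theta_A$, because $f+1-r>0$. Since $\theta_A$ decreases in $c$, the profit strictly increases in $c$.

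An increase in $c$ shrinks the domain $(0,\theta_A)$. The comparison is therefore meant for priors that remain below the new threshold, which I will note explicitly.

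\textbf{Negative sorting.} The induced posteriors are $(0,\theta_B)$, so $\tau_B=\prior/\theta_B$ and $\tau_A=1-\prior/\theta_B$. Lemma \ref{lem:transparent profit} gives $v_A(0)=f$ (since $0<\theta_A$) and $v_B(\theta_B)=r\theta_B+1-r$. Hence the profit is
\[\left(1-\frac{\prior}{\theta_B}\right)f+\frac{\prior}{\theta_B}\left(r\theta_B+1-r\right)=f+r\prior+\frac{\prior\,(1-r-f)}{\theta_B}.\]
By regularity condition (ii), $1-r-f>0$, so this expression is strictly decreasing in $\theta_B$. Since $\theta_B$ increases in $c$, the profit strictly decreases in $c$.

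\textbf{Main obstacle.} The only delicate step is the sign in the negative-sorting case. It hinges on regularity condition (ii), which is exactly the condition that the line through $(\theta_B,v_B(\theta_B))$ lies above $v_A(0)=f$ at $\mu=0$. I would also verify that the tie-breaking conventions at the thresholds are those of Lemma \ref{lem:transparent profit}, so that the formulas hold exactly at $\mu_A=\theta_A$ and $\mu_B=\theta_B$. Geometrically, both results say the profit is the height at $\prior$ of a chord whose interior endpoint moves along a fixed line as $c$ changes (Figure \ref{fig:search friction}). For positive sorting the endpoint $(\theta_A,v_A(\theta_A))$ moves left, so the chord from the origin becomes steeper. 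For negative sorting the endpoint $(\theta_B,v_B(\theta_B))$ moves right, so the chord from $(0,f)$ becomes flatter, given $f<1-r$.
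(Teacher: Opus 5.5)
Your proposal is correct and is essentially the paper's argument. The paper reads the result off Figure~\ref{fig:search friction}: as $c$ rises, $\theta_A$ falls, so the positive-sorting chord rotates counterclockwise about $(0,0)$, and $\theta_B$ rises, so the negative-sorting chord rotates clockwise about $(0,f)$. Your closed forms $\prior\bigl(r+\frac{f+1-r}{\theta_A}\bigr)$ and $f+r\prior+\frac{\prior(1-r-f)}{\theta_B}$ make that picture exact, and you correctly identify regularity condition (ii) as what fixes the sign in the negative-sorting case, which the paper's figure argument uses only implicitly.
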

As illustrated in Figure \ref{fig:search friction}, an increase in $c$ shifts $\theta_A$ left, and shifts $\theta_B$ right. Thus the line representing the profit of positive algorithm rotates counterclockwise around (0,0), while the profit of negative sorting rotates clockwise around $(0,f)$.

\paragraph{Transparency and Welfare.} We analyze the implications of transparency for welfare. While it may seem plausible, it should not be taken for granted that transparency enhances consumer welfare: under transparency, the consumer understands how prominence arises and what it means, which allows the platform to shape the consumer's inferences. The platform benefits, but the implications for the consumer are not immediately apparent.

Our approach to finding the consumer's equilibrium payoff is similar to our construction of the platform's optimal algorithm and profit. Let $\mu=\Pr(\omega=1)$ be the consumer's belief at the end of stage (ii), and let $i\in\{A,B\}$ be the prominent product. Given her high search cost, it is optimal for the \captive{} to purchase the prominent product. Therefore given belief $\mu$ and prominent product $i$, the \captive{}'s expected payoffs are 
\[u^{cap}_A(\mu)=\mu\quad\quad u^{cap}_B(\mu)=\out.\]
The \searcher{}'s equilibrium strategy is derived in Lemma \ref{lem:searcher strategy} by comparing the expected payoff of continuing and stopping at each combination of product prominence and the stage (iii) belief. The \searcher{}'s stage (iii) payoff under optimal search is the larger of these, i.e., the upper envelope of the payoffs illustrated in Figure \ref{fig:searcher strategy},
\begin{align*}w_A(\mu)=\max\{\out-c,\mu\}\quad\quad w_B(\mu)=\max\{\out, r(\mu+(1-\mu)\out)+(1-r)\mu-c\}.\end{align*}
Now consider the \searcher{}'s payoff in stage (ii), when product $i$ is prominent and the stage (ii) posterior belief is $\mu$. If $B$ is prominent, then no learning occurs in stage (iii), and the \searcher{}'s payoff is identical in stages (ii) and (iii). If $A$ is prominent, however, the signal realization is observed in stage (iii). To compute the \searcher{}'s expected payoff as a function of the belief at the end of stage (ii) when $A$ is prominent, we must account for the possibility of learning. Thus, the \searcher{}'s stage (ii) payoff functions $u_i(\cdot)$ are, 
 \begin{align*}u^{se}_A(\mu)=r(\mu+(1-\mu)(\out-c))+(1-r)w_A(\mu)\quad\quad u^{se}_B(\mu)=w_B(\mu)
 \end{align*}

To find the equilibrium payoff of consumer type $j\in\{cap,se\}$, we exploit the same observation as for the platform: for any algorithm that induces beliefs $\{\mu_A,\mu_B\}$, 
\[(\prior,\mathbb{E}_\tau[u_i^{j}(\mu_i)])=\tau_A(\mu_A,u_A^{j}(\mu_A))+\tau_B(\mu_B,u_B^{j}(\mu_B)).\]
Thus, the type $j$ consumer's profit from an algorithm can be found in a similar manner: join points $(\mu_A,u_A^j(\mu_A))$ and $(\mu_B,u_B^j(\mu_B))$ with a line segment, and evaluate it at $\prior$. The construction of each type's equilibrium profit is illustrated in Figure \ref{fig:transparency and welfare} and described in detail in the caption. In the left panel, the purple curve is the equilibrium payoff of   the \captive{}; in the right panel, the \searcher{}.

\begin{figure}
\centering

\begin{minipage}{0.48\textwidth}
\centering
\begin{tikzpicture}[
    scale=8.6,
    >=Latex,
    every node/.style={font=\footnotesize}
]

\pgfmathsetmacro{\u}{0.5}
\pgfmathsetmacro{\r}{0.6}
\pgfmathsetmacro{\c}{0.2}

\def\thetaA{\u-\c} %
\pgfmathsetmacro{\thetaB}{(\u*(1-\r)+\c)/(1-\r*\u)}
\pgfmathsetmacro{\off}{0.1}

\draw[->, line width=0.5pt] (0,0) -- (\thetaB+\off,0)
    node[below right] {{$\mu$}};

\draw[->, line width=0.5pt] (0,0) -- (0,\thetaB+\off);

\draw[blue, line width=1pt] (0,0) -- (\thetaB+\off,\thetaB+\off);
\draw[red, line width=1pt] (0,\u) -- (\thetaB+\off,\u);

 \draw[dotted, line width=0.5pt]
       (\u,\u) --(\u,0);
     \draw[dotted, line width=0.5pt]
    (\thetaA,\thetaA) --(\thetaA,0);
\draw[dotted, line width=0.5pt]
    (\thetaB,\u) --(\thetaB,0);

    \draw[violet, line width=1pt] (0,0) -- (0.2,0.175)--(0.2,0.37)--(\thetaA,\thetaA);

 \node[left=4pt] at (\thetaB+\off,\thetaB+\off) {{$u^{cap}_A(\cdot)$}};
 \node[left=4pt] at (0.4,0.55) {{$u^{cap}_B(\cdot)$}};

  \node[below=4pt] at (\u,0) {{$\out$}};
 \node[left=4pt] at (0,\u) {{$\out$}};
\node[below=4pt] at (\thetaA,0) {{$\theta_A$}};
\node[below=4pt] at (\thetaB,0) {{$\theta_B$}};

\draw[line width=0.5pt, dashed] (0,\u) --(\thetaA,\thetaA);

\draw[line width=0.5pt, dashed] (0,0) --(\thetaB,\u);

\filldraw[color=red, fill=white] (0,\u) circle (0.5pt);
\node[right=4pt] at (0,\u+0.05)  {{$\mu_B^+$}};
\filldraw[color=blue, fill=white] (\thetaA,\thetaA) circle (0.5pt);
\node[above=4pt] at (\thetaA,\thetaA) {{$\mu_A^+$}};

\filldraw[color=blue, fill=white] (0,0) circle (0.5pt);
\node[left=4pt] at (0,0) {{$\mu_A^-$}};
\filldraw[color=red, fill=white] (\thetaB,\u) circle (0.5pt);
\node[right=4pt] at (\thetaB-0.03,\u-0.05) {{$\mu_B^-$}};

\draw[dotted, line width=0.5pt] (0.2,0)--(0.2,0.155);
\node[below=4pt] at (0.2,0.01) {{$\widehat{\mu}$}};

\end{tikzpicture}
\end{minipage}
\hfill
\begin{minipage}{0.48\textwidth}
\centering
\begin{tikzpicture}[
    scale=8.6,
    >=Latex,
    every node/.style={font=\footnotesize}
]

\pgfmathsetmacro{\u}{0.5}
\pgfmathsetmacro{\r}{0.6}
 \pgfmathsetmacro{\c}{0.2}
 \pgfmathsetmacro{\off}{0.1}

\pgfmathsetmacro{\uc}{\u-\c}
\pgfmathsetmacro{\thetaA}{\u-\c}
 \pgfmathsetmacro{\x}{\r*(\thetaA+(1-\thetaA)*\u)+(1-\r)*\thetaA-\c}
\pgfmathsetmacro{\ymid}{\u + \r*\u*(1-\u) - \c}
\pgfmathsetmacro{\thetaB}{(\u*(1-\r)+\c)/(1-\r*\u)}
\pgfmathsetmacro{\UAkink}{\thetaA + \r*\thetaA*(1-\thetaA)}
\pgfmathsetmacro{\yr}{\r*(\thetaB+\off+(1-\thetaB-\off)*\u)+(1-\r)*(\thetaB+\off)-\c}

\draw[line width=1pt, violet] (0,\thetaA) --(0.2,0.37)--(0.2,0.45)--(\thetaA,\UAkink);

\node[left=4pt] at (\thetaB+\off,\thetaB+\off) {{$u^{se}_A(\cdot)$}};
\node[left=4pt] at (0.4,0.55) {{$u^{se}_B(\cdot)$}};

\draw[->, line width=0.5pt] (0,0) -- (\thetaB+\off,0)
    node[below right] {{$\mu$}};

\draw[->, line width=0.5pt] (0,0) -- (0,\thetaB+\off);

\draw[red, line width=1pt] (0,\u) -- (\thetaB,\u);

\draw[red, line width=1pt]
    (\thetaB,\u) -- (\thetaB+\off,\yr);

 \draw[dotted, line width=0.5pt]
    (\thetaB,0) -- (\thetaB,\u);

\node[left=4pt] at (0,\u) {{$\out$}};
\node[left=4pt] at (0,\uc) {{$\out-c$}};

\node[below=4pt] at (\thetaB,0) {{$\theta_B$}};

\draw[blue, line width=1pt] (0,\thetaA) -- (\thetaA,\UAkink);
\draw[blue, line width=1pt] (\thetaA,\UAkink) -- (\thetaB+\off,\thetaB+\off);
\draw[dotted, line width=0.5pt]
    (\thetaA,0) -- (\thetaA,\UAkink);

\node[left=4pt] at (0,\thetaA) {\footnotesize{$\out-c$}};
\node[below=4pt] at (\thetaA,0) {{$\theta_A$}};
\node[left=4pt] at (0,\u) {{$\out$}};

\draw[dotted, line width=0.5pt] (0.2,0)--(0.2,0.45);
\node[below=4pt] at (0.2,0+0.01) {{$\widehat{\mu}$}};

\draw[line width=0.5pt, dashed] (\thetaB,\u) --(0,\thetaA);

\draw[line width=0.5pt, dashed] (0,\u) --(\thetaA,\UAkink);

\filldraw[color=red, fill=white] (0,\u) circle (0.5pt);
\node[right=4pt] at (0,\u+0.05) {{$\mu_B^+$}};
\filldraw[color=blue, fill=white] (\thetaA,\UAkink) circle (0.5pt);
\node[right=4pt] at (\thetaA-0.02,\UAkink-0.05) {{$\mu_A^+$}};

\filldraw[color=blue, fill=white] (0,\u-\c) circle (0.5pt);
\node[right=4pt] at (0,\u-\c-0.05) {{$\mu_A^-$}};
\filldraw[color=red, fill=white] (\thetaB,\u) circle (0.5pt);
\node[right=4pt] at (\thetaB-0.03,\u-0.05) {{$\mu_B^-$}};
\end{tikzpicture}
\end{minipage}
\caption{\label{fig:transparency and welfare}\textit{Proposition \ref{prop:transparency and welfare}}. For visual clarity, the domain in both panels extends just past $\theta_B$ (not to $1$).  Left panel: the \captive{}'s equilibrium welfare. Payoff $u^{cap}_A(\cdot)$ is red, $u^{cap}_B(\cdot)$ blue and posteriors under positive prominence are labeled $\{\mu_A^+,\mu_B^+\}$ and joined together by the dashed line, representing payoff from positive prominence. Posteriors under negative prominence, labeled $\{\mu_A^-,\mu_B^-\}$, are also joined together by the dashed line. For prior beliefs below $\theta_A$, the purple curve represents the \captive{}'s equilibrium payoff. At high priors, the payoff under transparency coincides with $u^{cap}_A(\cdot)$. Right panel: the analogous construction for the \searcher{}. As drawn, the \searcher{}'s payoff with negative sorting is lower than $u_A^{se}(\cdot)$, but this ranking depends on parameters.}
\label{fig:search}
\end{figure}

To understand the impact of transparency on the consumer, we focus on $\prior<\theta_A$, and compare each type's equilibrium payoff under transparency with her payoff in the unique equilibrium under opacity, in which $A$ is always prominent. Such an equilibrium delivers payoff $u^j_A(\prior)$ to the type $j\in\{cap,se\}$ consumer. 

Examining Figure \ref{fig:transparency and welfare} reveals that the \captive{}'s equilibrium payoff under transparency is strictly smaller than $u^{cap}_A(\cdot)$ at low priors ($\prior<\widehat{\mu}$) where the platform uses negative sorting. Simultaneously, it is strictly larger at moderate priors ($\prior\in(\widehat{\mu},\theta_A)$) where the platform uses the positive sorting. Both of these hold generally: the first follows from $\theta_B>\out$, the second from $c>0$. 

For the \searcher{}, the argument is nearly identical for moderate priors. As is evident from the right panel of Figure \ref{fig:transparency and welfare}, the \searcher{}'s payoff from positive sorting exceeds $u_A^{se}(\cdot)$, and thus transparency is beneficial for moderate priors ($\mu\in(\widehat{\mu},\theta)$). However, the analysis of negative sorting is more involved. In particular, the value  $u_A^{se}(\theta_A)$ depends on parameters, and in the proof of Proposition \ref{prop:transparency and welfare} we show that it can lie above or below the payoff of negative sorting, 
so that either ranking is possible for the \searcher{}.

\begin{proposition}[\textbf{Transparency and Welfare}]
\label{prop:transparency and welfare}
Compare the payoffs of the platform, the \captive{}, and the \searcher{} in the unique  equilibrium under transparency with their payoffs in the platform-preferred equilibrium under opacity (\(A\)-always).
\begin{itemize}
\item At low priors, \(\prior\in(0,\widehat{\mu})\), the platform is strictly better off under transparency, while the \captive{} is strictly worse off. The \searcher{} may be better or worse off depending on parameters.

\item At moderate priors, \(\prior\in(\widehat{\mu},\theta_A)\), the platform, the \captive{}, and the \searcher{} are all strictly better off under transparency.

\end{itemize}
\end{proposition}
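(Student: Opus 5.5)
Everything reduces to comparing affine functions of the prior on $[0,\theta_A]$, via the splitting representation $(\prior,\mathbb{E}_\tau[u_i^j(\mu_i)])=\tau_A(\mu_A,u_A^j(\mu_A))+\tau_B(\mu_B,u_B^j(\mu_B))$. By Proposition \ref{prop:equilibrium under opacity}, the platform-preferred opaque equilibrium is $A$-always, which gives type $j$ the payoff $u_A^j(\prior)$ and gives the platform $v_A(\prior)=f+r\prior$ for $\prior<\theta_A$. By Proposition \ref{prop:equilibrium under transparency}, transparency yields negative sorting on $(0,\widehat{\mu})$ and positive sorting on $(\widehat{\mu},\theta_A)$. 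Each payoff under transparency is therefore the chord joining the two induced posteriors, evaluated at $\prior$. It then suffices to compare these chords with the opaque payoffs, which are themselves linear on $[0,\theta_A)$.

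\emph{Platform.} The opaque profit $f+r\prior$ is strictly below $v^*(\prior)$ for $\prior<\theta_A$. To see this, note that the positive-sorting chord runs from $(0,v_B(0))=(0,0)$ to $(\theta_A,f+r\theta_A+1-r)$. The negative-sorting chord runs from $(0,f)$ to $(\theta_B,r\theta_B+1-r)$, and regularity (ii) gives $r\theta_B+1-r>f+r\theta_B$. Hence the negative chord strictly exceeds $f+r\mu$ for every $\mu>0$, and the positive chord strictly exceeds it at $\theta_A$. Since $v^*$ is the maximum of the two chords, it is strictly above $f+r\prior$ on $(0,\theta_A)$.

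\emph{Captive.} The opaque payoff is $u_A^{cap}(\prior)=\prior$. Under positive sorting the chord joins $(0,\out)$ to $(\theta_A,\theta_A)$. It agrees with the identity at $\theta_A$ and lies strictly above it at $0$, because $\out>0$ (and indeed $\out>\theta_A$ since $c>0$). It is therefore strictly above the identity on $[0,\theta_A)$. Under negative sorting the chord joins $(0,0)$ to $(\theta_B,\out)$. Its slope $\out/\theta_B$ is less than $1$ because $\theta_B>\out$, so the chord is strictly below the identity for $\prior>0$.

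\emph{Searcher, moderate priors.} The opaque payoff is $u_A^{se}(\prior)=r\prior+(1-r\prior)(\out-c)$, which is linear on $[0,\theta_A]$. The positive-sorting chord joins $(0,u_B^{se}(0))=(0,\out)$ to $(\theta_A,u_A^{se}(\theta_A))$. It coincides with $u_A^{se}$ at $\theta_A$ and satisfies $\out>\out-c=u_A^{se}(0)$ at $0$, so it is strictly above $u_A^{se}$ on $(\widehat{\mu},\theta_A)$.

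\emph{Searcher, low priors.} This is the step I expect to be the main obstacle, since no generic sign is available and the claim is that either ranking can occur. The negative-sorting chord joins $(0,\out-c)$ to $(\theta_B,\out)$, because $u_B^{se}(\theta_B)=\out$ by indifference. Both this chord and $u_A^{se}$ start at $\out-c$ when $\prior=0$, so only their slopes need to be compared. The chord has slope $c/\theta_B$, while $u_A^{se}$ has slope $r(1-\out+c)$. I would exhibit one parameter set in which each inequality holds while conditions (i)--(iii) are satisfied. For the case where the searcher is worse off under transparency, take $r$ moderate and $c$ small: then $c/\theta_B\to0$ while $r(1-\out+c)$ stays bounded away from zero. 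Condition (iii) restricts $c$ from below, so I would choose $r$ appropriately small as well; with $c\approx r\out(1-\out)$, the comparison becomes $\out(1-\out)/\theta_B$ versus $1-\out$, i.e., $\out$ versus $\theta_B$, and this favors opacity. For the case where the searcher is better off, take $r$ small and $c$ near $\min\{\out,1-\out\}$, so that $c/\theta_B$ is large relative to $r(1-\out+c)$. Finally, I would check that $\widehat{\mu}>0$ in both examples, so that the low-prior region is nonempty; this is guaranteed by Proposition \ref{prop:equilibrium under transparency}.
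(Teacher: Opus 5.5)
\textbf{Verdict: one flawed step.} Your route is the paper's own, and most of it is correct. You represent each transparent payoff as the chord through the induced posteriors and use linearity of $u_A^j$ on $[0,\theta_A]$. For the \searcher{} at low priors, you reduce to the slope comparison $c/\theta_B$ versus $r(1-\out+c)$, which is exactly the paper's condition $Q(r)>0$. The platform, \captive{}, moderate-prior \searcher{}, and small-$r$ ``better off'' arguments are all correct.

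The gap is in your justification of the ``worse off'' example. You let $c\downarrow r\out(1-\out)$, drop the $c$ in $1-\out+c$, and attribute the strict inequality to $\theta_B>\out$. But condition (iii) is precisely equivalent to $\theta_B>\out$. So at $c=r\out(1-\out)$ you have $\theta_B=\out$, and that margin vanishes at the very point you are approaching. In fact, the comparison you reduce to has the opposite sign. A direct computation gives
\[
\frac{c}{\theta_B}-r(1-\out)=\frac{(1-r)(c-r\out(1-\out))}{\out(1-r)+c},
\]
which is strictly positive whenever (iii) holds. Thus ``$c/\theta_B$ versus $r(1-\out)$'' always favors transparency. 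Your ``$\out$ versus $\theta_B$'' only seems to favor opacity because you replaced $c/r$ by its limit $\out(1-\out)$ while keeping $\theta_B$ strictly above its limit $\out$. As written, the step does not establish the claim.

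\textbf{The fix.} The strict inequality comes entirely from the term $rc$ that you discarded. At the boundary $c=r\out(1-\out)$ you get $c/\theta_B=r(1-\out)$, while $r(1-\out+c)=r(1-\out)(1+r\out)$. The gap is $r^2\out(1-\out)>0$. By continuity, $c/\theta_B<r(1-\out+c)$ still holds for $c$ slightly above $r\out(1-\out)$. There all regularity conditions hold:
\begin{itemize}
\item (iii) holds because $c>r\out(1-\out)$;
\item (i) holds automatically, since $r\out(1-\out)<\min\{\out,1-\out\}$;
\item (ii) holds once you take $f<1-r$.
\end{itemize}
This is the paper's evaluation $Q(c/(\out(1-\out)))<0$, read as a statement about $c$ rather than $r$. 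No smallness of $r$ is needed. Your opening suggestion ($r$ moderate, $c\to 0$) is ruled out by (iii), as you already noticed.
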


To see the broad idea, note that positive sorting deters search, and both types of consumer buy the prominent product with high probability. Furthermore, with positive sorting prominence conveys good news about product fit. Thus, the consumer is steered toward the product that is the better match. In particular, Figure \ref{fig:transparency and welfare} reveals that the gains from positive sorting for both types arise when $B$ is prominent and revealed to be the better match, which allows it to be purchased without incurring a search cost. In other words, if the payoff when $B$ is prominent were evaluated on the blue curve instead of red, both types would be indifferent between opacity and transparency at moderate priors.

In contrast, negative sorting encourages the \searcher{} to continue by conveying bad news about the prominent product. The consumer is therefore steered toward a product that she would rather not purchase immediately, which imposes losses on the \captive{} for whom continued search is prohibitive. In particular, when $B$ is prominent, the \captive{} is forced to buy $B$, even though she strictly prefers $A$ at belief $\mu_B^-=\theta_B$. As can be seen in the left panel of Figure \ref{fig:transparency and welfare}, this distortion is the source of the \captive{}'s welfare harm. If her payoff at $\mu_B^-$  were evaluated on the blue curve, she would be indifferent between transparency and opacity at low priors. For the \searcher{}, the welfare impact of negative sorting is more involved. Unlike the \captive{} who is effectively locked into the prominent product, the \searcher{} is more flexible, which allows her to take advantage of the information revealed by negative sorting. Thus, her ranking depends on the specific frictions that she faces. As we show in the proof, the \searcher{} benefits from transparency if and only if the information friction is large relative to the cost of search ($r<r^*(c,\out)$). In this case, the information generated by negative sorting is valuable enough to offset the search cost. Otherwise, the consumer gains substantial information from encountering $A$, and the additional information generated by the algorithm is of little value.

 \section{The Wrong Kind of Transparency}\label{sec:wrong kind of transparency}

In the previous section, we showed that transparency allows the display algorithm to influence the consumer's product choice both by manipulating the search order, and by using prominence to convey information. As a result, transparency can generate a strict Pareto improvement relative to the platform-preferred equilibrium under opacity. Using the algorithm for both purposes simultaneously, however, may seem to place a substantial burden on the consumer. In particular, the consumer may have to buy a product that she knows is mismatched, or incur a search cost to avoid doing so, because the platform would like to convey information and can only use product prominence to do so. The consumer and platform might both benefit if the platform could transmit information about the new product independently of prominence (directly).

\paragraph{Recommendation and Display Algorithms.} To explore this possibility, consider an algorithm that recommends a product to the consumer conditional on the state. In particular, interpret message $m\in\{a,b\}$ as a recommendation of the  corresponding product. Recommendation algorithm $\Sigma\equiv(\sigma_0,\sigma_1)$, recommends product $A$ ($m=a$) in state $\omega$ with probability $\sigma_\omega$. We denote the beliefs induced by the recommendation algorithm as $(\mu_a,\mu_b)$ with the convention that $\mu_b\leq \mu_a$, i.e., $m$ is a recommendation. Because both the consumer and platform observe the recommendation $m$, the display algorithm can condition on the realized recommendation. Thus, with a recommendation system, the platform designs a pair of display algorithms, $\{\Pi^a,\Pi^b\}$. When product $m$ is recommended by the recommendation algorithm, prominence is determined according to $\Pi^m$. 

To streamline the exposition, we focus on the case where  the recommendation algorithm and the display algorithms $\{\Sigma,\Pi^a,\Pi^b\}$ are jointly designed in stage (i) and are transparent. This specification gives the platform maximum flexibility: it can convey information using recommendations, prominence, or any combination of the two.

\paragraph{Joint Design.} Consider a recommendation algorithm $\Sigma=(\sigma_0,\sigma_1)$, which induces posterior belief $\mu_m$ when recommendation $m\in\{a,b\}$ is transmitted. Conditional on each message $m\in\{a,b\}$ and its induced belief $\mu_m$, the platform designs an observable display algorithm $\Pi^m$, as in Section \ref{sec:transparency}. By implication, conditional on recommendation $m$, the highest profit that the platform can attain is $v^*(\mu_m)$ (see Figure \ref{fig:equilibrium under transparency}). Therefore, the platform designs its recommendation algorithm $\Sigma$, anticipating that profit $v^*(\mu_m)$ for each realization of the posterior belief. As is well known \citep{Kamenica_Gentzkow_2011_AER}, by designing its recommendation algorithm $\Sigma$, the platform can attain any profit in the convex hull of $v^*(\cdot)$, and the highest attainable profit is its concave envelope, i.e., $V^*(\cdot)\equiv con(v^*(\cdot))$.\footnote{The logic is similar to transparent algorithm design, except that the profit functions $v_A(\cdot)$ and $v_B(\cdot)$ are replaced by  $v^*(\cdot)$. In contrast to Section \ref{sec:transparency}, the platform's profit depends on the realization of the recommendation $m$ only through the posterior belief.} The beliefs at which the profit function $v^*(\cdot)$ and its concave envelope $V^*(\cdot)$ coincide are induced by the optimal recommendation system.

\begin{figure}
\begin{minipage}{0.5\textwidth}
\begin{tikzpicture}[scale=0.85,  >=Latex,]

\draw[->, line width=0.5pt] (0,-1) -- (0,5);
\fill (7,-1) node[right] {\footnotesize{$\mu$}};
\draw[->, line width=0.5pt] (0,-1) -- (7,-1);

\draw[line width=1pt,blue] (0,0)--(3,1);
\draw[line width=1pt,blue, dashed] (3,1)--(3,3.18);
\draw[line width=1pt,blue] (3,3.18)--(7,4.5);
 \draw[line width=0.5pt,black,dotted] (3,-1)--(3,1);
\fill (0.6,2) node[right] {\footnotesize{$v^*(\cdot)$}};
\fill (4,-1) node[below] {\footnotesize{$\theta_B$}};

\fill (3,-1) node[below] {\footnotesize{$\theta_A$}};

\draw[line width=1pt,red] (0,-1)--(4,-1);

\draw[line width=1pt,red] (4,2.5)--(7,3.5);
\draw[line width=1pt,red, dashed] (4,-1)--(4,2.5);

\filldraw[color=red, fill=red] (4,2.5) circle (3pt);
\filldraw[color=blue, fill=blue] (3,3.18) circle (3pt);

\draw[line width=1pt,violet] (0,0)--(1.33,0.85)--(3,3.18);
\draw[line width=1pt,violet] (3,3.18)--(7,4.5);
\draw[line width=0.5pt, dashed] (1.33,0.85) -- (4,2.5);
\draw[line width=0.5pt, dashed] (0,-1) -- (1.33,0.85);
\fill (0,0) node[left] {\footnotesize{$\mu_{A}^{-}$}};
\fill (3,3.5) node[above] {\footnotesize{$\mu_{A}^{+}$}};
\fill (4,2.5) node[above] {\footnotesize{$\mu_{B}^{-}$}};
\fill (0,-1) node[left] {\footnotesize{$\mu_{B}^{+}$}};
\filldraw[color=red, fill=white] (4,2.5) circle (3pt);
\filldraw[color=red, fill=white] (0,-1) circle (3pt);
\filldraw[color=blue, fill=white] (3,3.18) circle (3pt);
\filldraw[color=blue, fill=white] (0,0) circle (3pt);

\draw[line width=0.5pt, dotted] (1.33,-1) -- (1.33,0.85); 
\fill (1.33,-1) node[below] {\footnotesize{$\hat{\mu}$}};

\end{tikzpicture}
\end{minipage}
\begin{minipage}{0.5\textwidth}
\begin{tikzpicture}[scale=0.85,  >=Latex,]

\draw[->, line width=0.5pt] (0,-1) -- (0,5);
\fill (7,-1) node[right] {\footnotesize{$\mu$}};
\draw[->, line width=0.5pt] (0,-1) -- (7,-1);

\draw[line width=1pt,blue] (0,0)--(3,1);
\draw[line width=1pt,blue, dashed] (3,1)--(3,3.18);
\draw[line width=1pt,blue] (3,3.18)--(7,4.5);
\fill (0.6,2) node[right] {\footnotesize{$V^*(\cdot)$}};
\fill (4,-1) node[below] {\footnotesize{$\theta_B$}};

\fill (3,-1) node[below] {\footnotesize{$\theta_A$}};

\draw[line width=1pt,red] (0,-1)--(4,-1);

\draw[line width=1pt,red] (4,2.5)--(7,3.5);
\draw[line width=1pt,red, dashed] (4,-1)--(4,2.5);

\filldraw[color=red, fill=red] (4,2.5) circle (3pt);
\filldraw[color=blue, fill=blue] (3,3.18) circle (3pt);

\draw[line width=0.5pt,black,dotted] (3,-1)--(3,1);

\draw[line width=1pt,violet] (0,0)--(1.33,0.85)--(3,3.18);
\draw[line width=1pt,violet] (3,3.18)--(7,4.5);
\draw[line width=0.5pt, dashed] (1.33,0.85) -- (4,2.5);
\draw[line width=0.5pt, dashed] (0,-1) -- (1.33,0.85);
 \fill (0,0) node[left] {\footnotesize{$\mu_{b}^{*}$}};
 \fill (3,3.3) node[above] {\footnotesize{$\mu_{a}^{*}$}};
\draw[line width=1pt,green!60!black] (0,0)--(3,3.18)--(7,4.5);

\filldraw[color=red, fill=white] (4,2.5) circle (3pt);
\filldraw[color=red, fill=white] (0,-1) circle (3pt);
\filldraw[color=blue, fill=white] (3,3.18) circle (3pt);
\filldraw[color=blue, fill=white] (0,0) circle (3pt);

\draw[line width=0.5pt, dotted] (1.33,-1) -- (1.33,0.85); 
\fill (1.33,-1) node[below] {\footnotesize{$\hat{\mu}$}};

\end{tikzpicture}
\end{minipage}
\caption{\label{fig:joint design} \textit{Proposition \ref{prop:equilibrium joint design}}. Left panel: the highest attainable profit with a transparent display algorithm $v^*(\cdot)$. Right panel: the highest attainable profit under joint design $V^*(\cdot)$, the concave envelope of $v^*(\cdot)$. At low beliefs, the induced posteriors are $\{\mu_a=\theta_A,\mu_b=0\}$.  }
\end{figure}

\paragraph{Equilibrium Joint Design.} The following characterization of the equilibrium recommendation and display algorithms follows from Figure \ref{fig:joint design}. Recall that the platform's highest possible profit is attained under transparency when $\prior\geq \theta_A$. For such priors, joint design cannot possibly be a strict improvement. We therefore focus on $\prior\in(0,\theta_A)$, where the recommendation algorithm could possibly do so.

\begin{proposition}[\textbf{Equilibrium Joint Design}]\label{prop:equilibrium joint design} When the joint algorithm $(\Sigma,\Pi^a,\Pi^b)$ is transparent and $\prior\in(0,\theta_A)$, a unique equilibrium exists. The equilibrium recommendation algorithm $\Sigma$ induces beliefs $\{\mu_a=\theta_A,\,\mu_b=0\}$ and the display algorithms $(\Pi^a,\Pi^b)$ always make $A$ prominent.
\end{proposition}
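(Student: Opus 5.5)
The plan is to characterize the concave envelope $V^*(\cdot)=con(v^*(\cdot))$ on $[0,1]$ explicitly, using the description of $v^*(\cdot)$ from Proposition \ref{prop:equilibrium under transparency}. By that proposition and Lemma \ref{lem:transparent profit}:
\begin{itemize}
\item For $\mu\geq\theta_A$, $v^*(\mu)=v_A(\mu)=f+r\mu+1-r$.
\item At $\mu=0$, $v^*(0)=v_A(0)=f$, which exceeds $v_B(0)=0$.
\item On $(0,\theta_A)$, $v^*$ is the maximum of two chords. The positive sorting chord $L^+$ joins $(0,0)$ and $(\theta_A,v_A(\theta_A))$. The negative sorting chord $L^-$ joins $(0,f)$ and $(\theta_B,v_B(\theta_B))$.
\end{itemize}
The candidate envelope on $[0,\theta_A]$ is the chord $L$ joining $(0,f)$ and $(\theta_A,f+r\theta_A+1-r)$. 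On $[\theta_A,1]$ the candidate is $v_A$ itself.

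\textbf{Step 1: $L$ strictly dominates $v^*$ on $(0,\theta_A)$.} All three chords are linear. $L$ and $L^+$ agree at $\theta_A$, and $L(0)=f>0=L^+(0)$, so $L>L^+$ on $[0,\theta_A)$. $L$ and $L^-$ agree at $0$. At $\theta_A$, a direct computation gives
\[
L^-(\theta_A)-L(\theta_A)=-f\,\tfrac{\theta_A}{\theta_B}-(1-r)\Bigl(1-\tfrac{\theta_A}{\theta_B}\Bigr)<0,
\]
using $\theta_A<\theta_B$ from Lemma \ref{lem:searcher strategy}. Hence $L>L^-$ on $(0,\theta_A]$, and so $L>v^*$ on $(0,\theta_A)$.

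\textbf{Step 2: the function $\tilde V$ equal to $L$ on $[0,\theta_A]$ and to $v_A$ on $[\theta_A,1]$ is concave.} Both pieces are linear and they meet continuously at $\theta_A$. The slope of $L$ is $(r\theta_A+1-r)/\theta_A=r+(1-r)/\theta_A$, which is strictly larger than $r$, the slope of $v_A$. So $\tilde V$ has a strict concave kink at $\theta_A$ and is concave.

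\textbf{Step 3: $\tilde V=V^*$.} $\tilde V$ coincides with $v^*$ at $0$ and on $[\theta_A,1]$. By Step 1 it weakly dominates $v^*$ everywhere, and by Step 2 it is concave. Every value of $\tilde V$ is a convex combination of points on the graph of $v^*$, so $\tilde V=V^*$. By the concavification logic of \citet{Kamenica_Gentzkow_2011_AER}, for $\prior\in(0,\theta_A)$ the optimal recommendation algorithm induces posteriors $\{\mu_b=0,\mu_a=\theta_A\}$.

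\textbf{Step 4: uniqueness of the recommendation algorithm.} This is the main point of care. Consider any other pair of posteriors straddling $\prior$.
\begin{itemize}
\item If some posterior lies in $(0,\theta_A)$, the payoff there is strictly below $L$ by Step 1, so the split yields strictly less than $V^*(\prior)$.
\item If instead a posterior $\mu>\theta_A$ is paired with $0$, the chord from $(0,f)$ to $(\mu,v_A(\mu))$ lies strictly below $L$ at $\prior$, because of the strict kink in Step 2.
\end{itemize}
So $\{0,\theta_A\}$ is the unique optimal split.

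\textbf{Step 5: the display algorithms.} Conditional on $m=a$ the belief is $\theta_A$. By Proposition \ref{prop:equilibrium under transparency}(iii), the unique optimal display algorithm at that belief always makes $A$ prominent. Conditional on $m=b$ the belief is $0$, and the same part of that proposition again gives $A$ always prominent, with $v_A(0)=f>0=v_B(0)$.

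Throughout, favorable tie-breaking by the \searcher{} at $\mu=\theta_A$ is required, exactly as in Lemma \ref{lem:transparent profit}. Otherwise a marginal shift of $\mu_a$ above $\theta_A$ would be a profitable deviation, which pins down equilibrium behavior at the boundary.
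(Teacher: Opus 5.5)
Your proposal is correct and takes the same route as the paper, which reads the result off the concave envelope $V^*=con(v^*)$ in Figure \ref{fig:joint design}: the chord from $(0,f)$ to $(\theta_A,v_A(\theta_A))$, followed by $v_A$ on $[\theta_A,1]$. Your Steps 1--4 make that graphical argument explicit, including the strict-dominance and strict-kink checks that give uniqueness, and Step 5 correctly applies Proposition \ref{prop:equilibrium under transparency}(iii) at the induced beliefs $0$ and $\theta_A$.
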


Proposition \ref{prop:equilibrium joint design} reveals that the platform strictly benefits by decoupling information provision from prominence (for the relevant priors). Indeed, the recommendation algorithm conveys information to the consumer ($\mu_a\neq\prior$ and $\mu_b\neq\prior$), while the display algorithm is uninformative and always makes $A$ prominent. In other words, all information is transmitted through the recommendation algorithm, while the display algorithms  maximally steer the consumer towards $A$. Consequently, the same profit can be attained if the recommendation algorithm is transparent but the display algorithms are opaque. In contrast, it can be shown that if the recommendation algorithm is opaque, then recommendations do not substantively change the equilibrium, regardless of whether the display algorithm is transparent or opaque.%

As is clear from Figure \ref{fig:joint design}, the platform strictly benefits from a transparent recommendation algorithm for the relevant prior beliefs ($\prior\in(0,\theta_A)$). The recommendation algorithm allows the platform to exploit the search friction to the maximum extent while also conveying information to the \searcher{}. Indeed, it can always make $A$ prominent to steer aggressively, while using the recommendation to  persuade the \searcher{} to stop. 

Though the platform benefits when the recommendation algorithm is transparent, both types of consumers might be harmed. When such an algorithm is available, $A$ is always prominent, and the \captive{} always buys it, attaining payoff $u_A^{cap}(\prior)$. Furthermore, the posterior beliefs induced by the recommendation system are $(\mu_a=\theta_A,\mu_b=0)$. Because the \searcher{}'s payoff $u_A^{se}(\cdot)$ is linear below $\theta_A$, her expected payoff is as if $A$ is always prominent and the recommendation is uninformative, $u_A^{se}(\prior)$ (see Figure \ref{fig:transparency and welfare}, right panel). Intuitively, the equilibrium recommendation system never provides information that is sufficiently informative to strictly overturn the search strategy that she would adopt at the prior belief. Thus, for each type of consumer, $j\in\{se,cap\}$, the equilibrium payoff under joint design is $u_A^j(\cdot)$, as in the equilibrium under opacity. Combining these observations with Proposition \ref{prop:transparency and welfare}, we find that transparency of the recommendation algorithm harms the consumer when the prior belief is moderate. When the prior belief is low, it benefits the \captive{}, and may either benefit or harm the \searcher{}.

 \begin{proposition}[\textbf{The Wrong Kind of Transparency}]\label{prop:wrong kind of transparency} Suppose that $\prior\in(0,\theta_A)$ and the display algorithm is transparent. Compare the equilibrium payoffs of the platform, \captive{}, and \searcher{} when a transparent recommendation algorithm exists (Proposition \ref{prop:equilibrium joint design}) and when it does not (Proposition \ref{prop:equilibrium under transparency}). 
\begin{itemize}
\item The platform is strictly better off with transparent recommendations.
\item At low priors ($\prior\in(0,\widehat{\mu}$) the  \captive{} is strictly better off with transparent recommendations. The \searcher{} may be better or worse off depending on parameters.
\item At moderate priors ($\prior\in(\widehat{\mu},\theta_A)$) the \captive{} and \searcher{} are strictly worse off with transparent recommendations.
\end{itemize}
\end{proposition}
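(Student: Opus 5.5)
The plan is to compare payoffs prior by prior for each player, using Proposition \ref{prop:equilibrium joint design} and Proposition \ref{prop:equilibrium under transparency}. Under joint design the equilibrium payoffs are: $V^*(\prior)$ for the platform; $u_A^{cap}(\prior)$ for the \captive{}; and $u_A^{se}(\prior)$ for the \searcher{}. The last two follow because $A$ is always prominent, so the \captive{} buys $A$ at each posterior, and her expected payoff is linear in beliefs. For the \searcher{}, $u_A^{se}(\cdot)$ is linear on $[0,\theta_A]$ and the induced posteriors $\{0,\theta_A\}$ lie in that interval, so averaging returns $u_A^{se}(\prior)$. Without recommendations, the transparent-display equilibrium gives $v^*(\prior)$ to the platform. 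The consumers then receive the payoffs from negative sorting for $\prior<\widehat{\mu}$ and from positive sorting for $\prior\in(\widehat{\mu},\theta_A)$. Once these are in hand, every comparison except the platform's reduces to the one already made in Proposition \ref{prop:transparency and welfare}, which compared exactly these transparent-display payoffs against $u_A^j(\prior)$, the $A$-always payoffs under opacity. I would therefore invoke that proposition and reverse its inequalities.

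Concretely, Proposition \ref{prop:transparency and welfare} says the following. At low priors the \captive{} is strictly worse off under negative sorting than under $u_A^{cap}(\prior)$, so she is strictly better off with recommendations. For the \searcher{} the ranking is ambiguous, and I would cite the same parametric condition, $r$ versus $r^*(c,\out)$, from its proof. At moderate priors both types strictly prefer positive sorting to $u_A^j(\prior)$, so both are strictly worse off with recommendations.

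The platform's claim is the only new piece. I would show that $V^*(\prior)>v^*(\prior)$ on $(0,\theta_A)$. Joint design with $\mu_b=0$ and $\mu_a=\theta_A$, and $A$ always prominent, yields the chord from $(0,v_A(0))=(0,f)$ to $(\theta_A,v_A(\theta_A))=(\theta_A,1+f+r\theta_A-r)$, evaluated at $\prior$. The value is
\[\left(1-\tfrac{\prior}{\theta_A}\right)f+\tfrac{\prior}{\theta_A}\left(1+f-r+r\theta_A\right).\]
I then compare this with each of the two candidates for $v^*$. Positive sorting gives the chord from $(0,v_B(0))=(0,0)$ to the same right endpoint, which is lower by $(1-\prior/\theta_A)f>0$. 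Negative sorting gives the chord from $(0,f)$ to $(\theta_B,v_B(\theta_B))$. The joint-design chord shares the left endpoint $(0,f)$, so it suffices to show it has strictly larger slope. That requires $(1-r+r\theta_A)/\theta_A>(r\theta_B+1-r-f)/\theta_B$, which holds because $\theta_A<\theta_B$ and $f>0$. The left-hand side decreases in the denominator exactly like the right, so the inequality follows by direct rearrangement to $(1-r)/\theta_A>(1-r)/\theta_B-f/\theta_B$.

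The main obstacle is being careful that the consumer payoffs under joint design really collapse to $u_A^j(\prior)$. This needs the linearity of $u_A^{se}$ on $[0,\theta_A]$, including the endpoint $\theta_A$. There, the \searcher{} breaks ties in the platform's favor by stopping, and this must not create a kink that the averaging picks up. I would check that $w_A(\mu)=\max\{\out-c,\mu\}$ is continuous at $\theta_A=\out-c$, so tie-breaking affects the purchase but not the payoff, and linearity holds on the closed interval. The remaining steps are citations of earlier results.
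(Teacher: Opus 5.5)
Your proposal is correct and follows the paper's own argument. Under joint design both consumer types receive $u_A^j(\prior)$ by linearity of $u_A^j$ on $[0,\theta_A]$, so the consumer comparisons are just the reversal of Proposition \ref{prop:transparency and welfare}, and the platform's strict gain is $V^*>v^*$ on $(0,\theta_A)$. The only difference is that you check this strict gain algebraically (joint-design chord against the positive- and negative-sorting chords), and you confirm that tie-breaking at $\theta_A$ does not break linearity, whereas the paper simply reads both facts off Figure \ref{fig:joint design}.
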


Proposition \ref{prop:wrong kind of transparency} gives insight into the implications of different kinds of transparency. When the prior belief is moderate, a transparency requirement that ensures consumers understand how prominence is formed and what it means,  simultaneously increases the welfare of the platform, \searcher{}, and \captive{}. In this case, extending or shifting the transparency requirement to the recommendation algorithm further benefits the platform, while reversing the consumer's welfare gains. In other words, if the transition from an opaque to a transparent display algorithm is beneficial for the consumer, then extending or shifting transparency to the recommendation algorithm is harmful. At low prior beliefs, in contrast, a transparent display algorithm may harm one or both types of consumer. Requiring that recommendations are transparent undoes this harm, while generating even more profit for the platform.

\section{Extensions}\label{sec:extensions}

We consider two extensions of our main model. First, we study how prominence is monetized. Second, we consider more than two more products.

\subsection{Monetizing Prominence}\label{sec:monetizing}
The main model treats the commission structure as given and studies the platform's optimal display algorithm. In this section, we consider how prominence is monetized.

\paragraph{Equilibrium Commission Formation.} We augment the game with an initial stage (before design) in which each firm $i\in\{A,B\}$ simultaneously offers a per-sale commission $k_i$, paid to the platform when its product sells.  For now, we treat the probability of sale in reduced form. In particular, \(p_i^H\) denotes the probability that firm \(i\in\{A,B\}\) sells its product in the continuation game when it offers the higher commission, \(k_i>k_{j}\), and \(p_i^L\) denotes the probability that firm \(i\) sells in the continuation game when it offers the lower commission, \(k_i<k_{j}\).\footnote{If commissions are equal $k_i=k_j$, the platform is indifferent, and it breaks ties in favor of the firm with higher effective valuation ($\eta_i$). The tie-breaking rule does not affect the characterization.} Because the consumer always buys one of the products, $p_i^H+p_j^L=1$. Each firm is more likely to sell when its commission is high than when it is low, $0\leq p_i^L<p_i^H\leq 1$. The difference between these probabilities is the same for both firms.\footnote{We have $p_i^H=1-p_j^L$ and $p_i^L=1-p_j^H$. Hence, $p_i^H-p_i^L=1-p_j^L-(1-p_j^H)=p_j^H-p_j^L$.} Thus, let
\[
\Delta\equiv p_i^H-p_i^L=p_j^H-p_j^L.
\]
To highlight the role of the display algorithm in shaping equilibrium commissions, we focus on the case where both firms have the same margin $m$ per-sale. Thus, if firm $i$ offers commission $k_i$ and sells its product, then its net profit is $m-k_i$.

Given the discontinuity in the probability of sale when a firm's commission switches from being smaller to larger, we expect an equilibrium in mixed strategies. If firm $i$ anticipates that $j$'s commission is realized from cdf $G_j$, then $i$'s expected profit from commission $k$ is 
\[
\pi_i(k)=(m-k)\left(p_i^L+G_{j}(k)\Delta\right).
\]
A higher commission reduces the firm's net surplus on each sale, but increases the probability that its commission is higher, which increases its probability of sale downstream. A firm must pay the commission even if its commission is smaller and its sale probability is smaller. This all-pay element creates some similarities to the standard all-pay auction. Unlike the standard auction, however, a firm's total payment depends both on the commission it offers and on the favorable or unfavorable treatment it receives. This distinction  precludes a payoff-equivalent transformation between settings.

The characterization of equilibrium commissions depends on two parameters which represent each firm's willingness to exchange commissions for favorable treatment,
\[
\eta_i\equiv \frac{m\Delta}{p_i^H}.
\]
To see the interpretation, suppose for a moment that firm $i$ has only two options. Either it sells with the low probability ($p_i^L$) for free, or it offers   commission $\tilde k>0$ and sells with the high probability ($p_i^H$). When $\tilde k=\eta_i$, firm $i$ is indifferent. In other words, if $p_i^L$ is free, then $\eta_i$ is the highest commission that firm $i$ would offer to obtain $p_i^H$ (i.e., $m\Delta=p_i^H\eta_i$). We therefore refer to $\eta_i$ as firm $i$'s \textit{effective valuation}.

The following lemma is proved in the Appendix.

\begin{lemma}[\textbf{Competitive Commissions}]\label{lem:competitive commissions}
 The reduced form of the commission stage has a unique equilibrium, which is in mixed strategies. Suppose that firm $i\in\{A,B\}$ has a higher effective valuation than firm $j\neq i$, i.e., $\eta_i>\eta_j$. Firm $i$ randomizes continuously on \([0,\eta_j]\) with CDF
\[
G_i(k)
=
\left(\frac{m-\eta_j}{\eta_j}\right)\frac{k}{m-k},
\qquad k\in[0,\eta_j].
\]
Firm $j$ bids zero with probability \(1-\eta_j/\eta_i\). With the remaining probability \(\eta_j/\eta_i\), it draws from the same continuous distribution as firm $i$,
\[
G_j(k)
=
1-\frac{\eta_j}{\eta_i}
+
\frac{\eta_j}{\eta_i}G_i(k),
\qquad k\in[0,\eta_j].
\]

 Equilibrium payoffs are
\(
U_i=(m-\eta_j)p_i^H\),
\(U_{j}=mp_j^L\) and \( V=\eta_jp_i^H\)

\end{lemma}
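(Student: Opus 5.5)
The plan has two parts: first verify that the stated strategies form an equilibrium, then show that no other equilibrium exists, following the standard argument for asymmetric all-pay auctions (Baye, Kovenock and de Vries). The payoff $\pi_i(k)=(m-k)(p_i^L+G_j(k)\Delta)$ plays the role of an all-pay auction payoff, with a firm-specific "consolation" sale probability $p_i^L$.

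\textbf{Existence.} Take firm $i$ with $\eta_i>\eta_j$.

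First, check that the proposed $G_i$ makes firm $j$ indifferent on $[0,\eta_j]$. The definition of $\eta_j$ gives $p_j^H\eta_j=m\Delta$, hence
\[
\frac{m-\eta_j}{\eta_j}=\frac{p_j^H}{\Delta}-1=\frac{p_j^L}{\Delta}.
\]
Substituting, $(m-k)\bigl(p_j^L+G_i(k)\Delta\bigr)=(m-k)p_j^L+p_j^Lk=mp_j^L$ for all $k\in[0,\eta_j]$. The same identity shows $G_i(\eta_j)=1$, so $G_i$ is a valid CDF with no atom. Any $k>\eta_j$ gives $j$ strictly less, namely $(m-k)p_j^H<(m-\eta_j)p_j^H=mp_j^L$. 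Hence $U_j=mp_j^L$.

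Second, check firm $i$'s indifference. At the top of the support, $G_j(\eta_j)=1$, so $\pi_i(\eta_j)=(m-\eta_j)p_i^H$. I will show that $(m-k)\bigl(p_i^L+G_j(k)\Delta\bigr)$ is constant on $(0,\eta_j]$ by substituting
\[
G_j(k)=1-\frac{\eta_j}{\eta_i}+\frac{\eta_j}{\eta_i}\cdot\frac{p_j^L}{\Delta}\cdot\frac{k}{m-k}.
\]
The $k$-dependent terms cancel precisely when $\frac{\eta_j}{\eta_i}p_j^L=p_i^L+\bigl(1-\frac{\eta_j}{\eta_i}\bigr)\Delta$. Using $\eta_j/\eta_i=p_i^H/p_j^H$ and the identities $p_i^H=1-p_j^L$, $p_i^L=1-p_j^H$, this condition reduces to an identity. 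At $k=0$, firm $i$ ties with $j$'s atom; the tie-breaking rule favours the higher-$\eta$ firm, so $i$ also earns the constant there. Bids above $\eta_j$ are dominated. Hence $U_i=(m-\eta_j)p_i^H$.

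Third, compute the platform's payoff. Since exactly one product is always sold, the platform's expected commission equals total expected margin minus the firms' net payoffs:
\[
V=m-U_i-U_j=m\bigl(1-p_j^L\bigr)-(m-\eta_j)p_i^H=\eta_jp_i^H.
\]

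\textbf{Uniqueness.} This is the main obstacle, and I would follow the standard chain of lemmas for the all-pay auction, adapted to the term $p^L$.

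\begin{enumerate}
\item \emph{Atoms.} Neither firm has an atom at a positive $k$. If one firm did, the rival would gain by shifting mass from just below that point to just above it, because $G$ jumps and increases the sale probability by a discrete amount at only an infinitesimal cost. Both firms cannot have an atom at $0$, because one of them loses the tie and would gain by moving slightly above $0$.
\item \emph{Common support.} The supports share the same upper bound $\bar k$ and contain no gaps. If there were a gap, a firm bidding at its upper edge could lower its bid without reducing its sale probability.
\item \emph{Lower bound.} The lowest bid is $0$. Otherwise, a firm bidding at the bottom of the support without an atom wins with probability zero and would rather bid $0$.
\item \emph{Pinning down the constants.} The firm without an atom at $0$ earns its payoff at $0^+$, namely $mp^L$ for that firm. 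Continuity at $\bar k$ gives each firm payoff $(m-\bar k)p^H$. Combining these, $\bar k$ equals the effective valuation of the atomless firm. Requiring that the other firm not strictly prefer bidding above $\bar k$ forces the atomless firm to be the one with the smaller $\eta$, that is, firm $j$, so $\bar k=\eta_j$.
\item \emph{Recovering the CDFs.} On $(0,\eta_j]$, each firm's indifference condition determines the rival's CDF. This returns exactly the formulas verified in the existence part.
\end{enumerate}

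The delicate points are the treatment of ties at $0$ and ruling out the case in which the higher-$\eta$ firm is the one with the atom. I would handle both by checking deviations above $\eta_j$ and invoking the tie-breaking rule.
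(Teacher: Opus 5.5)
Your existence verification is correct. It is essentially the paper's Steps 2--3 run in the verification direction. Deriving $V$ from $V=m-U_i-U_j$ is a clean touch that the paper leaves implicit. Your Baye--Kovenock--de Vries chain is also more complete than the paper's one-line uniqueness step.

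The flaw is in your step 4, where the roles of the two firms are reversed. The firm that earns $mp^L$ is not the atomless firm but the firm whose \emph{rival} is atomless at $0$. If $X$ has no atom at $0$ and $Y$ places mass $a_Y\geq 0$ there, then $\pi_X(0^+)=m(p_X^L+a_Y\Delta)$, whereas $U_Y=mp_Y^L$. Your conclusion that the atomless firm is $j$ contradicts the lemma itself, and your own existence check, in which $i$ is atomless and $j$ holds the atom $1-\eta_j/\eta_i>0$. Taken literally, step 4 cannot be completed. If $j$ were atomless, firm $i$'s indifference between $0^+$ and the top of the support would give $mp_i^L=(m-\bar k)p_i^H$, i.e.\ $\bar k=\eta_i$. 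So you reach $\bar k=\eta_j$ only because two misidentifications cancel, and step 5 could not then return the stated $G_j$, which has an atom at $0$. The deviation you invoke also does no work: with no atom at $\bar k>0$, any bid above $\bar k$ is strictly worse than $\bar k$ itself.

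The repair is short. By step 1 some firm $X$ has no atom at $0$; call the other firm $Y$.
\begin{itemize}
\item $U_Y=mp_Y^L=(m-\bar k)p_Y^H$ gives $\bar k=\eta_Y$.
\item $X$'s indifference between $0^+$ and $\bar k$ gives $m(p_X^L+a_Y\Delta)=(m-\eta_Y)p_X^H$, i.e.\ $a_Y=1-\eta_Y/\eta_X$.
\item Nonnegativity of $a_Y$ forces $\eta_Y\leq\eta_X$. With $\eta_i>\eta_j$ this means $Y=j$ and $X=i$.
\end{itemize}
This yields $\bar k=\eta_j$ and pins down the atom $a_j=1-\eta_j/\eta_i$ at the same time. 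Equivalently, if $Y=i$, firm $j$ would earn $(m-\eta_i)p_j^H<mp_j^L$ on its support, while it can secure $mp_j^L$ by bidding $0$. The relevant deviation is downward to $0$, not above $\bar k$. With this correction, your step 5 does deliver the stated CDFs.
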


Equilibrium commissions have certain structural features reminiscent of the standard all-pay auction. In particular, the firm with the higher effective value  offers a commission that is continuously distributed from zero up to the other firm's effective value. The firm with the smaller effective value sometimes offers commission zero, but otherwise mimics its rival's distribution. Furthermore, the probabilities of sale arising from favorable or unfavorable treatment affect equilibrium commissions only through the effective valuations, and rents are similarly dissipated, pinning the profit of the firm with lower effective value to its outside option.\footnote{Unlike the standard auction, a firm can sell with probability $p_i^L$, even if it offers commission zero.} The crucial difference is in the mixing distributions, which are non-linear.

The incentive to offer commissions is determined by the difference in sale probabilities from winning and losing, $\Delta=p_i^H-p_i^L$, which is identical for both firms. This difference strictly increases when either $p_A^H$ or $p_B^H$ increases, and in turn results in a higher expected profit, inclusive of commission formation and sale probabilities, as shown in Lemma \ref{lem:favored treatment}.

\begin{lemma}[\textbf{Favored Treatment and Profit}]\label{lem:favored treatment} In the commission stage, the platform's equilibrium
profit is strictly increasing in each firm's probability of sale under favored treatment $(p_A^H,p_B^H)$.
\end{lemma}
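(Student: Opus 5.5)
The plan is to take the platform's equilibrium profit from Lemma \ref{lem:competitive commissions}, namely $V=\eta_j p_i^H$ with $i$ the firm of higher effective valuation, and write it as an explicit function of $(p_A^H,p_B^H)$. The only tools needed are the identities $p_i^L=1-p_j^H$ and $\Delta=p_A^H+p_B^H-1$. Substituting $\eta_j=m\Delta/p_j^H$ gives
\[
V=m\Delta\,\frac{p_i^H}{p_j^H},
\]
where $i$ is the firm with the larger effective valuation. Since $\eta_k=m\Delta/p_k^H$, the firm with the higher valuation is the one with the \emph{smaller} $p^H$. So $p_i^H=\min\{p_A^H,p_B^H\}$ and $p_j^H=\max\{p_A^H,p_B^H\}$. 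Writing $s=\min\{p_A^H,p_B^H\}$ and $t=\max\{p_A^H,p_B^H\}$, this yields the symmetric closed form
\[
V=m\,(s+t-1)\,\frac{s}{t}.
\]
The case $\eta_A=\eta_B$ should agree by continuity, since both formulas give $m\Delta$ there. I will verify this using the tie-breaking rule in the footnote.

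Next I show that $V$ is strictly increasing in each of $s$ and $t$ on the relevant domain. This domain requires $\Delta>0$, i.e.\ $s+t>1$, and $0<s\le t\le 1$. For $s$, the function $s\mapsto (s+t-1)s/t$ has derivative $(2s+t-1)/t$. This is positive because $2s+t-1>s+t-1>0$. For $t$, write $V=m s\,(1-(1-s)/t)$. Its derivative in $t$ is $m s(1-s)/t^2$, which is positive whenever $0<s<1$. The edge case $s=1$ forces $t=1$, so it does not arise as an interior increase.

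Finally, I translate monotonicity in $(s,t)$ into monotonicity in each $p_k^H$. Raising one $p_k^H$ either raises $s$ or raises $t$. The one subtle configuration is $p_A^H=p_B^H$: there, increasing one coordinate raises $t$ while leaving $s$ fixed, and $V$ still rises. Since $V$ is continuous across the switch of identity, which occurs at equality, it is strictly increasing along any coordinate path. The main obstacle is therefore this bookkeeping of which firm has the higher effective value. The relabeling flips exactly at $p_A^H=p_B^H$, so I must check that the piecewise formula glues continuously and that both branches are increasing. After the symmetric rewrite in $(s,t)$, this reduces to the two derivative signs computed above.
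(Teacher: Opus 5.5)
Your proposal is correct and follows the paper's proof essentially line for line. You write $V=\eta_j p_i^H=m(p_A^H+p_B^H-1)\,p_i^H/p_j^H$, where the high-valuation firm is the one with the smaller $p^H$, and then sign the partials $m(2s+t-1)/t>0$ and $ms(1-s)/t^2>0$; your continuity check at $p_A^H=p_B^H$, where both branches give $m\Delta$, makes explicit what the paper leaves under ``the case where $B$ has higher effective valuation is symmetric.''
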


\paragraph{Commissions and Downstream Algorithms.} After the commissions are set initially, the platform designs its display algorithm. Because the consumer always buys exactly one product, the platform's expected revenue under any continuation algorithm is
\[
k_B+(k_A-k_B)\Pr(A\text{ is sold})=k_A+(k_B-k_A)\Pr(B\text{ is sold}).
\]
It follows that only the difference in commissions  matters for the platform's subsequent design decision. Thus, the platform would like to maximize sales of the product with higher commission. In other words, in equilibrium under transparency, $p^H_i$ is the highest  attainable probability of sale for product $i$ given the prior belief. In turn, Lemma \ref{lem:favored treatment} implies that such a choice of downstream algorithms, not only maximizes the probability that the high commission product is sold \textit{after} commissions are set, but it also maximizes the platform's expected profit at the commission stage, which accounts for the effect of the downstream algorithm on the commissions themselves. 

An immediate implication is that the platform has a higher equilibrium profit if it lets the firms offer commissions and freely selects the algorithm, than if it commits to always assign prominence to the high commission firm. By making such a commitment, the platform effectively sets $p_A^H=(1-h)v_A(\prior)$, which is weakly smaller than the highest $p_A^H$ that can be attained under transparency, $(1-h)v^*(\prior)$ (and the same is true
for $B$). A similar argument also implies that the platform strictly benefits from transparency (and from joint design) whenever it strictly increases the probability of sale for either product.

\begin{proposition}[\textbf{Monetization}]\label{prop:monetization} In equilibrium with endogenous commissions,
\begin{itemize}
\item firms offer commissions as described in Lemma \ref{lem:competitive commissions}, and the platform designs an algorithm that maximizes the probability of sale for the high commission firm. 

\item the platform's profit is weakly higher than if it commits always to assign the prominent slot to the high commission firm, and it is strictly higher if and only if $\prior\notin[\theta_A,\theta_B]$.

\item the platform's profit is weakly higher under transparency than opacity, and it is strictly higher if and only if $\prior\notin[\theta_A,\theta_B]$. 
\end{itemize}
\end{proposition}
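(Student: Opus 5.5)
The plan is to reduce all three bullets to comparisons of the downstream sale probabilities $(p_A^H,p_B^H)$, using Lemma \ref{lem:competitive commissions} and Lemma \ref{lem:favored treatment}. For the first bullet I will argue by backward induction. Once commissions $(k_A,k_B)$ are realized, the platform's continuation revenue is $k_B+(k_A-k_B)\Pr(A\text{ sold})$. So the platform maximizes the probability that the high-commission product sells, and the continuation game is exactly the baseline game with the labels possibly reversed. This fixes $p_i^H$ as the maximal attainable sale probability of firm $i$, and $p_i^L=1-p_j^H$. Given these reduced-form probabilities, Lemma \ref{lem:competitive commissions} pins down the unique commission equilibrium. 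It also gives the platform's profit $V=\eta_j p_i^H=m\Delta\,p_i^H/p_j^H$, which Lemma \ref{lem:favored treatment} shows is strictly increasing in each of $p_A^H$ and $p_B^H$.

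For the second bullet I will compute the two favored-treatment probabilities under each regime and compare them.
\begin{itemize}
\item Under the commitment to make the high-commission firm prominent, $p_A^H=(1-h)v_A(\prior)$ and $p_B^H=1-(1-h)v_B(\prior)$. Here ties are broken in the favored firm's direction; when $B$ is favored, the \searcher{} at $\theta_B$ breaks ties toward stopping.
\item Under the free transparent choice, $p_A^H=(1-h)v^*(\prior)$ from Proposition \ref{prop:equilibrium under transparency}. By Remark \ref{rem:maxB}, $p_B^H=1-(1-h)\underline v(\prior)$, where $\underline v$ is the lower boundary of the attainable set in Figure \ref{fig:equilibrium under transparency}.
\end{itemize}
Since $v^*\ge v_A$ and $\underline v\le v_B$ pointwise, monotonicity in Lemma \ref{lem:favored treatment} gives the weak inequality. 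For strictness, by Proposition \ref{prop:equilibrium under transparency} we have $v^*(\prior)>v_A(\prior)$ exactly when $\prior<\theta_A$, since at $\prior\ge\theta_A$ the $A$-always algorithm is optimal. The symmetric step, which I expect to be the main obstacle, is showing that $\underline v(\prior)<v_B(\prior)$ exactly when $\prior>\theta_B$. On $[\theta_A,\theta_B]$ I will note that $v_B(\prior)=0$ under unfavorable tie-breaking, so no algorithm can lower $A$'s sales below zero. For $\prior>\theta_B$ I will exhibit the mirror images of positive and negative sorting that Remark \ref{rem:maxB} alludes to. Under one of them, $B$'s prominence conveys just enough bad news about $A$ to push $\mu_B$ to $\theta_B$, while $A$'s prominence reveals $\mu_A=1$. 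I will then check by the chord construction that this yields a value strictly below $v_B(\prior)=r\prior+1-r$. If the mirrored construction needs more care, I will fall back on a direct linear-programming argument over pairs $(\mu_A,\mu_B)$ straddling $\prior$. Combining the two regions gives strictness if and only if $\prior\notin[\theta_A,\theta_B]$.

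For the third bullet I will compare with opacity. Transparency attains the maximal sale probabilities, so it weakly dominates any opaque equilibrium by Lemma \ref{lem:favored treatment}. For strictness I will compare against the platform-preferred opaque equilibrium. For the $A$-favored branch, Proposition \ref{prop:equilibrium under opacity} gives the $A$-always outcome, with value $(1-h)v_A(\prior)$. For the $B$-favored branch I will invoke the analogue of that proposition with labels swapped, which yields $B$-always at value $1-(1-h)v_B(\prior)$. Both coincide with the commitment benchmark, so the strictness set is again $\prior\notin[\theta_A,\theta_B]$, exactly as in the second bullet.
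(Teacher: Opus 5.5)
Your overall route is the paper's. Backward induction reduces the continuation to maximizing the high-commission firm's sales, and Lemma \ref{lem:favored treatment} turns comparisons of $(p_A^H,p_B^H)$ into profit comparisons. Strictness is then located where $v^*(\prior)>v_A(\prior)$ (i.e.\ $\prior<\theta_A$) and where the minimal $A$-sale probability falls below $v_B(\prior)$ (i.e.\ $\prior>\theta_B$). The paper only asserts the second fact via Remark \ref{rem:maxB}, so you are right that it is the crux. However, the witness you single out does not establish it. Under $(\mu_A,\mu_B)=(1,\theta_B)$ the normalized $A$-sale probability is $(1+f)\frac{\prior-\theta_B}{1-\theta_B}$. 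As $\prior\to1$ this tends to $1+f$, while $v_B(\prior)=r\prior+1-r$ tends to $1$. So for priors near $1$ this algorithm sells $A$ more often than $B$-always, and your chord check fails there. The witness that works on all of $(\theta_B,1)$ is the mirror of negative sorting, $(\mu_A,\mu_B)=(\theta_A,1)$, with the tie at $\theta_A$ broken toward continuing (or $\mu_A$ slightly below $\theta_A$). Its value is $1-(1-\prior)\frac{1-f-r\theta_A}{1-\theta_A}$, which lies below $r\prior+1-r$ by exactly $(1-\prior)\frac{1-r-f}{1-\theta_A}>0$, using regularity condition (ii). The $(1,\theta_B)$ algorithm is better only near $\theta_B$, which is why the Remark refers to both versions.

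A smaller gap is in the third bullet. Proposition \ref{prop:equilibrium under opacity} cannot be invoked ``with labels swapped,'' because the model is not symmetric in $A$ and $B$: only $A$ is uncertain and generates a signal. You need a direct argument for the $B$-favored branch, and it is short.
\begin{itemize}
\item The platform now picks $\pi_\omega\in\multi\{\Delta_\omega\le 0\}$, with $\Delta_\omega$ as in the proof of Proposition \ref{prop:equilibrium under opacity}.
\item $B$-always is an opaque equilibrium at every prior. If $\prior\le\theta_B$, select $b=0$. If $\prior>\theta_B$, an off-path belief $\mu_A>\theta_A$ gives $a=1$ and $\Delta_1=\Delta_0=f>0$. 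It yields zero $A$-sales when $\prior\le\theta_B$.
\item For $\prior>\theta_B$, $A$-always fails because $\Delta_0=f+(1-r)(1-b)>0$.
\item In an informative equilibrium at $\prior>\theta_B$, $b<1$ would give $\Delta_1>\Delta_0$, hence $\pi_1<\pi_0$ and $\mu_B>\prior>\theta_B$. That forces $b=1$, a contradiction. So $b=1$ and $\Delta_1=\Delta_0$, and informativeness forces $\Delta_1=\Delta_0=0$. The platform is then indifferent, and $A$-sales equal $r\prior+1-r$.
\end{itemize}
Hence every opaque equilibrium yields $v_B(\prior)$ for $\prior>\theta_B$, so strictness there does not depend on equilibrium selection. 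On $(\theta_A,\theta_B]$, by contrast, split equilibria give $r\prior+1-r>0$. Equality there therefore genuinely requires your selection of the platform-preferred opaque equilibrium.
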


\subsection{More Products}\label{sec:more products}

In this section we introduce an additional product, examining how a product's payoff and commission mutually affect the platform's incentive to make it prominent. Throughout this section, we focus on transparent display algorithms, as in Section \ref{sec:transparency}.

\paragraph{A Dominated Product.} Suppose that in addition to products $A$ and $B$, the platform also carries a product $D$, which is worse than $B$ for both the platform and consumer. In particular, the consumer's payoff of product $D$ is known to be $\outI\in(\out-c,\out)$. Thus, product $D$ offers the consumer a smaller payoff than $B$, but it is not so low that the consumer prefers to search for $B$ rather than buy $D$ when $D$ is prominent. To streamline the exposition, we assume that product $D$ also satisfies the regularity conditions. The commission on product $D$ is also smaller than on $B$. Recalling that the commission on $B$ is normalized to zero, the commission on $D$ is $-k_D<0$. Thus, $k_D$ is the magnitude of the commission gap.

\paragraph{Timing.} The timing of the game up to stage (iii) is identical, except that the algorithm can also make $D$ prominent. If the \searcher{} decides not to buy the prominent product during the encounter (stage (iii)), then the continuation is a standard sequential search with perfect recall and search cost $c$, in which the prominent product has already been encountered and can be freely recalled. Thus, if she continues past the prominent product, the consumer decides which products to encounter and in what order. As before, search is prohibitively costly for the \captive{}, who always buys the prominent product.

\paragraph{Search Incentives and Profit.} The addition of product $D$ to the platform's offerings does not change the consumer's search behavior or the platform's expected profit when $A$ or $B$ is prominent. In either case, the payoff of buying $B$ (either $\out-c$ or $\out$) is strictly larger than the payoff of $D$ ($\outI-c$). When $D$ is prominent, however, it can be purchased with no search cost, which must be paid to buy $B$. Consequently $D$ is a better choice than $B$ ($\outI>\out-c$). In other words, when $D$ is prominent, the consumer's choice is between $A$ and $D$, and the consumer's optimal search strategy is analogous to the second point of Lemma \ref{lem:searcher strategy}, with $\outI$ replacing $\out$. Following similar logic to Lemma \ref{lem:transparent profit}, when $D$ is prominent and the common belief is $\mu$, the platform's expected profit (normalized by $1-h$) is 
\[v_D(\mu)=\underbrace{-fk_D}_{\text{\captive{}}}-\underbrace{k_D+(r\mu+1-r)(1+k_D)\mathbbm{1}\{\mu\geq\theta_D\}}_{\text{\searcher{}}}\quad\text{and}\quad\theta_D\equiv\frac{\outI(1-r)+c}{1-r\outI}\in(\theta_A,\theta_B)\]
The consumer always purchases $D$ if she is a \captive{} ($-fk_D$). If she is a \searcher{}, she buys $D$ unless (i) the belief is above threshold $\theta_D$ so she continues, and (ii) her signal realization is favorable or inconclusive. If she purchases $A$, then the commission increases by $1+k_D$ from $-k_D$ to 1. It is important to point out that $\theta_D\in(\theta_A,\theta_B)$: the consumer has a lower payoff of $D$ than $B$ and is therefore more willing to search when $D$ is prominent.

The key relationships between $B$ and $D$ is illustrated graphically in the left panel of Figure \ref{fig:dominated product}. Because its commission is lower, the graph of $v_D(\cdot)$ (in orange) is below $v_B(\cdot)$ at the extremes. At the same time, because it offers the consumer a smaller payoff, search threshold $\theta_D<\theta_B$, so the jump in $v_D(\cdot)$ is to the left, and $v_D(\cdot)$ is above $v_B(\cdot)$ in the middle.

\begin{figure}
\begin{minipage}{0.5\textwidth}
\begin{tikzpicture}[scale=0.85,  >=Latex]

\draw[->,line width=0.5pt] (0,-2) -- (0,5);
\fill (7,-1) node[right] {\footnotesize{$\mu$}};
\draw[->,line width=0.5pt] (0,-1) -- (7,-1);

\draw[line width=1pt,blue] (0,0)--(2,0.66);
\draw[line width=0.5pt,blue,dashed] (2,0.66)--(2,2.85);
 \draw[line width=1pt,blue] (2,2.85)--(7,4.5);
\draw[line width=0.5pt,black,dotted] (2,-1)--(2,0.66);
\fill (6,5) node[right] {\footnotesize{${v}_A(\cdot)$}};

\fill (0,0) node[left] {\footnotesize{$f$}};

\draw[line width=0.5pt,black,dotted] (7,4.5)--(0,4.5);
\fill (0,4.5) node[left] {\footnotesize{$1+f$}};
\fill (0,3.5) node[left] {\footnotesize{$1$}};
\fill (0,-1) node[left] {\footnotesize{$0$}};
\draw[line width=0.5pt,black,dotted] (0,3.5)--(7,3.5);

\draw[line width=1pt,red] (0,-1)--(4.5,-1);
\draw[line width=1pt,red] (4.5,2.66)--(7,3.5);
\draw[line width=0.5pt,red,dashed] (4.5,-1)--(4.5,2.66);
\fill (6,3.75) node[right] {\footnotesize{$v_B(\cdot)$}};

\def\offY{0.3}
\def\offX{1.5}

\pgfmathsetmacro{\offYY}{0.33*(7-(4.5-\offX))}
\draw[line width=1pt,orange] (0,-1-\offY)--(4.5-\offX,-1-\offY);
\draw[line width=1pt,orange] (4.5-\offX,3.5-\offY - \offYY)--(7,3.5-\offY+0.18);
\draw[line width=0.5pt,orange,dashed] (4.5-\offX,-1-\offY)--(4.5-\offX,3.5-\offY - \offYY);
\fill (6,2.7-\offY+0.18) node[right] 
{\footnotesize{$v_D(\cdot)$}};

\fill (4.5-\offX,-0.7) node[right] {\footnotesize{$\theta_D$}};

\fill (4.5,-0.7) node[right] {\footnotesize{$\theta_B$}};
\fill (2,-0.7) node[right] {\footnotesize{$\theta_A$}};

\fill (0,-1-\offY-0.3) node[left] {\scriptsize{$-k_D(1+f)$}};

\filldraw[color=red, fill=red] (4.5,2.66) circle (2pt);
\filldraw[color=blue, fill=blue] (2,2.85) circle (2pt);
\filldraw[color=orange, fill=orange] (4.5-\offX,3.5-\offY - \offYY) circle (2pt);

\end{tikzpicture}
\end{minipage}
\begin{minipage}{0.5\textwidth}
\begin{tikzpicture}[scale=0.85,  >=Latex]

\draw[->,line width=0.5pt] (0,-2) -- (0,5);
\fill (7,-1) node[right] {\footnotesize{$\mu$}};
\draw[->,line width=0.5pt] (0,-1) -- (7,-1);

\filldraw[color=red, fill=red] (4.5,2.66) circle (2pt);
\draw[line width=1pt,violet] (0,0)--(0.73,0.44)--(2,2.85);

\draw[line width=1pt,blue] (0,0)--(2,0.66);
\draw[line width=0.5pt,blue,dashed] (2,0.66)--(2,2.85);
\draw[line width=1pt,violet] (2,2.85)--(7,4.5);
\draw[line width=0.5pt,black,dotted] (2,-1)--(2,0.66);
\fill (6,5) node[right] {\footnotesize{${v}_A(\cdot)$}};

\fill (0,0) node[left] {\footnotesize{$\mu_{A}^{-}$}};
\fill (1.7,2.8) node[above] {\footnotesize{$\mu_{A}^{+}$}};

\draw[line width=0.5pt,black,dotted] (7,4.5)--(0,4.5);
\fill (0,4.5) node[left] {\footnotesize{$1+f$}};
\fill (0,3.5) node[left] {\footnotesize{$1$}};
\fill (0,-1) node[left] {\footnotesize{$\mu_{B}^{+}$}};
\draw[line width=0.5pt,black,dotted] (0,3.5)--(7,3.5);

\draw[line width=1pt,red] (0,-1)--(4.5,-1);
\draw[line width=1pt,red] (4.5,2.66)--(7,3.5);
\draw[line width=0.5pt,red,dashed] (4.5,-1)--(4.5,2.66);
\fill (6,3.75) node[right] {\footnotesize{$v_B(\cdot)$}};

\def\offY{0.3}
\def\offX{1.5}

\pgfmathsetmacro{\offYY}{0.33*(7-(4.5-\offX))}
\draw[line width=1pt,orange] (0,-1-\offY)--(4.5-\offX,-1-\offY);
\draw[line width=1pt,orange] (4.5-\offX,3.5-\offY - \offYY)--(7,3.5-\offY+.18);
\draw[line width=0.5pt,orange,dashed] (4.5-\offX,-1-\offY)--(4.5-\offX,3.5-\offY - \offYY);
\fill (6,2.7-\offY+0.18) node[right] 
{\footnotesize{$v_D(\cdot)$}};

\fill (4.5-\offX,-0.7) node[right] {\footnotesize{$\theta_D$}};
\fill (4.5,-0.7) node[right] {\footnotesize{$\theta_B$}};
\fill (2,-0.7) node[right] {\footnotesize{$\theta_A$}};

\fill (0,-1-\offY-0.3) node[left] {\scriptsize{$-k_D(1+f)$}};

\draw[line width=0.5pt, dashed] (0,-1) -- (2,2.85);
\draw[line width=0.5pt, dashed] (0,0) -- (4.5-\offX,3.5-\offY - \offYY);

\filldraw[color=red, fill=white] (0,-1) circle (3pt);
\filldraw[color=blue, fill=white] (2,2.85) circle (3pt);

\filldraw[color=blue, fill=white] (0,0) circle (3pt);
\filldraw[color=orange, fill=white] (4.5-\offX,3.5-\offY - \offYY) circle (3pt);
\fill (4.5-\offX,3.5-\offY - \offYY) node[above] {\footnotesize{$\mu_{D}^{-}$}};

\end{tikzpicture}\end{minipage}
\caption{\label{fig:dominated product}\textit{A Dominated Product}. Left Panel: the platform's expected profit functions when product $i\in\{A,B,D\}$ is prominent. Note that the slope of $v_D(\cdot)$ above $\theta_D$ is greater than the slope of $v_B(\cdot)$ above $\theta_B$ ($r(1+k_D)>r$). 
 Right panel: $k_D$ sufficiently small that it is optimal to replace $B$ with $D$ in the negative sorting algorithm. }
\end{figure}

\paragraph{Equilibrium Algorithms With a Dominated Product.} While it might be plausible that a dominated product would not be prominent in equilibrium, the right panel of Figure \ref{fig:dominated product} illustrates that this intuition is flawed. In particular, if product $D$'s commission $k_D$ is not too much smaller than $B$'s ($k_D$ small), then the negative sorting algorithm is more profitable when $D$ is displayed instead of $B$. Thus, $D$ is displayed in equilibrium with positive probability (and $B$ is not) when the prior belief is low.
\begin{proposition}[\textbf{Dominated Product}]\label{prop:dominated product} Suppose that the commission on dominated product $D$ ($-k_D$) is not too much smaller than $B$, i.e., $k_D\in(0,\bar{k})$ and the display algorithm is transparent. A $\widehat{\mu}_D\in(\widehat{\mu},\theta_A)$ exists such that the equilibrium algorithm is $(\mu_A=0,\mu_D=\theta_D)$ if and only if $\prior\in(0,\widehat{\mu}_D)$. At such priors, $B$ is never prominent.

\end{proposition}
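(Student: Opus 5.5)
The plan is to repeat the geometric construction behind Proposition \ref{prop:equilibrium under transparency}, with three profit graphs $v_A,v_B,v_D$ instead of two. The one new feature is that the algorithm may put positive probability on three prominent products. By Carathéodory's theorem (the posterior space is one-dimensional), the upper envelope over all such algorithms is attained by splitting between at most two posteriors. So I would first show that the optimal value at each prior is the upper envelope of line segments joining points $(\mu_i,v_i(\mu_i))$ and $(\mu_j,v_j(\mu_j))$ with $\mu_i\le\prior\le\mu_j$, for $i,j\in\{A,B,D\}$. Lemma \ref{lem:transparent profit} carries over verbatim, with $v_D$ as given in the text and favorable tie-breaking. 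I also need $\theta_D\in(\theta_A,\theta_B)$: it comes from $\outI\in(\out-c,\out)$ together with the monotonicity of $\tfrac{x(1-r)+c}{1-rx}$ in $x$ and $\theta_A<\out-c$ evaluated at $\outI$.

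Next I would cut down the candidate splits for $\prior<\theta_A$. The footnote argument after the definition of sorting still applies. Any split in which no posterior crosses its search threshold is dominated by $A$-always, since $v_A(\mu)\ge \max\{v_B(\mu),v_D(\mu)\}$ on $[0,\theta_A)$ when no jumps occur ($v_B=0$, $v_D<0$ there). A split that crosses a threshold is therefore optimal only with the non-crossing posterior pushed to $0$ and the crossing posterior placed exactly at its threshold. Because all graphs are affine on each side of their jumps, sliding the pair leftward raises the chord. This leaves four candidates:
\begin{itemize}
\item positive sorting $(\theta_A,0_B)$, and the same with $D$ at $0$, which is dominated since $v_D(0)<v_B(0)$;
\item negative sorting $(0_A,\theta_B)$;
\item the new algorithm $(0_A,\theta_D)$;
\item mixed pairs such as $(0_D,\theta_B)$, dominated since $v_D(0)<v_A(0)$.
\end{itemize}

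The key comparison is between $(0_A,\theta_B)$ and $(0_A,\theta_D)$. Both chords start at $(0,f)$, so the one with the steeper slope is higher at every $\prior$ in between, and also at every prior where both are feasible. The slope of the $D$ chord is $\bigl[(r\theta_D+1-r)(1+k_D)-k_D(1+f)-f\bigr]/\theta_D$. At $k_D=0$ this is $(r\theta_D+1-r-f)/\theta_D$, which exceeds $(r\theta_B+1-r-f)/\theta_B$ because $(1-r-f)>0$ by regularity (ii) and $\theta_D<\theta_B$. By continuity there is a $\bar k>0$ such that the $D$ chord is strictly steeper for all $k_D\in(0,\bar k)$. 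I would define $\bar k$ as the root of this slope comparison, or the minimum of that root and the one that keeps the value above $f$, if needed.

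Finally, I would compare $(0_A,\theta_D)$ with positive sorting. Positive sorting's chord runs from $(0,0)$ to $(\theta_A,v_A(\theta_A))$, so it is linear in $\prior$ and vanishes at $0$. The $D$-negative chord starts at $f>0$ and lies strictly below $v_A(\theta_A)=1+f$ at $\theta_A$, since its value at $\theta_A<\theta_D$ is a strict average of $f$ and something below $1$. A unique crossing $\widehat\mu_D\in(0,\theta_A)$ therefore exists. Since the $D$ chord lies above the $B$ negative chord, this crossing lies to the right of the old crossing $\widehat\mu$. For uniqueness of the optimal algorithm, I would note that all dominations above are strict at priors other than $\widehat\mu_D$, so for $\prior<\widehat\mu_D$ the unique maximizer is $(\mu_A=0,\mu_D=\theta_D)$, and $B$ is never prominent. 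Above $\widehat\mu_D$, positive sorting or $A$-always is optimal, and then $D$ is not prominent, which gives the ``only if'' direction. I expect the main obstacle to be showing rigorously that no three-point or mixed split beats these candidates, in particular the interaction near $\theta_D$ versus $\theta_B$ when $k_D$ is small. I would handle this by the Carathéodory reduction and by checking the envelope piece by piece on the intervals $[0,\theta_A]$, $[\theta_A,\theta_D]$ and $[\theta_D,\theta_B]$.
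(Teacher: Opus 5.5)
Your route is the paper's own. The paper establishes Proposition~\ref{prop:dominated product} only by inspecting Figure~\ref{fig:dominated product}. Your slope comparison of the two chords leaving $(0,f)$, the definition of $\bar k$ by continuity, and the intersection with the positive-sorting chord formalize exactly that picture, and the ordering $\widehat{\mu}<\widehat{\mu}_D<\theta_A$ then follows as you say.

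One step is justified incorrectly. You use $v_A(\theta_A)=1+f$, but $v_A(\theta_A)=f+r\theta_A+1-r=1+f-r(1-\theta_A)$. With the correct value, ``a strict average of $f$ and something below $1$'' no longer puts the $D$ chord below $v_A(\theta_A)$, because $v_D(\theta_D)$ itself can exceed $v_A(\theta_A)$. For small $k_D$ this happens whenever $r(\theta_D-\theta_A)>f$. The fix is to compare with the line $\ell(\mu)=f+1-r+r\mu$ that carries the upper piece of $v_A$. Both endpoints of the $D$ chord lie strictly below it, since $f<\ell(0)$ and $v_D(\theta_D)\le r\theta_D+1-r<\ell(\theta_D)$. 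So the chord lies strictly below $v_A$ on all of $[\theta_A,\theta_D]$. This also covers priors in $[\theta_A,\theta_D)$, where $(\mu_A=0,\mu_D=\theta_D)$ is still feasible and must be ruled out for the ``only if'' direction.

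Two further claims need an explicit line.

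First, ``sliding the pair leftward raises the chord'' does not follow from affinity alone. Pushing the low posterior to $0$ helps only if the high point lies above the extension of the low piece. Pushing the high posterior down to its threshold helps only if the low point lies below the extension of the high piece. For $(A,B)$ both hold by regularity (ii). For $(A,D)$ they hold when $K\equiv(1-r-f)-k_D(r+f)>0$; for instance, $v_D(\mu_D)-(f+r\mu_D)=K+rk_Dk$ with $k=\mu_D$, i.e.\ $K+rk_D\mu_D$. The argument genuinely needs this: with $\mu_A=0$, the chord slope to $(\mu_D,v_D(\mu_D))$ is $r(1+k_D)+K/\mu_D$, so for $K<0$ the best $D$-posterior is $1$, not $\theta_D$. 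Your $\bar k$ does deliver $K>0$. If $K\le 0$, the slope at $\theta_D$ is at most $r(1+k_D)+K=1-f-k_Df<1-f<r+(1-r-f)/\theta_B$, contradicting the defining slope inequality. But this must be said.

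Second, the two-posterior reduction has to pair \emph{distinct} products, because each prominent product induces a single posterior. A Carath\'eodory or concavification argument over the union of the graphs would admit the infeasible split $\{0,\theta_A\}$ on $v_A$ alone. That is the joint-design value of Section~\ref{sec:wrong kind of transparency}, and it beats every candidate on your list at low priors. The correct reduction is as follows. For fixed $(\mu_A,\mu_B,\mu_D)$, the Bayes-plausible weights form a segment on which profit is linear, so an endpoint using at most two products does weakly better. This also gives strict uniqueness. Any algorithm using all three products is a strict mixture of two such endpoints, at least one of which makes $B$ prominent, and that endpoint is strictly suboptimal for $\prior<\widehat{\mu}_D$.
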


Proposition \ref{prop:dominated product} implies that when the commission on the dominated product is not too low, the platform benefits by replacing product $B$ with $D$ when it does negative sorting. In particular, the negative sorting algorithm which displays either $A$ or $D$, $(\mu_A=0,\mu_D=\theta_D)$, is strictly more profitable than the one which displays either $A$ or $B$, $(\mu_A=0,\mu_B=\theta_B)$. Recall that with negative sorting, the consumer becomes more optimistic about $A$ when the other product is prominent, which incentivizes her to continue searching. Because $D$ is worse for the consumer, it is easier to incentivize her to continue searching when $D$ is displayed instead of $B$. Replacing $B$ with $D$ imposes a cost, however, because the platform's commission is smaller if $A$ fails to sell ($-k_D$ instead of $0$). However, if the gap in commissions is not too large, replacing $B$ with $D$ increases the platform's profit from negative sorting. Note that replacing $B$ with $D$ in positive sorting is strictly harmful.

\section{Discussion}
We summarize our main findings and situate them in a broader context.

\paragraph{Economic Context.} Prominence changes the path from a consumer's query to purchase. The search advertising industry monetizes this influence by allowing firms to pay for prominent positions or display in search results. The economic value of prominence is broadly reflected in worldwide expenditure on search adveritising, which was forecast to reach approximately \$249 billion in 2025.\footnote{\href{https://www.warc.com/en/content/article/the-big-picture-search-2025/160566}{WARC} forecasts a global search-advertising market of \$249 billion in 2025.} In turn, search rankings and sponsored placements mediate product discovery within a global retail e-commerce sector forecast at \$6.419 trillion in 2025, equivalent to 20.5 percent of worldwide retail sales.\footnote{\href{https://www.emarketer.com/chart/270813/retail-ecommerce-sales-worldwide-2022-2028-trillions-change-of-total-retail-sales}{EMARKETER} forecasts worldwide retail e-commerce sales of \$6.419 trillion and an online share of 20.5 percent of total retail sales in 2025.} Given the economic scale of these industries, even modest changes in the way ranking algorithms allocate prominence can affect economically important volumes of trade.

\paragraph{Opaque Algorithms and Privacy Protection.} Absent explicit measures to enhance transparency, consumers may have difficulty understanding why certain products are prominent and what it implies about their suitability, creating concern among regulators and policy observers. With such opacity, our analysis (Section \ref{sec:opacity}) suggests that the platform does best when the consumer anticipates that prominence is uninformative about product fit, and the platform leverages search frictions by prominently displaying the most profitable products. The platform cannot guarantee this, however: other equilibria also exist in which the consumer interprets prominent positions as bad news, reducing the platform's ability to affect consumer search and its equilibrium profit. Consequently, the platform might benefit from non-binding norms (or even explicit regulations) that discourage or prevent customization of search results based on customer attributes. 

\paragraph{Transparency Regulations.} Regulators and policy observers have pushed for greater transparency surrounding search order, prominence, and other aspects of product display. Transparency is not a technical disclosure: existing regulations require platforms and search engines to explain, in accessible language, the principal parameters that determine rankings and recommendations, why particular information is suggested, and how paid placement affects order.\footnote{Article 5 of \href{https://eur-lex.europa.eu/eli/reg/2019/1150/oj/eng}{Regulation (EU) 2019/1150} requires online search engines to provide an easily and publicly available description, drafted in plain and intelligible language, of the main parameters determining ranking and the effects of remuneration on ranking. Article 27 of the \href{https://eur-lex.europa.eu/eli/reg/2022/2065/oj/eng}{Digital Services Act} imposes similar requirements on digital platforms' recommendation systems.} In other words, transparency regulations do not require the disclosure of code or model weights, rather information that allows the consumer to understand how the platform generates product order and recommendations. 

\paragraph{Transparent Product Display and Advice.}   In Sections \ref{sec:transparency} and \ref{sec:wrong kind of transparency}, we study the effect of such transparency requirements. When algorithms governing product display are transparent, the consumer understands how prominence is assigned and what it conveys about fit. The display algorithm can therefore be used to steer and inform simultaneously, persuading the consumer to change her search behavior or product choice. With such transparency, the equilibrium algorithm sometimes conveys good news about the prominent product and deters further search, and sometimes reveals bad news that encourages it. Crucially, the platform conveys information through prominence, directly affecting the consumer's search incentives and payoffs. In other words, steering and persuasion are coupled in a single instrument, and the algorithm must simultaneously account for both.

When transparency requirements extend to algorithms that offer advice or recommend products, these two effects are decoupled. In equilibrium, the platform uses the recommendation algorithm to convey information and persuade the consumer, while prominently displaying the profitable product in order to steer the consumer toward it.

Current transparency regulations do not exclude recommendation algorithms, thereby potentially decoupling steering from persuasion.
When it is desirable to keep a general transparency requirement policy in place,
steering and persuasion can be coupled by introducing a requirement we call \textit{recommendation-display consistency}: a platform recommending a product should also be required to make the product easy to inspect and purchase. In other words, the platform cannot simultaneously make one product prominent, while recommending the consumer purchase something else. By eliminating any inconsistency between recommendations and prominence, this restriction forces information to be conveyed through product order, as in Section \ref{sec:transparency}.

\begin{figure}
\centering

\begingroup

\tikzset{
  state/.style={
    draw=black!80,
    fill=white,
    rounded corners=2pt,
    line width=0.75pt,
    minimum width=0.82cm,
    minimum height=0.58cm,
    inner xsep=3pt,
    inner ysep=1pt,
    align=center,
    font=\sffamily\small
  },
  forward/.style={
    ->,
    draw=black!85,
    line width=0.9pt
  },
  feedback/.style={
    ->,
    draw=black!85,
    line width=0.9pt,
  },
  stage/.style={
    font=\sffamily\scriptsize,
    align=center
  },
  edge note/.style={
    font=\sffamily\scriptsize,
    align=center,
    fill=white,
    inner sep=1.2pt
  }
}

\begin{minipage}[t]{0.45\linewidth}
\vspace{0pt}
\centering

\resizebox{\linewidth}{!}{%
\begin{tikzpicture}[
  x=1cm,
  y=1cm,
  >={Stealth[length=2.6mm,width=1.9mm]},
  line cap=round,
  line join=round
]

\node[state] (lp1) at (0.0,6.10) {P};
\node[state] (lp2) at (4.1,6.85) {P};
\node[state] (lp3) at (8.2,7.60) {P};

\draw[forward] (lp1) -- (lp2);
\draw[forward] (lp2) -- (lp3);

\node[state] (ls1) at (0.0,4.05) {S};
\node[state] (ls2) at (4.1,4.80) {S};
\node[state] (ls3) at (8.2,4.05) {S};

\draw[feedback] (ls1.east) -- (ls3.west);
\draw[forward] (ls1) -- (ls2);
\draw[forward] (ls2) -- (ls3);

\node[state] (lc1) at (0.0,1.90) {C};
\node[state] (lc2) at (4.1,2.65) {C};
\node[state] (lc3) at (8.2,1.90) {C};

\draw[feedback] (lc1.east) -- (lc3.west);
\draw[forward] (lc1) -- (lc2);
\draw[forward] (lc2) -- (lc3);

\node[stage] at (0.0,-0.05)
  {Opacity};

\node[stage] at (4.1,-0.05)
  {Transparent\\Display};

\node[stage] at (8.2,-0.05)
  {Transparent\\Recommendations};

\end{tikzpicture}%
}

\end{minipage}%
\hspace{0.02\linewidth}
\begin{minipage}[t]{0.45\linewidth}
\vspace{0pt}
\centering

\resizebox{\linewidth}{!}{%
\begin{tikzpicture}[
  x=1cm,
  y=1cm,
  >={Stealth[length=2.6mm,width=1.9mm]},
  line cap=round,
  line join=round
]

\node[state] (rp1) at (0.0,6.10) {P};
\node[state] (rp2) at (4.1,6.85) {P};
\node[state] (rp3) at (8.2,7.60) {P};

\draw[forward] (rp1) -- (rp2);
\draw[forward] (rp2) -- (rp3);

\node[state] (rs1) at (0.0,4.05) {S};
\node[state] (rsH) at (4.1,4.85) {S};
\node[state] (rsL) at (4.1,3.25) {S};
\node[state] (rs3) at (8.2,4.05) {S};

\draw[feedback] (rs1.east) -- (rs3.west);

\draw[forward] (rs1) -- (rsH);
\draw[forward] (rsH) -- (rs3);

\draw[forward] (rs1) -- (rsL);
\draw[forward] (rsL) -- (rs3);

\node[
  edge note,
  anchor=south
]
  at ($(rs1)!0.34!(rsH)+(0,0.43)$)
  {High info friction};

\node[
  edge note,
  anchor=north
]
  at ($(rs1)!0.34!(rsL)+(0,-0.43)$)
  {Low info friction};

\node[state] (rc1) at (0.0,1.85) {C};
\node[state] (rc2) at (4.1,0.95) {C};
\node[state] (rc3) at (8.2,1.85) {C};

\draw[feedback] (rc1.east) -- (rc3.west);
\draw[forward] (rc1) -- (rc2);
\draw[forward] (rc2) -- (rc3);

\node[stage] at (0.0,-0.05)
  {Opacity};

\node[stage] at (4.1,-0.05)
  {Transparent\\Display};

\node[stage] at (8.2,-0.05)
  {Transparent\\Recommendations};

\end{tikzpicture}%
}

\end{minipage}

\endgroup

\caption{Platform, searcher, and captive welfare under different
transparency regimes. Rightward movement strengthens transparency. Upward movement increases welfare; $P$ for platform, $S$ for \searcher{}, $C$ for \captive{}. Left Panel: Moderate priors. Right Panel: Low priors. }
\label{fig:transparency transitions}

\end{figure}

\paragraph{Welfare Impact.} The welfare impact of these different types %
of transparency is represented visually in Figure \ref{fig:transparency transitions}. While strengthening transparency always benefits the platform, the impact on consumers is mixed. When consumers tend to be skeptical of new products (low priors), transparent display is beneficial to both types. In contrast, when consumers are more optimistic (moderate priors), the welfare consequences can be mixed. If learning about new products directly on the platform is relatively easy (low information friction), then all consumers are harmed by transparent display. When learning is more difficult, in contrast, consumers with lower search costs instead benefit from transparent display, while those with higher search cost continue to be harmed. Extending or shifting transparency requirements to algorithms that recommend products or provide advice reverts these welfare changes: the consumer's payoff is as if the algorithm is opaque. As a result, the welfare consequences of transparency requirements are nuanced, and a complete analysis should account for both the steering and persuasion effects.

\newpage

\setstretch{1}
\bibliographystyle{ARX}
\bibliography{ARX}

@article{A2007,
  author  = {Arbatskaya, Maria},
  title   = {{Ordered Search}},
  journal = {The RAND Journal of Economics},
  volume  = {38},
  number  = {1},
  pages   = {119--126},
  year    = {2007},
  doi     = {10.1111/j.1756-2171.2007.tb00047.x}
}

@article{Armstrong_Zhou_2011_EJ,
  author = {Armstrong, Mark and Zhou, Jidong},
  title = {{Paying for Prominence}},
  journal = {The Economic Journal},
  volume = {121},
  number = {556},
  pages = {F368--F395},
  year = {2011},
  doi = {10.1111/j.1468-0297.2011.02469.x}
}

@article{Athey_Ellison_2011_QJE,
  author = {Athey, Susan and Ellison, Glenn},
  title = {{Position Auctions with Consumer Search}},
  journal = {The Quarterly Journal of Economics},
  volume = {126},
  number = {3},
  pages = {1213--1270},
  year = {2011},
  doi = {10.1093/qje/qjr028}
}

@article{AVZ2009,
  author  = {Armstrong, Mark and Vickers, John and Zhou, Jidong},
  title   = {{Prominence and Consumer Search}},
  journal = {The RAND Journal of Economics},
  volume  = {40},
  number  = {2},
  pages   = {209--233},
  year    = {2009},
  doi     = {10.1111/j.1756-2171.2009.00062.x}
}

@article{BarIsaac_Shelegia_2025_JPEMicro,
  author  = {Bar-Isaac, Heski and Shelegia, Sandro},
  title   = {{Monetizing Steering}},
  journal = {Journal of Political Economy Microeconomics},
  year    = {2025},
  doi     = {10.1086/740100},
  note    = {Just accepted}
}

@unpublished{BS2025,
  author = {Boleslavsky, Raphael and Shadmehr, Mehdi},
  title  = {Signaling With Commitment},
  year   = {2025},
  note   = {Working paper}
}

@article{CH2011,
  author  = {Chen, Yongmin and He, Chuan},
  title   = {{Paid Placement: Advertising and Search on the Internet}},
  journal = {The Economic Journal},
  volume  = {121},
  number  = {556},
  pages   = {F309--F328},
  year    = {2011},
  doi     = {10.1111/j.1468-0297.2011.02466.x}
}

@article{deCorniere_Taylor_2019_RAND,
  author  = {De Corni\`ere, Alexandre and Taylor, Greg},
  title   = {A Model of Biased Intermediation},
  journal = {RAND Journal of Economics},
  volume  = {50},
  number  = {4},
  pages   = {854--882},
  year    = {2019},
  doi     = {10.1111/1756-2171.12298}
}

@misc{EU_2022_DSA,
  author = {{European Union}},
  title = {{Regulation (EU) 2022/2065 of the European Parliament and of the Council of 19 October 2022 on a Single Market for Digital Services and amending Directive 2000/31/EC (Digital Services Act)}},
  howpublished = {Official Journal of the European Union, L 277, 27 October 2022, pp. 1--102},
  year = {2022},
  url = {https://eur-lex.europa.eu/eli/reg/2022/2065/oj}
}

@article{BB2024,
Author = {Bergemann, Dirk and Bonatti, Alessandro},
Title = {Data, Competition, and Digital Platforms},
Journal = {American Economic Review},
Volume = {114},
Number = {8},
Year = {2024},
Month = {August},
Pages = {2553–95},
DOI = {10.1257/aer.20230478},
URL = {https://www.aeaweb.org/articles?id=10.1257/aer.20230478}}

@article{IO2009,
Author = {Inderst, Roman and Ottaviani, Marco},
Title = {Misselling through Agents},
Journal = {American Economic Review},
Volume = {99},
Number = {3},
Year = {2009},
Month = {June},
Pages = {883–908},
DOI = {10.1257/aer.99.3.883},
URL = {https://www.aeaweb.org/articles?id=10.1257/aer.99.3.883}}

@article{Ghose_Ipeirotis_Li_2014_MgmtSci,
  author  = {Ghose, Anindya and Ipeirotis, Panagiotis G. and Li, Beibei},
  title   = {Examining the Impact of Ranking on Consumer Behavior and Search Engine Revenue},
  journal = {Management Science},
  volume  = {60},
  number  = {7},
  pages   = {1632--1654},
  year    = {2014},
  doi     = {10.1287/mnsc.2013.1828}
}

@article{Hagiu_Jullien_2011_RAND,
  author  = {Hagiu, Andrei and Jullien, Bruno},
  title   = {{Why Do Intermediaries Divert Search?}},
  journal = {The RAND Journal of Economics},
  volume  = {42},
  number  = {2},
  pages   = {337--362},
  year    = {2011},
  doi     = {10.1111/j.1756-2171.2011.00136.x}
}

@article{Janssen_et_al_2026_EJ,
  author  = {Janssen, Maarten C. W. and Jungbauer, Thomas and Preuss, Marcel and Williams, Cole},
  title   = {{Search Platforms: Big Data and Sponsored Positions}},
  journal = {The Economic Journal},
  year    = {2026},
  pages   = {ueag002},
  doi     = {10.1093/ej/ueag002},
  note    = {Published online January 13, 2026}
}

@article{Janssen_et_al_2026_IJIO,
  author  = {Janssen, Maarten C. W. and Jungbauer, Thomas and Preuss, Marcel and Williams, Cole},
  title   = {{Optimally Informative Rankings and Consumer Search}},
  journal = {International Journal of Industrial Organization},
  year    = {2026},
  pages   = {103282},
  doi     = {10.1016/j.ijindorg.2026.103282},
  note    = {Available online March 7, 2026}
}

@article{Kamenica_Gentzkow_2011_AER,
  author = {Kamenica, Emir and Gentzkow, Matthew},
  title = {{Bayesian Persuasion}},
  journal = {American Economic Review},
  volume = {101},
  number = {6},
  pages = {2590--2615},
  year = {2011}
}

@article{Nocke_Rey_2024_JPE,
  author  = {Nocke, Volker and Rey, Patrick},
  title   = {{Consumer Search, Steering, and Choice Overload}},
  journal = {Journal of Political Economy},
  volume  = {132},
  number  = {5},
  pages   = {1684--1739},
  year    = {2024},
  doi     = {10.1086/728108}
}

@misc{S2019,
  author       = {{Stigler Committee}},
  title        = {Stigler Committee on Digital Platforms: Final Report},
  year         = {2019},
  institution  = {Stigler Center, University of Chicago Booth School of Business},
  note         = {Released September 16, 2019}
}

@article{Sahni_Nair_2020_RESTUD,
  title={Does Advertising Serve as a Signal? Evidence from a Field Experiment in Mobile Search},
  author={Sahni, Navdeep S. and Nair, Harikesh S.},
  journal={Review of Economic Studies},
  volume={87},
  number={3},
  pages={1529--1564},
  year={2020},
}

@article{Teh_Wright_2022_AEJMicro,
  author  = {Teh, Tat-How and Wright, Julian},
  title   = {{Intermediation and Steering: Competition in Prices and Commissions}},
  journal = {American Economic Journal: Microeconomics},
  volume  = {14},
  number  = {2},
  pages   = {281--321},
  year    = {2022},
  doi     = {10.1257/mic.20190344}
}

@article{Ursu_2018_MktgSci,
  author  = {Ursu, Raluca M.},
  title   = {{The Power of Rankings: Quantifying the Effect of Rankings on Online Consumer Search and Purchase Decisions}},
  journal = {Marketing Science},
  volume  = {37},
  number  = {4},
  pages   = {530--552},
  year    = {2018},
  doi     = {10.1287/mksc.2017.1072}
}

@unpublished{Yu2025WelfareSponsored,
  author = {Yu, Chuan},
  title  = {The Welfare Effects of Sponsored Product Advertising},
  year   = {2025},
  note   = {Working paper}
}
\setstretch{1.2}

\begin{appendices}
\section*{Appendix}
\subsection*{Proofs for Section \ref{sec:opacity} (Opacity)}
\begin{proof}[Proof of Proposition \ref{prop:equilibrium under opacity}] From the text, \[\Delta_1\equiv z_A^1-z_B^1=f+r(1-b)+(1-r)(a-b)\qquad\Delta_0\equiv z_A^0-z_B^0=f+(1-r)(a-b),\]
and
\[a\in\multi\{\widehat{\mu}_A\geq\theta_A\} \qquad b\in\multi\{\widehat{\mu}_B\geq\theta_B\}\qquad \pi_\omega=\multi\{\Delta_\omega\geq 0\}.\] Furthermore, $(\widehat{\mu}_A,\widehat{\mu}_B)$ are consistent with Bayes' rule applied to the conjectured algorithm $\widehat\Pi$, and the equilibrium conjecture is correct $\widehat{\Pi}=\Pi$.

\textit{A-always.} In an $A$-always equilibrium, $\widehat{\pi}_\omega=\pi_\omega=1$. Consequently, $\mu_A=\prior$ and $\mu_B$ is off-path. Such an equilibrium exists at prior $\prior$ if and only if a belief $\mu_B\in[0,1]$ can be found such that $\Delta_\omega\geq 0$ for $\omega\in\{0,1\}$. Consider $\mu_B=\prior$. Because $\theta_A<\theta_B$, we have $a\geq b$ for any possible selections. Therefore, $\Delta_\omega\geq f$ for $\omega\in\{0,1\}$, and such an equilibrium exists at every prior $\prior$. Other choices of $\mu_B$ also sustain the equilibrium. 

Using the expression in the text we have that the possible profits in an $A$-always equilibrium at prior $\prior$ are a correspondence, $\overline{v}_A(\prior)=f+r\mu+(1-r)\multi\{\prior\geq\theta_A\}$. This correspondence arises because  any $a\in[0,1]$ is possible at $\prior=\theta_A$.

\textit{B-always.} In a $B$-always equilibrium, $\widehat{\pi}_\omega=\pi_\omega=0$. Consequently, $\mu_B=\prior$ and $\mu_A$ is off-path. Such an equilibrium exists at prior $\prior$ if and only if a belief $\mu_A\in[0,1]$ can be found such that $\Delta_\omega\leq 0$ for $\omega\in\{0,1\}$. Because $\Delta_\omega$ is increasing in $a$, a $B$-always equilibrium exists only if $\Delta_\omega\leq 0$ with $a=0$. Furthermore, any off-path belief $\mu_A\leq\theta_A$ sustains $a=0$. Hence, a class of $B$-only equilibria exists if and only if $a=0\Rightarrow\Delta_\omega\leq 0$. Noting that $a=0$ implies $\Delta_1=f+r-b$ and $\Delta_0=f-(1-r)b$, such an equilibrium exists if and only if $b\geq \overline{b}\equiv\max\{f+r,\tfrac{f}{1-r}\}=f+r$, where the last equality follows from regularity condition (ii) ( $f<1-r$). With $f+r<1$, a sufficient condition for existence of $B$-only equilibria is $\mu_B=\prior>\theta_B$, so that $b=1$. Furthermore, with $\mu_B=\prior=\theta_B$, any $b\in[0,1]$ can be selected, and therefore $b\geq f+r$ is  necessary and sufficient for existence of $B$-only equilibria. With $\mu_B<\theta_B$, the only possible selection is $b=0<f+r$, and therefore a $B$-only equilibrium does not exist.

Using the expression in the text we have that the profit in a $B$-always equilibrium at prior $\prior>\theta_B$ is $v_B(\prior)=r\prior+1-r$. At $\prior=\theta_B$, multiple profits are possible because of the \searcher{}'s mixing, but $v_B(\theta_B)=r\theta_B+(1-r)$ is an upper bound, corresponding to $b=1$.

\textit{Split Equilibrium.} Consider a possible split equilibrium in which $\mu_A\neq\prior$ and $\mu_B\neq\prior$. 

\underline{Claim 1}: If a split equilibrium exists, then $\mu_A\leq\theta_A$. 

Suppose instead that $\mu_A>\theta_A$. It follows that $a=1$. Hence, $\Delta_1=f+1-b>0$ and $\Delta_0=f+(1-r)(1-b)>0$. By implication, $\widehat{\pi}_\omega=\pi_\omega=1$, and thus $\mu_A=\prior$, contradicting $\mu_A\neq\prior$.

\underline{Claim 2}: If a split equilibrium exists, then $\mu_A\geq\theta_A$. 

Suppose instead that $\mu_A<\theta_A$. It follows that $a=0$. Hence, $\Delta_1=f+r-b$ and $\Delta_0=f-(1-r)b$. 

First, suppose that $\mu_B<\theta_B$, so that $b=0$. In this case, $\Delta_1>\Delta_0=f>0$, which implies $\pi_0=\pi_1=1$. It follows that $\mu_A=\prior$,  contradicting $\mu_A\neq\prior$. 

Second, suppose that $\mu_B=\theta_B$, so that any $b\in[0,1]$ can be selected. In this case, $\Delta_1=f+r-b\geq\Delta_0=f-(1-r)b$. Note that $\mu_A<\theta_A$ requires $\pi_0>0$. In turn, $\pi_0>0$ implies $\Delta_0=0$, which implies $b=f/(1-r)$, and in turn $\Delta_1>0$ and $\pi_1=1$. Thus, there are two possibilities (1) $\pi_1=\pi_0=1$ and (2) $\pi_1=1$ and $\pi_0\in(0,1)$. Possibility (1) implies $\mu_A=\prior$, violating the split structure. Possibility (2) and Bayes' rule imply $\mu_B=0$, contradicting $\mu_B=\theta_B$.

Third, suppose that $\mu_B>\theta_B$, so that $b=1$. In this case, $\Delta_1=\Delta_0=f+r-1<0$. It follows that $\pi_1=\pi_0=0$, which implies $\mu_B=\prior$, violating the split structure, $\mu_B\neq\prior$. 

\underline{Claim 3}: If a split equilibrium exists, then $\mu_A=\theta_A$ and $\mu_B\geq \theta_B$. 

Combining Claims 1 and 2, we have $\mu_A=\theta_A$, so that any $a\in[0,1]$ is possible. Suppose to the contrary that $\mu_B<\theta_B$, so that $b=0$. It follows that $\Delta_1=f+r+(1-r)a>\Delta_0=f+(1-r)a>0$, and hence $\pi_0=\pi_1=1$, contradicting the split structure.

\underline{Claim 4}: A split equilibrium does not exist if $\prior\leq \theta_A$. 

Claim 4 follows from Claim 3 and the law of iterated expectations, which requires that the prior lies between the posteriors. Furthermore, at $\prior=\theta_A$, Claim 3 implies $\mu_A=\prior$, violating the split structure.

\underline{Claim 5}: If a split equilibrium exists then $\prior<\mu_B$. 

Claim 5 follows from Claim 3 and the law of iterated expectations.

\underline{Claim 6}: If a split equilibrium exists with $\mu_B=m$, then $m\in[\theta_B,1]$ and $\prior\in(\theta_A,m)$.

Condition $\prior\in(\theta_A,m)$ follows from Claims 4 and 5. Condition $m\in[\theta_B,1]$ follows from Claim 3. 

\underline{Claim 7}: Consider any $m\in[\theta_B,1]$ and $\prior\in(\theta_A,m)$. A split equilibrium such that $\mu_B=m$ exists.

Because $\mu_B=m\geq\theta_B$, it is always possible to select $b=1$. Furthermore, because $\mu_A=\theta_A$, it is possible to select $a=1-f/(1-r)$. With such $\{a,b\}$, we have $\Delta_1=\Delta_0=0$, and therefore any algorithm is a best response for the platform, $\pi_1\in[0,1]$ and $\pi_0\in[0,1]$. The required equilibrium exists if and only if an algorithm can be found induces posterior beliefs $\mu_A=\theta_A$ and $\mu_B=m$. Such an algorithm exists because the prior lies strictly between the posteriors, $\theta_A<\prior<m=\mu_B$.

\underline{Claim 8}: Consider any $\mu_1\in(\theta_A,1)$. A continuum of split equilibra exists. In any such equilibrium, prominence reveals bad news, $\mu_A<\prior<\mu_B$.

We have shown in Claim 7 for $m\in[\theta_B,1]$ and $\theta_A<\prior<m$, a split equilibrium exists with the property that $\mu_B=m$. Varying $m$ over the non-empty interval $\{(\prior,1]\cap[\theta_B,1]\}$ generates a continuum of equilibria. That prominence reveals bad news follows immediately from $\theta_A<\prior<m$. 

\underline{Claim 9}: In every split equilibrium $a=1-f/(1-r)$,  $b=1$, and $\Delta_1=\Delta_0=0$. 

From Claim 3, in equilibrium $\mu_A=\theta_A$ and $\mu_B\geq \theta_B$, and from Claim 5, $\prior\in(\theta_A,\mu_B)$. Using Bayes rule,  $\mu_A=\theta_A<\prior$ implies $0<\pi_1<\pi_0$. First note that $\pi_1<\pi_0$, which implies $\Delta_1\leq\Delta_0$. Observing that $\Delta_1=\Delta_0+r(1-b)$, we have $b=1$. Second, note that $0<\pi_1<\pi_0\leq 1$ implies $\pi_1\in(0,1)$, and hence $\Delta_1=0$. Together with $b=1$, we have $a=1-f/(1-r)$. Finally, with these $a=1-f/(1-r)$ and $b=1$, we have $\Delta_0=0$.

\underline{Claim 10:} Consider any $\mu_1\in(\theta_A,1)$. All split equilibria that exist at this prior are payoff-equivalent, $v_S=r\prior+(1-r)$.

Using the expression for profit in the text, and noting $\Delta_\omega=0$ and $b=1$, we have that profit in a split equilibrium is $v_S=\prior+(1-\prior)(1-r)=r\prior+(1-r)$.

\textit{Payoff comparison.} Note that when $\prior>\theta_A$, split equilibria, $B$-always equilibria, or both exist. Whenever both exist, the split equilibrium is a weak upper bound on the payoff of the $B$-always equilibrium, i.e. $v_S=r\mu+1-r\geq v_B$. At such priors, the profit if the $A$-only equilibrium is $v_A(\prior)=f+r\mu+(1-r)$. The ranking is immediate.
\end{proof}

\subsection*{Proofs for Section \ref{sec:transparency} (Transparency)}
\begin{proof}[Proof of Lemma \ref{lem:transparent profit}]
Suppose that algorithm designs a transparent algorithm $\Pi$ that induces posterior beliefs $(\mu_A,\mu_B)$. Let $a(\mu_A,\mu_B)$ be the probability that the \searcher{} stops when $A$ is prominent and the signal realization is inconclusive ($R=\phi$) and $b(\mu_A,\mu_B)$ be the probability that she continues when $B$ is prominent and the signal realization is inconclusive ($R=\phi$). In equilibrium, these must be consistent with Lemma \ref{lem:searcher strategy},  \begin{equation}\label{eq:ab}a(\mu_A,\mu_B)\in\multi\{\mu_A\geq\theta_A\}\qquad\qquad b(\mu_A,\mu_B)\in\multi\{\mu_B\geq\theta_B\}\end{equation}
We construct the platform's expected profit correspondence with a transparent algorithm. Following the same steps as in \eqref{eq:opaque profit}, we have
\begin{align*}
\prior&\bigl[\pi_1\{f+r+(1-r)a(\mu_A,\mu_B)\}+(1-\pi_1)b(\mu_A,\mu_B)]\\
+(1-\prior)&\bigl[\pi_0\{f+(1-r)a(\mu_A,\mu_B)\}+(1-\pi_0)(1-r)b(\mu_A,\mu_B)\bigr].
\end{align*}

We proceed in two steps. First, we reduce this expression to depend only on the posterior beliefs. Second, we show that in equilibrium, tie-breaking must be platform-preferred.

\textit{Step 1}. We apply the belief-based approach to eliminate $(\pi_0,\pi_1)$. Regrouping terms, we have,
\[
\bigl[\prior\pi_1+(1-\prior)\pi_0\bigr]\bigl[f+(1-r)a(\mu_A,\mu_B)\bigr]
+\prior\pi_1 r
+\bigl[\prior(1-\pi_1)+(1-\prior)(1-\pi_0)(1-r)\bigr]b(\mu_A,\mu_B).
\]
Substituting \eqref{eq:BBA}, we have 
\begin{align}\label{eq:appendixprofit}
&\tau_A\bigl[f+(1-r)a(\mu_A,\mu_B)\bigr]
+r\tau_A\mu_A
+\bigl[\tau_B\mu_B+\tau_B(1-\mu_B)(1-r)]b(\mu_A,\mu_B)=\nonumber\\
&\tau_A[f+r\mu_A+(1-r)a(\mu_A,\mu_B)\bigr]+\tau_B[r\mu_B+1-r\bigr]b(\mu_A,\mu_B)
\end{align}
\textit{Step 2.} We show that tie-breaking must be platform-preferred. In particular, if $\mu_A=\theta_A$ in equilibrium, then  $a(\theta_A,\mu_B)=1$. Similarly, if $\mu_B=\theta_B$ in equilibrium, then $b(\mu_A,\theta_B)=1$. We establish the result for $a(\cdot,\cdot)$ ($b(\cdot,\cdot)$ is analogous).

Consider a possible equilibrium in which $\mu_A=\theta_A$ and $\tau_A>0$. We show that if $a(\theta_A,\mu_B)<1$, then the platform has a strictly beneficial deviation.

Case 1: $\prior\neq\theta_A$. It follows that $\mu_A\neq\mu_B$. Because $\prior\neq\theta_A$, a marginal deviation from $\mu_A=\theta_A$ to $\mu'_A=\theta_A^+$ is feasible. Such a deviation has only a marginal effect on the probabilities $(\tau_A,\tau_B)$ but generates a discontinuous increase in the profit at realization $\mu'_A$. Thus, the change in profit from such a deviation is $\tau_A(1-r)(1-a(\theta_A,\mu_B))>0$. Thus, such a deviation is strictly profitable. 

Case 2: $\mu_A=\theta_A=\prior$. The law of iterated expectations implies either (i) $\tau_A=1$ or (ii) $\mu_B=\theta_A$ and $\tau_A\in(0,1)$. 

Suppose (i). If $\mu_A=\theta_A$ and $\tau_A=1$, then the platform's payoff is $f+r\theta_A+(1-r)a(\theta_A,\mu_B)$. Consider the following marginal deviation: $(\mu_B=0,\mu'_A=\theta_i^+)$. Note that the law of iterated expectations implies that with this deviation $\tau_A=\prior/\theta_i^+=\theta_A/\theta_i^+\approx 1$. Furthermore, with $\mu'_A=\theta_A^+>\theta_A$, we have $a(\mu_A',0)=1$. Hence, the platform's expected profit from such a deviation is \[\frac{\theta_A}{\theta_A^+}(f+r\theta_A^++(1-r)]\approx f+r\theta_A+1-r>f+r\theta_A+(1-r)a(\theta_A,\mu_B).\] Thus, such a deviation is strictly profitable.

Suppose (ii): $\mu_A=\mu_B=\prior=\theta_A$ and $\tau_A\tau_B>0$. The platform payoff is therefore,
\[\tau_A[f+r\mu_A+(1-r)a(\theta_A,\theta_A)\bigr],\] where use has been made of $b(\theta_A,\theta_A)=0$. Consider a deviation to a new algorithm $(\mu'_A=\theta_A+\tau_B\epsilon,\mu'_B=\theta_A-\tau_A\epsilon)$, for some small $\epsilon>0$. For the new algorithm, the law of iterated expectations implies that each product is prominent with the same probability as in the original algorithm, i.e.
 \(\tau_A\mu_A'+\tau_B\mu_B'=\tau_A(\theta_A+\tau_B\epsilon)+\tau_B(\theta_A-\tau_A\epsilon)=\theta_A.\) Furthermore, with $\mu_A'>\theta_A$, we have $a(\mu_A',\mu_B')=1$ and $b(\mu_A',\mu_B')=0$.
 Hence, the platform's expected payoff from the deviation is \[\tau_A[f+r\mu_A'+(1-r)a(\mu_A',\mu_B')\bigr]=\tau_A[f+r\mu_A'+(1-r)\bigr]>\tau_A[f+r\mu_A+(1-r)a(\theta_A,\theta_A)\bigr].\]
Thus, such a deviation is strictly profitable.

\textit{Step 3.} Substituting $a(\theta_A,\mu_B)=1$ and $b(\mu_A,\theta_B)=1$ into \eqref{eq:ab} implies \(a(\mu_A,\mu_B)=\mathbbm{1}\{\mu_A\geq\theta_A\}\) and \(b(\mu_A,\mu_B)=\mathbbm{1}\{\mu_B\geq\theta_B\}\). Substituting into \eqref{eq:appendixprofit} completes the proof.
\end{proof}

\begin{proof}[Proof of Proposition \ref{prop:transparency and welfare}]
The platform's payoff ranking follows  from Proposition \ref{prop:equilibrium under transparency}. 

We show that both consumer types are better off under transparency at priors $\prior\in(\widehat{\mu},\theta_A)$, where the platform uses positive sorting in the equilibrium under transparency. Consider a consumer of type $j\in\{se,cap\}$. The consumer's payoff from positive sorting  is $\tau(\mu_A^+)u^j_A(\mu_A^+)+\tau(\mu_B^+)u^j_B(\mu_B^+)$ for $j\in\{se,cap\}$. Furthermore, linearity of $u_A^{j}(\cdot)$ on $[0,\theta_A]$ implies that $u_A^{j}(\prior)=\tau(\mu_A^+)u^j_A(\mu_A^+)+\tau(\mu_B^+)u^j_A(\mu_B^+)$. Subtracting, the payoff with positive sorting is higher whenever $u^j_B(\mu_B^+)>u^j_A(\mu_B^+)$. For the \searcher{}, $u^{se}_B(\mu_B^+)=\out$ and $u^{se}_A(\mu_B^+)=\out-c$. For the \captive{}, $u^{cap}_B(\mu_B^+)=\out$ and $u_A(\mu_B^+)=0$. The result follows.

We show that the \captive{} is worse off under transparency at low priors, $\prior\in(0,\widehat{\mu})$ where the platform uses negative sorting in the equilibrium under transparency. As in the previous step, negative sorting is worse than $A$-only whenever $u_B^{cap}(\mu_B^-)<u^{cap}_A(\mu_B^-)$. Observing that $u_B^{cap}(\mu_B^-)=\out$, $u_A^{cap}(\mu_B^-)=\theta_B$ and $\theta_B>\out$, completes the proof.

We show that the \searcher{} can be better or worse off under transparency at low priors under the regularity conditions. From Figure \ref{fig:transparency and welfare}, the payoff of negative sorting is the line interpolating $(0,\out-c)$ and $(\theta_B,\out)$, i.e. $L(\mu)=\out-c+\tfrac{c}{\theta_B}\mu$. At low prior beliefs $(\mu<\widehat\mu)$, transparency is better if and only if $L(\theta_A)>u_A^{se}(\theta_A)=\out-c+r(\out-c)(1-\out+c)$, equivalently
\[Q(r)\equiv c-\bigl[\out(1-\out)+c\out+c+c^2\bigr]r+\bigl[\out(1-\out)+\out c\bigr]r^2>0.\] 
First, fix any $(c,\out,f)$ satisfying regularity condition (i) and $f<1$. If $r$ is sufficiently small, then all three regularity conditions hold. Furthermore, for $r$ sufficiently small, the required inequality reduces to $c>0$. Therefore, for each such $(c,\out,f)$, the \searcher{} benefits from transparency if $r$ is sufficiently small.

Next, fix any $(c,\out)$ satisfying regularity condition (i) and assume that $f< \bar{f}=\tfrac{c}{1+c}$. Recall that the remaining regularity conditions (ii) and (iii), require $r\leq\bar{r}\equiv\min\{1-f,\tfrac{c}{\out(1-\out)}\}$. Next, note that 
\begin{align*}
Q(0)=c>0\\
Q(\frac{c}{\out(1-\out)})=-\frac{c^2}{(1-\out)^2}(1-\out-c)<0\\
Q(1-f)=-c^2+f\bigl[c(1+c)-\out(1-\out+c)\bigr]+f^2\out(1-\out+c)
\leq -c^2+fc(1+c)<0
\end{align*}
Thus, $Q(0)>0$ and $Q(\bar{r})<0$. Because $Q(\cdot)$ is a convex parabola, it therefore has exactly one root $r^*(c,\out)$ in the interval $(0,\bar{r})$. Furthermore, at this root, $Q(\cdot)$ switches from positive to negative. Thus, if $(c,\out)$ satisfy regularity condition (i) and $f<\bar{f}$, then for $r<r^*(c,\out)$ the \searcher{} benefits from transparency, and for $r^*(c,\out)<r<\bar{r}$, the \searcher{} is harmed by transparency.\end{proof}

\subsection*{Proofs for Section \ref{sec:monetizing} (Monetizing Prominence)}

\begin{proof}[\textbf{Proof of \autoref{lem:competitive commissions}}]
Let \(i\) have a higher effective value, \(\eta_i>\eta_{j}\). We refer to $i$ as strong and ${j}$ as weak.

\textit{Step 1: structure of equilibrium.} No pure-strategy equilibrium exists. If one firm bids strictly more than the other, it can reduce its bid slightly and still remain the higher-commission product, If both bids coincide, one firm can move slightly above or below the common bid depending on whether winning at that bid is profitable. Hence any equilibrium is mixed.

Because becoming the high-commission firm raises sales from \(p_i^L\) to \(p_i^H\), firm \(i\)'s expected payoff from bid \(k\) against rival cdf \(G_{j}\) is
\[
\pi_i(k)=(m-k)\bigl[p_i^L+G_{j}(k)\Delta\bigr].
\]
The highest bid weak firm \(j\) is willing to submit solves the indifference condition between winning with bid \(k\) and losing with bid \(0\):
\[
(m-k)p_{j}^H=mp_{j}^L,
\]
so the upper edge of support is \(k=\eta_{j}\). The stronger bidder never bids above \(\eta_{j}\) because the weaker bidder never does.

\textit{Step 2: the stronger bidder's cdf.} The weaker bidder's equilibrium payoff is its payoff from bidding \(0\) and losing,
\[
U_{j}=mp_{j}^L.
\]
Indifference on the support therefore requires
\[
(m-k)\bigl[p_{j}^L+G_i(k)\Delta_{j}\bigr]=mp_{j}^L,
\]
which rearranges to
\[
G_i(k)=\frac{k\,p_{j}^L}{(m-k)\Delta},
\qquad k\in[0,\eta_j].
\]
At \(k=\eta_{j}\) this cdf reaches one.

\textit{Step 3: the weaker bidder's cdf and atom.} Let \(a\) denote the atom that weak bidder \(j\) places at \(0\). The stronger bidder's equilibrium payoff from bidding \(0\) is then
\[
U_i=m\bigl[p_i^L+a\Delta\bigr].
\]
Since weak bidder \(j\)'s cdf reaches one at \(\eta_{j}\), stronger bidder \(i\) must also be indifferent between bidding \(0\) and bidding \(\eta_{j}\), which implies
\[
U_i=(m-\eta_{j})p_i^H.
\]
Equating the two expressions yields
\[
a=1-\frac{\eta_{j}}{\eta_i}.
\]
Indifference of the stronger bidder on the support gives
\[
(m-k)\bigl[p_i^L+G_{j}(k)\Delta\bigr]=(m-\eta_{j})p_i^H,
\]
so
\[
G_{j}(k)=\frac{(m-\eta_{j})p_i^H}{(m-k)\Delta}-\frac{p_i^L}{\Delta},
\qquad k\in(0,\eta_{j}].
\]
At \(k=0\) this equals \(1-\eta_{j}/\eta_i\), the atom derived above, and at \(k=\eta_{j}\) it equals one.

\textit{Step 4: uniqueness.} The upper support \(\eta_{j}\) is uniquely pinned down by the weaker bidder's indifference condition. Given that support, the cdfs are uniquely pinned down by indifference of the two bidders over the common support. Hence the mixed equilibrium is unique. If \(\eta_A=\eta_B\), the same indifference conditions imply a common support \([0,\eta_A]\) and no atom at zero.

\textit{Step 5: rearrangement.} Substituting the identities $p_i^{H}/\Delta=m/\eta_i$ and $p_{j}^{H}/\Delta=m/\eta_{j}$ and simplifying delivers the expressions in the proposition.
\end{proof}

\begin{proof}[Proof of Lemma \ref{lem:favored treatment}]
Since the consumer always purchases a product,
\[
\Delta=p_A^H-p_A^L=p^H_A+p^H_B-1.
\]
\(A\) has a higher effective valuation if and only if \(p^H_A<p^H_B\leq 1\), and equilibrium commission revenue in this case is
\[
V(p^H_A,p^H_B)
=
m(p^H_A+p^H_B-1)\frac{p^H_A}{p^H_B}.
\]
Thus
\[
\frac{\partial V}{\partial p^H_A}
=
m\frac{2p^H_A+p^H_B-1}{p^H_B}
=
m\frac{p^H_A+\Delta}{p^H_B}>0\qquad\qquad \frac{\partial V}{\partial p^H_B}
=
m\frac{p^H_A(1-p^H_A)}{(p^H_B)^2}>0,
\]
where the last inequality is strict because $p_A^H<1$. The case where $B$ has higher effective valuation is symmetric.
\end{proof}
\end{appendices}
\end{document}